\renewenvironment{abstract}{
\hfill\begin{minipage}{0.95\textwidth}
\rule{\textwidth}{1pt}}
{\par\noindent\rule{\textwidth}{1pt}\end{minipage}}
\renewcommand\@maketitle{
\hfill
\begin{minipage}{0.95\textwidth}
\vskip 2em
\let\footnote\thanks 
{\LARGE \@title \par }
\vskip 1.5em
{\large \@author \par}
\end{minipage}
\vskip 1em \par
}
\newtheorem{theorem}{Theorem}
\newtheorem{lemma}{Lemma}
\title{\textbf{{\sc Asymptotic of Approximate Least Squares Estimators of 
Parameters Two-Dimensional Chirp Signal}}}
\author[$\dagger$]{Rhythm Grover}
\author[$\dagger$,$\ddagger$]{Debasis Kundu}
\author[$\dagger$]{Amit Mitra}
\affil[$\dagger$]{Department of Mathematics, Indian Institute of Technology Kanpur, Kanpur - 208016, India}
\affil[$\ddagger$]{Corresponding author. Email: kundu@iitk.ac.in}
\date{}
\begin{document}
\maketitle

\begin{abstract}
In this paper, we address the problem of parameter estimation of a 2-D chirp model under the assumption that the errors are stationary. We extend the 2-D periodogram method for the sinusoidal model, to find initial values to use in any iterative procedure to compute the least squares estimators (LSEs) of the unknown parameters, to the 2-D chirp model. Next we propose an estimator, known as the approximate least squares estimator (ALSE), that is obtained by maximising a periodogram-type function and is observed to be asymptotically equivalent to the LSE. Moreover the asymptotic properties of these estimators are obtained under slightly mild conditions than those required for the LSEs. For the multiple component 2-D chirp model, we propose a sequential method of estimation of the ALSEs, that significantly reduces the computational difficulty involved in reckoning the LSEs and the ALSEs.  We perform some simulation studies to see how the proposed method works and a data set has been analysed for illustrative purposes.
\end{abstract}

\noindent {\sc Key Words and Phrases:} Least squares estimators; chirp model, 
non-linear regression; asymptotic normal.

\section{Introduction}\label{sec:1}
\vspace{-5mm}
A two dimensional chirp signal model is expressed mathematically as follows:
\begin{equation}\begin{split}\label{eq:model_1}
y(m, n) & =  A^0\cos(\alpha^0 m + \beta^0 m^2 + \gamma^0 n + \delta^0 n^2) + B^0\sin(\alpha^0 m + \beta^0 m^2 + \gamma^0 n + \delta^0 n^2) \\
& \quad \quad  + X(m,n);    \quad \quad  m = 1, \cdots, M; \ n = 1, \cdots, N.   \\
\end{split}\end{equation}
\justify
\vspace{-5mm}
Here $y(m, n)$s are the signal observations, $A^0$, $B^0$ are real valued, non-zero \emph{amplitudes} and $\big\{\alpha^0, \gamma^0 \big\}$ and $\big\{\beta^0, \delta^0\big\}$ are the \emph{frequencies} and the \emph{frequency rates}, respectively. The random variables $\{X(m, n)\}$ is a sequence of stationary errors. The explicit assumptions on the error structure are provided in section~\ref{sec:2and3 }.

The above model has been considered in many areas of image processing, particularly in modeling gray images. Several estimation techniques for the unknown parameters of this model have been considered by different authors, for instance, Friedlander and Francos \cite{1996}, Francos and Friedlander \cite{1998}, \cite{1999}, Lahiri \cite{2013_1}, \cite{2013_2} and the references cited therein.

Our goal is to estimate the unknown parameters of the above model, primarily the non-linear parameters, the frequencies $\alpha^0$, $\gamma^0$ and the frequency rates $\beta^0$, $\delta^0$, under certain suitable assumptions. One of the straightforward and efficient ways to do so is to use the least squares estimation method. But since the least squares surface is highly non-linear and iterative methods must be employed for their computation, for these methods to work, we need good starting points for the unknown parameters.

One of the fundamental models in statistical signal processing literature, among the 2-D models, is the 2-D sinusoidal model. This model has different applications in many fields such as biomedical spectral analysis, geophysical perception etc. For references, see Barbieri and Barone \cite{1992}, Cabrera and Bose \cite{1993}, Hua \cite{1992_2} , Zhang and Mandrekar \cite{2001}, Prasad et al, \cite{2011_2}, Nandi et al, \cite{2010} and Kundu and Nandi \cite{2003}.
 
\noindent A 2-D sinusoidal model has the following mathematical expression:
\begin{flalign*}
&y(m, n) =  A^0\cos(m \lambda^0 + n \mu^0) + B^0\sin(m \lambda^0 + n \mu^0) + X(m,n)& \\
 &\quad \quad  \quad \quad  \quad \quad  \quad \quad \quad \quad  \quad \quad  \quad \quad  \quad \quad  \quad \quad  \quad \quad  \quad \quad  \quad \quad  m = 1, \cdots, M; \ n = 1, \cdots, N.& 
\end{flalign*}
For this model as well, the least squares surface is highly non-linear and thus we need good initial values, for any iterative procedure to work. One of the most prevalent methods to find the initial guesses for the 2-D sinusoidal model are the periodogram estimators. These are obtained by maximizing a 2-D periodogram function, which is defined as follows:
\begin{equation*}
I(\lambda,\mu) = \frac{2}{MN}\bigg|\sum_{m = 1}^{M}\sum_{n= 1}^{N} y(m,n)e^{-i(m \lambda + n \mu)}\bigg|^2
\end{equation*}
This periodogram function is maximized over 2-D Fourier frequencies, that is, at  \(\displaystyle\bigg(\frac{\pi k}{M},\frac{\pi j}{N}\bigg)\), for \(k = 1, \cdots, M-1\), and \(j = 1, \cdots, N-1\). The estimators that are obtained by maximising the above periodogram function with respect to $\lambda$ and $\mu$ simultaneously over the continuous space $(0,\pi) \times (0,\pi)$, are known as the approximate least squares estimators (ALSEs). Kundu and Nandi, \cite{2003} proved that the ALSEs are consistent and asymptotically equivalent to the least squares estimators (LSEs).

Analogously, we define a periodogram-type function for the 2-D chirp model defined in equation~(\ref{eq:model_1}), as follows:
\begin{equation}\label{eq:per}
 I(\alpha, \beta, \gamma, \delta) = \frac{2}{M N} \bigg|\sum_{m = 1}^{M}\sum_{n= 1}^{N} y(m,n)e^{-i( \alpha m + \beta m^2 + \gamma n + \delta n^2)}\bigg|^2. 
\end{equation}
To find the initial values, we propose to maximise the above function at the grid points \(\displaystyle\bigg(\frac{\pi k_1}{M}, \frac{\pi k_2}{M^2}, \frac{\pi j_1}{N}, \frac{\pi j_2}{N^2}\bigg)\), \(k_1 = 1, \cdots, M-1\), \(k_2 = 1, \cdots, M^2-1\), \(j_1 = 1, \cdots, N-1\), and \(j_2 = 1  \cdots, N^2-1\), corresponding to the Fourier frequencies of the 2-D sinusoidal model. These starting values can be used in any iterative procedure, to compute the LSEs and ALSEs.

Next we propose to estimate the unknown parameters of model~(\ref{eq:model_1}) by approximate least squares estimation method. In this method, we maximize the periodogram-like function $I(\alpha, \beta, \gamma, \delta)$ defined above, with respect to $\alpha$, $\beta$, $\gamma$ and $\delta$ simultaneously, over $(0,\pi) \times (0,\pi) \times (0, \pi) \times (0,\pi)$. The details on the methodology to obtain the ALSEs are further explained in section~\ref{sec:4}. We prove that these estimators are strongly consistent and asymptotically normally distributed under the assumptions, that are slightly mild than those required for the LSEs. Also, the convergence rates of the ALSEs are same as those of the LSEs. 

The rest of the paper is organized as follows. In the next section we state the model assumptions, some notations and some preliminary results required. In section~\ref{sec:4}, we give a brief description of the methodology. In section~\ref{sec:5}, we study the asymptotic properties of one component 2-D chirp model and in section~\ref{sec:6}, we propose a sequential method to obtain the LSEs and ALSEs for the multicomponent 2-D chirp model and study their asymptotic properties. Numerical experiments and  a simulated data analysis are illustrated in sections~\ref{sec:7} and~\ref{sec:8}. In section~\ref{sec:9}, we conclude the paper. All the proofs are provided in the appendices.

\section{Model Assumptions, Notations and Preliminary Results}\label{sec:2and3 }
\vspace{-5mm}
\underline{\textbf{Assumption 1.}}\label{assump:1} The error $X(m, n)$ is stationary with the following form:
\begin{equation*}\label{eq:error}
X(m, n) = \sum_{j = -\infty}^{\infty}\sum_{k = -\infty}^{\infty} a(j, k)\epsilon(m -j, n - k),
\end{equation*}
where $\{\epsilon(m, n)\}$ is a double array sequence of independently and identically distributed (i.i.d.) random variables with mean zero, variance $\sigma^2$ and finite fourth moment, and $a(j, k)$s are real constants such that 
\begin{equation*}\label{eq:constants_err}
 \sum_{j = -\infty}^{\infty}\sum_{k = -\infty}^{\infty}|a(j, k)| < \infty.
\end{equation*} 
\justify
We will use the following notation: $\boldsymbol{\theta} = (A, B, \alpha, \beta, \gamma,\delta)$, the parameter vector, $\boldsymbol{\theta}^0 = (A^0, B^0, \alpha^0, \beta^0, \gamma^0,\delta^0)$, the true parameter vector, $\Theta = (-\infty, \infty) \times  (-\infty, \infty) \times (0, \pi) \times (0, \pi) \times (0, \pi) \times (0, \pi)$, the parameter space.
Also, $\boldsymbol{\vartheta} = (\alpha, \beta, \gamma,\delta)$, a vector of the non-linear parameters.
\vspace{-5mm} 
\justify
\underline{\textbf{Assumption 2.}}\label{assump:2} The true parameter vector $\boldsymbol{\theta}^0$ is an interior point of $\Theta$. \\ \\
Note that the assumptions required to prove strong consistency of the LSEs of the unknown parameters in this case are slightly different from those required to prove the consistency of ALSEs. For the LSEs the parametric space for the linear parameters has to be bounded, though here we do not require that bound. For details on the assumptions for the consistency of the LSEs, see 
Lahiri \cite{2013_1}. \\ \\ 
We need the following results to proceed further:
\begin{lemma}\label{lemma:1}
If $(\omega_1, \omega_2, \psi_1, \psi_2) \in (0, \pi) \times (0, \pi) \times (0, \pi) \times (0, \pi)$, then except for a countable number of points, and for s, t = 0, 1, $\cdots$, the following are true:
\vspace{-4mm}
\begin{enumerate}[label=(\alph*)]
\item \hspace{-4mm} $\lim\limits_{\textnormal{min}{\{M, N\} \rightarrow} \infty} \frac{1}{M N}  \sum\limits_{n= 1}^{N}\sum\limits_{m = 1}^{M} \cos(\omega m^2 + \psi n^2) = \hspace{-4mm}\lim\limits_{\textnormal{min}{\{M, N\} \rightarrow} \infty} \frac{1}{M N}  \sum\limits_{n= 1}^{N}\sum\limits_{m = 1}^{M} \sin(\omega m^2 + \psi n^2) = 0$ \\
\item \hspace{-6mm} $\lim\limits_{\textnormal{min}{\{M, N\} \rightarrow} \infty} \frac{1}{M N}  \sum\limits_{n=1}^{N}\sum\limits_{m = 1}^{M} \cos^2(\omega m^2 + \psi n^2) = \hspace{-4mm}\lim\limits_{\textnormal{min}{\{M, N\} \rightarrow} \infty} \frac{1}{M N}  \sum\limits_{n= 1}^{N}\sum\limits_{m = 1}^{M} \sin^2(\omega m^2 + \psi n^2) = \frac{1}{2}$ \\
\item $\lim\limits_{\textnormal{min}{\{M, N\} \rightarrow} \infty} \frac{1}{M N}  \sum\limits_{n= 1}^{N}\sum\limits_{m = 1}^{M} \cos(\omega_1 m + \omega_2 m^2   + \psi_1 n + \psi_2 n^2) =  0,$ \\
\item $\lim\limits_{\textnormal{min}{\{M, N\} \rightarrow} \infty} \frac{1}{M N}  \sum\limits_{n= 1}^{N}\sum\limits_{m = 1}^{M} \sin(\omega_1 m + \omega_2 m^2   + \psi_1 n + \psi_2 n^2) = 0,$ \\
\item $\lim\limits_{\textnormal{min}{\{M, N\} \rightarrow} \infty} \frac{1}{M^{(s+1)} N^{(t+1)}}  \sum\limits_{n= 1}^{N}\sum\limits_{m = 1}^{M} m^s n^t\cos^2(\omega_1 m + \omega_2 m^2   + \psi_1 n + \psi_2 n^2) =  \frac{1}{2(s+1)(t+1)},$ \\
\item $\lim\limits_{\textnormal{min}{\{M, N\} \rightarrow} \infty} \frac{1}{M^{(s+1)} N^{(t+1)}}  \sum\limits_{n= 1}^{N}\sum\limits_{m = 1}^{M} m^s n^t\sin^2(\omega_1 m + \omega_2 m^2   + \psi_1 n + \psi_2 n^2) =  \frac{1}{2(s+1)(t+1)},$  \\
\item $\lim\limits_{\textnormal{min}{\{M, N\} \rightarrow} \infty} \sup\limits_{\alpha, \beta, \gamma, \delta} |\frac{1}{M^{(s+1)} N^{(t+1)}} \sum\limits_{n= 1}^{N}\sum\limits_{m = 1}^{M} m^s n^t X(m, n)e^{i(\alpha m + \beta m^2 + \gamma n + \delta n^2)}|$ $\rightarrow$ $0$ $a.s. $
\end{enumerate}
\end{lemma}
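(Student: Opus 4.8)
The plan is to reduce every statement to one-dimensional exponential-sum estimates by exploiting the additive separability of the chirp phase. Writing $\cos(\omega_1 m + \omega_2 m^2 + \psi_1 n + \psi_2 n^2)$ and its sine counterpart as the real and imaginary parts of $e^{i(\omega_1 m + \omega_2 m^2)}e^{i(\psi_1 n + \psi_2 n^2)}$, the double sums in (a)--(f) factor into a product of a sum over $m$ and a sum over $n$. The single building block I would rely on is the classical Weyl estimate: for all but a countable set of the relevant frequency (those for which the quadratic coefficient is a rational multiple of $\pi$), $\frac{1}{M}\sum_{m=1}^{M} e^{i(\omega_1 m + \omega_2 m^2)}\to 0$, together with the elementary power-sum limit $\frac{1}{M^{s+1}}\sum_{m=1}^{M} m^s \to \frac{1}{s+1}$.

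Given these, parts (c) and (d) are immediate, since the factored product is a product of two quantities each tending to $0$ and taking real and imaginary parts finishes the argument; parts (a) and (b) are the same argument specialized to pure quadratic phases. For (e) and (f) I would first linearize the square via $\cos^2\theta=\tfrac12(1+\cos 2\theta)$ and $\sin^2\theta=\tfrac12(1-\cos 2\theta)$. The constant half, weighted by $m^s n^t$ and normalized by $M^{s+1}N^{t+1}$, converges to $\tfrac12\cdot\tfrac1{s+1}\cdot\tfrac1{t+1}$ by the power-sum limit, producing the claimed constant $\frac{1}{2(s+1)(t+1)}$; the oscillatory remainder $\tfrac12\cos(2\omega_1 m + 2\omega_2 m^2 + 2\psi_1 n + 2\psi_2 n^2)$, weighted by $m^s n^t$, must vanish. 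To handle this weighted oscillatory sum I would upgrade the Weyl estimate to its weighted form $\frac{1}{M^{s+1}}\sum_{m=1}^{M} m^s e^{i(\omega_1 m + \omega_2 m^2)}\to 0$, which follows from the unweighted version by Abel summation, the doubling of the frequencies only enlarging the exceptional set to another countable set.

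The genuinely hard part is (g), which asks for convergence that is both almost sure and uniform over the entire non-linear parameter block $(\alpha,\beta,\gamma,\delta)$. My plan has two stages. First, I would strip away the colored-noise structure: substituting $X(m,n)=\sum_{j,k}a(j,k)\epsilon(m-j,n-k)$, interchanging the absolutely summable sums, and shifting indices reduces the problem to the analogous statement for the i.i.d. innovations $\{\epsilon(m,n)\}$, with the summability $\sum_{j,k}|a(j,k)|<\infty$ controlling the shift-induced error uniformly in the parameters. Second, for the i.i.d. sum I would establish convergence on a fine grid in $[0,\pi]^4$ and then control the oscillation between grid points. At a fixed parameter value, the finite fourth-moment assumption lets me bound the fourth moment of the normalized sum and invoke Borel--Cantelli for almost sure convergence to $0$; a union bound over a grid of polynomially many points preserves this. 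The inter-grid oscillation is handled by a Lipschitz bound on the phase, whose gradient in $(\alpha,\beta,\gamma,\delta)$ is bounded by the weights $m,m^2,n,n^2$, so a grid spacing shrinking like a suitable negative power of $\min\{M,N\}$ renders the oscillation negligible.

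The main obstacle is precisely this balancing act in (g): matching the fourth-moment Borel--Cantelli bound at the grid points against the number of grid points and the Lipschitz modulus, while simultaneously carrying the weights $m^s n^t$ and letting $M$ and $N$ tend to infinity jointly through $\min\{M,N\}\to\infty$.
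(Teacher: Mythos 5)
Your treatment of parts (a)--(f) is sound and is essentially the standard route: the paper itself gives no proof of this lemma (it defers entirely to Lahiri's thesis), but its own Appendix D proof of Lemma 3 uses precisely your device --- factor the 2-D exponential sum $\sum_{m,n} m^s n^t e^{i(\omega_1 m+\omega_2 m^2)}e^{i(\psi_1 n+\psi_2 n^2)}$ into a product of two 1-D sums and invoke the 1-D quadratic Weyl-type estimate (the paper's Lemma 2), with $\cos^2$ and $\sin^2$ linearized so that the constant term yields $\frac{1}{2(s+1)(t+1)}$ via the power-sum limit. Your Abel-summation upgrade is not even needed in full, since Lemma 2 is already stated with the weights $t^k$ (and with an extra $\sqrt{n}$ saving); the doubling of frequencies and the exceptional sets being countable are handled exactly as you say. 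The reduction of (g) to i.i.d. innovations through $\sum_{j,k}|a(j,k)|<\infty$ is also a legitimate and standard first step.

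The genuine gap is the second stage of (g), and it is exactly the step you yourself flag as the ``balancing act'': it does not balance. With only $E\,\epsilon^4<\infty$, the fourth-moment bound at a fixed parameter point gives $E|S_{MN}|^4=O((MN)^{-2})$, hence $P(|S_{MN}|>\varepsilon)=O(\varepsilon^{-4}(MN)^{-2})$ --- summable over the double array, so pointwise a.s. convergence is fine. But the Lipschitz control of the phase forces grid spacings of order $M^{-1},M^{-2},N^{-1},N^{-2}$ in $\alpha,\beta,\gamma,\delta$ respectively, i.e.\ on the order of $M^3N^3$ grid points, and the union bound then produces $M^3N^3\cdot(MN)^{-2}=MN\to\infty$; no subsequence or monotonicity trick rescues a bound that grows polynomially, so ``a union bound over a grid of polynomially many points preserves this'' is precisely the assertion that fails. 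The missing idea is a tail bound at each grid point that decays faster than every polynomial, which the fourth moment alone cannot supply through Chebyshev. The standard repair is truncation plus an exponential inequality: split $\epsilon(m,n)$ at a polynomial threshold such as $(mn)^{1/4}$ (up to logarithmic factors), use the finite fourth moment and Borel--Cantelli to show the untruncated part and the induced mean-correction are a.s. negligible uniformly in the parameters, and apply a Hoeffding/Bernstein bound to the bounded, centered part, whose tails of type $\exp(-c\,\varepsilon^2 (MN)^{1/2})$ beat any polynomial number of grid points; chaining then closes as you intended. Without some such exponential mechanism (or a strengthened moment hypothesis well beyond the fourth), your proposed fourth-moment/Borel--Cantelli/chaining argument for (g) cannot be completed.
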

\begin{proof}
Refer to Lahiri \cite{2013_1} \\
\end{proof}
\begin{lemma}\label{lemma:4}
If $(\omega , \psi)$ $\in$ $(0,\pi) \times (0,\pi)$, then except for a countable number of points, the following holds true:
$$\lim\limits_{n \rightarrow \infty} \frac{1}{n^k\sqrt{n}} \sum_{t=1}^{n} t^k \cos(\omega t + \psi t^2) = \lim\limits_{n \rightarrow \infty} \frac{1}{n^k\sqrt{n}} \sum_{t=1}^{n} t^k \sin(\omega t + \psi t^2) = 0;\ k = 0, 1, 2, \cdots $$
\end{lemma}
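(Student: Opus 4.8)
The plan is to prove the cosine and sine statements simultaneously by complexifying. Setting
\begin{equation*}
S_n(\omega,\psi) = \sum_{t=1}^{n} t^k \, e^{i(\omega t + \psi t^2)},
\end{equation*}
the two sums in the lemma are $\mathrm{Re}\,S_n$ and $\mathrm{Im}\,S_n$, so it suffices to establish $S_n = o(n^{k+1/2})$ for all but countably many $(\omega,\psi)$ and then read off real and imaginary parts. All the difficulty is thereby concentrated in a one-dimensional quadratic exponential (chirp) sum, and the polynomial weight $t^k$ can be stripped off first.

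To remove the weight I would use Abel summation. Writing $E_m = \sum_{t=1}^{m} e^{i(\omega t + \psi t^2)}$ for the unweighted partial sums,
\begin{equation*}
\sum_{t=1}^{n} t^k e^{i(\omega t + \psi t^2)} = n^k E_n - \sum_{m=1}^{n-1}\big((m+1)^k - m^k\big) E_m .
\end{equation*}
Since $(m+1)^k - m^k = O(m^{k-1})$, a bound of the form $E_m = o(\sqrt{m})$ propagates: the boundary term is $o(n^{k+1/2})$, and the sum is controlled by $\sum_{m\le n} m^{k-1}\sqrt{m} \asymp n^{k+1/2}$, the $o$ passing through by a Toeplitz/Cesàro argument for weighted averages of a null sequence. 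This reduces the whole lemma to the base case $k=0$, namely $E_m = o(\sqrt{m})$.

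For the base case I would begin from the Weyl differencing (second moment) identity: writing $h = t-s$ and $t^2 - s^2 = h(2s+h)$,
\begin{equation*}
|E_m|^2 = \sum_{|h|<m} e^{i(\omega h + \psi h^2)} \sum_{s} e^{2 i \psi h s},
\end{equation*}
where each inner sum is geometric. The diagonal $h=0$ contributes exactly $m$, while for $h\neq 0$ the inner sum has modulus at most $\min(m-|h|,\,|\sin(\psi h)|^{-1})$, provided $\psi h \notin \pi\mathbb{Z}$. The countable exceptional set is precisely where these geometric sums fail to decay, i.e. where $\psi/\pi$ is rational (with a further countable contribution from the linear phase $\omega$); off this set Lemma~\ref{lemma:1} and equidistribution govern the off-diagonal averages.

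The hard part will be upgrading the routine mean-level estimate to the claimed $\sqrt{n}$ scale. The diagonal alone already gives $|E_m|^2 \ge m$, so $|E_m|^2/m$ is of order one, and no purely $L^2$/equidistribution argument can by itself produce cancellation at the $o(\sqrt{m})$ level. In particular the mean-normalised facts available from Lemma~\ref{lemma:1} only yield the weaker $o(n^{k+1})$-type statement; extracting the extra half power is exactly the regime of the classical Hardy--Littlewood/van der Corput estimates for quadratic Weyl sums (via the theta-sum functional equation). Making such a bound hold, while correctly delineating the countable resonant set of $(\omega,\psi)$ to be excluded, is where the genuine work lies, and it is the point I would scrutinise most carefully before accepting the stated $n^{-(k+1/2)}$ normalisation.
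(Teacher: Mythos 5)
You have organized the easy part of this lemma correctly, but the proposal has a genuine gap at exactly the point you flag yourself, and the gap is not fillable. The reductions are fine: complexification, the Abel-summation identity $\sum_{t=1}^n t^k a_t = n^k E_n - \sum_{m=1}^{n-1}\big((m+1)^k-m^k\big)E_m$, and the Toeplitz argument legitimately reduce everything to the base case $E_m=o(\sqrt m)$, and the Weyl-differencing identity is set up correctly. But nothing after that constitutes a proof, and one intermediate claim is wrong: the diagonal does \emph{not} give $|E_m|^2\ge m$, since the off-diagonal terms are complex and can cancel it. Indeed, for $\psi\in\pi\mathbb{Q}$ the summand satisfies $a_{t+T}=\pm e^{iT\omega}a_t$ for a suitable period $T$, so for every rational $\psi/\pi$ and every $\omega$ outside a countable resonant set the sums $E_m$ are \emph{bounded}. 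The correct form of your obstruction is the exact identity $\frac{1}{2\pi}\int_0^{2\pi}|E_m(\omega,\psi)|^2\,d\omega=m$ for each fixed $\psi$, which shows that no second-moment input (in particular nothing at the level of Lemma~\ref{lemma:1}) can ever produce the extra half power. Note also that there is no argument in the paper to compare yours against: the paper's entire ``proof'' of this lemma is the citation to Lahiri's thesis, so the statement is imported, not proved.

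Your closing scepticism about the $n^{k+1/2}$ normalisation is the mathematically correct endpoint, because the base case is false off a set far larger than countable. For $2\psi/\pi$ badly approximable --- an uncountable set of $\psi$ --- the classical Hardy--Littlewood $O(\sqrt L)$ bound for quadratic sums with such frequencies gives, after grouping the fourth moment by $t_1+t_2=s$, the estimate $\frac{1}{2\pi}\int_0^{2\pi}|E_m(\omega,\psi)|^4\,d\omega\le C_\psi\, m^2$; combined with the second-moment identity above, Paley--Zygmund yields a set of $\omega$ of measure bounded below uniformly in $m$ on which $|E_m|\ge\sqrt{m/2}$, whence $\limsup_m |E_m|/\sqrt m>0$ on an uncountable set of pairs $(\omega,\psi)$. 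This agrees with the classical theta-sum picture: quadratic Weyl sums are $\Omega(\sqrt N)$ for every irrational frequency (Hardy and Littlewood; Fiedler, Jurkat and K\"orner, 1977), and $m^{-1/2}E_m$ has a nondegenerate limiting distribution for random $(\omega,\psi)$ (Jurkat and van Horne; Marklof), so the set where the lemma's limit actually holds is Lebesgue-null --- it contains the rational lines described above but excludes almost every $(\omega,\psi)$, the opposite of a countable exceptional set. So the verdict is: your outline proves everything except the lemma, and the lemma itself, as stated, cannot be proved; any repair (e.g.\ normalising by $n^{k+1/2}\log n$, or $n^{k+1/2+\epsilon}$ almost everywhere) weakens the statement and would propagate into Lemma~\ref{lemma:5} and the asymptotic-equivalence results that depend on this exact scale.
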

\begin{proof}
Refer to Lahiri \cite{2013_1}.\\
\end{proof}
\begin{lemma}\label{lemma:5}
If $(\omega_1, \omega_2, \omega_3, \omega_4) \in (0, \pi) \times (0, \pi) \times (0, \pi) \times (0, \pi)$ and $(\psi_1, \psi_2, \psi_3, \psi_4) \in (0, \pi) \times (0, \pi) \times (0, \pi) \times (0, \pi)$, then except for a countable number of points, and for s, t = 0, 1, $\cdots$, the following are true:
\begin{enumerate}[label=(\alph*)]
\item
\begingroup
\allowdisplaybreaks
\parbox[t]{\textwidth}{
   \vspace{-3em} \begin{align*}
& \lim\limits_{\textnormal{min}{\{M, N\} \rightarrow} \infty} \frac{1}{M^s N^t\sqrt{M N}}  \sum\limits_{n= 1}^{N}\sum\limits_{m = 1}^{M} m^s n^t \cos(\omega_1 m + \omega_2 m^2   + \omega_3 n + \omega_4 n^2)\times \\ 
& \qquad \qquad \qquad \qquad \qquad \qquad \qquad \qquad \qquad  \cos(\psi_1 m + \psi_2 m^2   + \psi_3 n + \psi_4 n^2) =  0,
\end{align*}}%
\endgroup
\item
\begingroup
\allowdisplaybreaks
\parbox[t]{\textwidth}{
   \vspace{-3em} \begin{align*}
&\lim\limits_{\textnormal{min}{\{M, N\} \rightarrow} \infty} \frac{1}{M^s N^t\sqrt{M N}}  \sum\limits_{n= 1}^{N}\sum\limits_{m = 1}^{M} m^s n^t \sin(\omega_1 m + \omega_2 m^2 + \omega_3 n + \omega_4 n^2) \times \\
&\qquad \qquad \qquad \qquad \qquad \qquad \qquad \qquad \qquad \sin(\psi_1 m + \psi_2 m^2 + \psi_3 n + \psi_4 n^2) = 0,
\end{align*}}%
\endgroup
\item
\begingroup
\allowdisplaybreaks
\parbox[t]{\textwidth}{
   \vspace{-3em} \begin{align*}
&\lim\limits_{\textnormal{min}{\{M, N\} \rightarrow} \infty} \frac{1}{M^s N^t\sqrt{M N}}  \sum\limits_{n= 1}^{N}\sum\limits_{m = 1}^{M} m^s n^t \sin(\omega_1 m + \omega_2 m^2 + \omega_3 n + \omega_4 n^2)\times \\
&\qquad \qquad \qquad \qquad \qquad \qquad \qquad \qquad \qquad \cos(\psi_1 m + \psi_2 m^2 + \psi_3 n + \psi_4 n^2) = 0.
\end{align*}}%
\endgroup
\end{enumerate}
\end{lemma}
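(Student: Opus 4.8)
The plan is to reduce each product of two chirp-type functions to a single chirp by means of the elementary product-to-sum identities
\[
\cos A\cos B=\tfrac12\big[\cos(A-B)+\cos(A+B)\big],\qquad \sin A\sin B=\tfrac12\big[\cos(A-B)-\cos(A+B)\big],
\]
\[
\sin A\cos B=\tfrac12\big[\sin(A+B)+\sin(A-B)\big],
\]
applied with $A=\omega_1 m+\omega_2 m^2+\omega_3 n+\omega_4 n^2$ and $B=\psi_1 m+\psi_2 m^2+\psi_3 n+\psi_4 n^2$. Each resulting argument $A\pm B=(\omega_1\pm\psi_1)m+(\omega_2\pm\psi_2)m^2+(\omega_3\pm\psi_3)n+(\omega_4\pm\psi_4)n^2$ is again a two-dimensional chirp phase, and, crucially, it splits additively as $P(m)+Q(n)$ with $P(m)=(\omega_1\pm\psi_1)m+(\omega_2\pm\psi_2)m^2$ and $Q(n)=(\omega_3\pm\psi_3)n+(\omega_4\pm\psi_4)n^2$. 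Hence it suffices to treat a single term of the form $\frac{1}{M^sN^t\sqrt{MN}}\sum_{m,n}m^sn^t\cos(P(m)+Q(n))$ together with its sine analogue.

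Next I would exploit this additive separation. Expanding $\cos(P(m)+Q(n))=\cos P(m)\cos Q(n)-\sin P(m)\sin Q(n)$ makes the double sum factor into a product of two one-dimensional sums, and after distributing the normalisation $M^sN^t\sqrt{MN}=M^sN^t\sqrt{M}\sqrt{N}$ one obtains
\[
\Big(\tfrac{1}{M^s\sqrt M}\sum_{m=1}^{M}m^s\cos P(m)\Big)\Big(\tfrac{1}{N^t\sqrt N}\sum_{n=1}^{N}n^t\cos Q(n)\Big)-\Big(\tfrac{1}{M^s\sqrt M}\sum_{m=1}^{M}m^s\sin P(m)\Big)\Big(\tfrac{1}{N^t\sqrt N}\sum_{n=1}^{N}n^t\sin Q(n)\Big).
\]
Each bracketed factor is now exactly of the shape covered by Lemma~\ref{lemma:4}, whose normalisation $n^k\sqrt n$ matches the $\sqrt{MN}$ here. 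Whenever $P$ and $Q$ are genuine chirps (nonzero quadratic coefficient, after reducing the combined frequencies modulo $2\pi$ and discarding the exceptional countable set), Lemma~\ref{lemma:4} drives each factor to $0$, so the whole expression tends to $0$; part (c) is handled identically via $\sin(P+Q)=\sin P\cos Q+\cos P\sin Q$.

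The delicate point, and what I expect to be the main obstacle, is the bookkeeping of the combined frequencies, in particular for the difference argument $A-B$. For $A+B$ the combined quadratic coefficients $\omega_2+\psi_2$ and $\omega_4+\psi_4$ are strictly positive, so $P$ and $Q$ are honest chirps and Lemma~\ref{lemma:4} applies directly. For $A-B$, however, the coefficients may collapse. If only the quadratic coefficient vanishes, say $\omega_2=\psi_2$ but $\omega_1\ne\psi_1$, then $P$ degenerates to a pure sinusoid and a summation-by-parts (Abel) estimate gives $\sum_{m}m^s\cos P(m)=O(M^s)=o(M^s\sqrt M)$, so that factor still vanishes; thus both the chirp case and the sinusoid case deliver a factor tending to $0$. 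The genuinely bad case is a full collapse $P\equiv 0$ (that is $(\omega_1,\omega_2)=(\psi_1,\psi_2)$) or $Q\equiv0$, where the factor behaves like $M^{s+1}/(M^s\sqrt M)=\sqrt M\to\infty$; this is precisely the degenerate coincidence that must be placed in the excluded set of the hypothesis. Checking that, outside this excluded set, at least one of the two interpretations — chirp cancellation via Lemma~\ref{lemma:4}, or sinusoidal boundedness via Abel summation — applies to \emph{each} factor is the crux; once that case split is settled, the product of two vanishing factors closes each of the three parts (a)--(c).
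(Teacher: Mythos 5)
Your proposal is correct and follows essentially the same route as the paper: product-to-sum identities reduce each term to chirps with argument $A\pm B$, the double sum factors into a product of two one-dimensional sums, and Lemma~4 drives each factor to $o(M^s\sqrt{M})\cdot o(N^t\sqrt{N})$. If anything, you are more careful than the paper, which writes out only the sum-frequency exponential sum and leaves the difference-frequency degeneracies (partial collapse handled by Abel summation, full coincidence relegated to the excluded exceptional set, where Lemma~1(e)--(f) takes over in applications) entirely implicit.
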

\begin{proof}
See~\nameref{appendix:D}.\\ 
\end{proof}
\vspace{-10mm}
\section{Method to obtain ALSEs}\label{sec:4}
\vspace{-4mm}
Consider the periodogram-like function defined in~(\ref{eq:per}). In matrix notation, it can be written as:
\vspace{-2mm}
\begin{equation*}\label{eq:per_matrix}
I(\boldsymbol{\vartheta}) =  \frac{2}{M N} Y^{T} W(\boldsymbol{\vartheta})W(\boldsymbol{\vartheta})^{T}Y.
\end{equation*}
\vspace{-12mm}
\justify
Here, $Y_{M N \times 1} =  \left[\begin{array}{ccccccc}y(1, 1) & \cdots & y(M, 1) & \cdots &  y(1, N) & \cdots & y(M, N)\end{array}\right]^{T}$ is the observed data vector, and
\begin{equation*}
W(\boldsymbol{\vartheta})_{M N \times 2} = \left[\begin{array}{cc}\cos(\alpha + \beta + \gamma + \delta) & \sin(\alpha + \beta + \gamma + \delta) \\ \cos(2\alpha + 4\beta + \gamma + \delta) & \sin(2\alpha + 4\beta + \gamma + \delta) \\ \vdots & \vdots \\ \cos(M \alpha + M^2 \beta + \gamma + \delta) &  \sin(M \alpha + M^2 \beta + \gamma + \delta) \\ \vdots & \vdots \\  \cos(\alpha + \beta + N \gamma + N^2 \delta) & \sin(\alpha + \beta + N \gamma + N^2 \delta) \\ \cos(2\alpha + 4\beta + N \gamma + N^2 \delta) & \sin(2\alpha + 4\beta + N \gamma + N^2 \delta)\\\vdots & \vdots \\ \cos(M \alpha + M^2 \beta + N \gamma + N^2 \delta) & \sin(M \alpha + M^2 \beta + N \gamma + N^2 \delta)\end{array}\right]
\end{equation*}
In matrix notation, equation~(\ref{eq:model_1}), can be written as:
\begin{equation*}\label{eq:model_mat}
Y = W(\boldsymbol{\vartheta})\boldsymbol{\phi} + X,
\end{equation*}
\vspace{-16mm}
\justify
where $X_{M N \times 1} = \left[\begin{array}{ccccccc}X(1, 1) & \cdots & X(M, 1) & \cdots & X(1, N) & \cdots & X(M, N)\end{array}\right]^{T}$
is the error vector, and $\boldsymbol{\phi} = \left[\begin{array}{cc}A & B\end{array}\right]^{T}$.
The estimators obtained by maximising the function $I(\boldsymbol{\vartheta})$ are known as the approximate least squares estimators (ALSEs). 
We will show that the estimators obtained by maximising $I(\boldsymbol{\vartheta})$ are asymptotically equivalent to the estimators obtained by minimising the error sum of squares function, that is the LSEs, and hence the former are termed as the ALSEs. To do so, we require the following lemma:
\begin{lemma}\label{lemma:6}
For $\boldsymbol{\vartheta} \in (0,\pi) \times (0, \pi) \times (0,\pi) \times(0,\pi)$, except for a countable number of points, we have the following result: 
\begin{equation*} 
\frac{1}{M N} W(\boldsymbol{\vartheta})^{T}W(\boldsymbol{\vartheta}) \rightarrow
\left[\begin{array}{cc} 1/2 & 0 \\ 0 & 1/2 \end{array}\right].
\end{equation*}
\end{lemma}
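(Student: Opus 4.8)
The plan is to compute the limit entrywise. Since $\frac{1}{MN}W(\boldsymbol{\vartheta})^{T}W(\boldsymbol{\vartheta})$ is a $2\times 2$ matrix whose entries are the normalised inner products of the two columns of $W(\boldsymbol{\vartheta})$, I would write $g(m,n) = \alpha m + \beta m^2 + \gamma n + \delta n^2$ and observe that the three distinct entries are
\[
\frac{1}{MN}\sum_{m=1}^{M}\sum_{n=1}^{N}\cos^2 g(m,n),\qquad
\frac{1}{MN}\sum_{m=1}^{M}\sum_{n=1}^{N}\sin^2 g(m,n),
\]
together with the off-diagonal $\frac{1}{MN}\sum_{m,n}\cos g(m,n)\sin g(m,n)$. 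It therefore suffices to show that the first two converge to $1/2$ and the third to $0$.

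For the two diagonal entries I would invoke Lemma~\ref{lemma:1} directly. Taking $s=t=0$ and $(\omega_1,\omega_2,\psi_1,\psi_2)=(\alpha,\beta,\gamma,\delta)$ in part~(e), the normalising constant $M^{(s+1)}N^{(t+1)}$ reduces to $MN$, and the limit is $\frac{1}{2(s+1)(t+1)}=\frac12$; part~(f) gives the identical statement for $\sin^2$. This step is immediate and requires no further computation.

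The off-diagonal entry is where the only manipulation enters. Using $\cos\theta\sin\theta=\tfrac12\sin(2\theta)$, I would rewrite
\[
\frac{1}{MN}\sum_{m=1}^{M}\sum_{n=1}^{N}\cos g(m,n)\sin g(m,n)
= \frac{1}{2MN}\sum_{m=1}^{M}\sum_{n=1}^{N}\sin\bigl(2\alpha\,m+2\beta\,m^2+2\gamma\,n+2\delta\,n^2\bigr),
\]
and then apply Lemma~\ref{lemma:1}(d) with the frequencies $(2\alpha,2\beta,2\gamma,2\delta)$ in place of $(\omega_1,\omega_2,\psi_1,\psi_2)$ to conclude that the right-hand side tends to $0$. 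Assembling the three entries then yields the claimed limit $\mathrm{diag}(1/2,1/2)$.

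The one point needing care — and the main, though mild, obstacle — is that doubling the parameters carries them out of $(0,\pi)$ into $(0,2\pi)$, so Lemma~\ref{lemma:1}(d) cannot be quoted verbatim. The resolution is that the conclusion of that part is a Weyl-type equidistribution statement valid for all frequency values outside a countable (indeed measure-zero) exceptional set, and is not intrinsically tied to the interval $(0,\pi)$; the degenerate configurations (for instance, where a doubled frequency reduces to $0$ or $\pi$ modulo $2\pi$) are exactly the finitely-many-per-parameter cases that the qualifier \emph{except for a countable number of points} is meant to absorb. After enlarging the exceptional set to include these degeneracies, the off-diagonal limit is $0$, completing the argument.
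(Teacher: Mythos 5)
Your proof is correct and takes essentially the same route as the paper: entrywise evaluation of $\frac{1}{MN}W(\boldsymbol{\vartheta})^{T}W(\boldsymbol{\vartheta})$, with Lemma~\ref{lemma:1} parts \textit{(e)} and \textit{(f)} at $s=t=0$ for the diagonal entries and the identity $\cos\theta\sin\theta=\tfrac{1}{2}\sin(2\theta)$ reducing the off-diagonal entry to a Lemma~\ref{lemma:1}-type sine sum, where the paper simply cites parts \textit{(c)}, \textit{(e)} and \textit{(f)} and asserts the limit. Your additional observation --- that doubling the frequencies pushes them out of $(0,\pi)$, so the equidistribution statement must be invoked with an enlarged (still countable) exceptional set --- is a point the paper glosses over entirely, and you resolve it correctly.
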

\begin{proof}
Consider the following:
\begin{equation*}
\frac{1}{M N} W(\boldsymbol{\vartheta})^{T}W(\boldsymbol{\vartheta}) = \begin{bmatrix}
\Omega_{11} & \Omega_{12}\\
 \Omega_{21} & \Omega_{22}
\end{bmatrix},
\end{equation*}
where,
\begin{flalign*}
&\Omega_{11} = \sum\limits_{n=1}^{N}\sum\limits_{m=1}^{M}\cos^2(\alpha m + \beta m^2 + \gamma n + \delta n^2),&\\
&\Omega_{12} = \sum\limits_{n=1}^{N}\sum\limits_{m=1}^{M}\cos(\alpha m + \beta m^2 + \gamma n + \delta n^2)\sin(\alpha m + \beta m^2 + \gamma n + \delta n^2),& \\
&\Omega_{21} = \sum\limits_{n=1}^{N}\sum\limits_{m=1}^{M}\cos(\alpha m + \beta m^2 + \gamma n + \delta n^2)\sin(\alpha m + \beta m^2 + \gamma n + \delta n^2),& \\
& \Omega_{22} = \sum\limits_{n=1}^{N}\sum\limits_{m=1}^{M}\sin^2(\alpha m + \beta m^2 + \gamma n + \delta n^2).
\end{flalign*}
Now using Lemma$~\ref{lemma:1}$ \textit{(c)}, \textit{(e)} and \textit{(f)}, it can be easily seen that the matrix on the right hand side of the above equation tends to $\begin{bmatrix}
1/2 & 0 \\
0 & 1/2
\end{bmatrix}$, except for a countable number of points and hence the result.\\
\end{proof}
\justify
\vspace{-5mm}
We know that to find the LSEs, we minimise the following error sum of squares: 
\begin{equation}\label{eq:ess}
Q(\boldsymbol{\theta}) = (Y - W(\boldsymbol{\vartheta}) \boldsymbol{\phi})^{T}(Y - W(\boldsymbol{\vartheta}) \boldsymbol{\phi})
\end{equation}
with respect to $\boldsymbol{\theta}$. If we fix $\boldsymbol{\vartheta}$, then the estimates of the linear parameters can be obtained by separable regression technique of Richards \cite{1961} by minimizing $Q(\boldsymbol{\theta})$ with respect to $A$ and $B$.
Thus the estimate of $\boldsymbol{\phi}^0 =  \left[\begin{array}{cc}A^0 & B^0\end{array}\right]^{T}$ is given by: \\
\begin{equation}\label{eq:lin_est}
\hat{\boldsymbol{\phi}}(\boldsymbol{\vartheta}) = \left[\begin{array}{c}\hat{A}(\boldsymbol{\vartheta}) \\ \hat{B}(\boldsymbol{\vartheta})\end{array}\right] = (W(\boldsymbol{\vartheta})^{T}W(\boldsymbol{\vartheta}))^{-1} W(\boldsymbol{\vartheta})^{T}Y.
\end{equation}
Substituting $\hat{A}(\boldsymbol{\vartheta})$ and $\hat{B}(\boldsymbol{\vartheta})$ in~(\ref{eq:ess}), we have:
\begin{equation*}\label{eq:ess_non_lin}
Q(\hat{A}(\boldsymbol{\vartheta}), \hat{B}(\boldsymbol{\vartheta}),\boldsymbol{\vartheta}) = Y^{T}(I - W(\boldsymbol{\vartheta})(W(\boldsymbol{\vartheta})^{T}W(\boldsymbol{\vartheta}))^{-1} W(\boldsymbol{\vartheta})^{T})Y.
\end{equation*}
Using Lemma~\ref{lemma:6}, we have the following relationship between the function $Q(\boldsymbol{\theta})$ and the periodogram-like function $I(\boldsymbol{\vartheta})$:
$$\frac{1}{MN}Q(\hat{A}(\boldsymbol{\vartheta}), \hat{B}(\boldsymbol{\vartheta}),\boldsymbol{\vartheta}) = \frac{1}{MN}Y^{T}Y - I(\boldsymbol{\vartheta}) + o(1).$$
Here, a function $f$ is $o(1)$, if $f$ $\rightarrow$ 0 as min\{$M$, $N$\} $\rightarrow$ $\infty$ Thus, $\hat{\boldsymbol{\vartheta}}$ that minimises $Q(\hat{A}(\boldsymbol{\vartheta}), \hat{B}(\boldsymbol{\vartheta}),\boldsymbol{\vartheta})$ is equivalent to $\tilde{\boldsymbol{\vartheta}}$, which maximises $I(\boldsymbol{\vartheta})$.
\vspace{-5mm}
\section{Asymptotic Properties of ALSEs}\label{sec:5}
In this section, we study the asymptotic properties of the proposed estimators, the ALSEs of model~(\ref{eq:model_1}). The following theorem states the result on the consistency property of the ALSEs.
\begin{theorem}\label{theorem:1}
If the assumptions 1 and 2 are satisfied, then $\tilde{\boldsymbol{\theta}} = (\tilde{A}, \tilde{B}, \tilde{\alpha}, \tilde{\beta}, \tilde{\gamma}, \tilde{\delta})$, the ALSE of $\boldsymbol{\theta}^0$, is a strongly consistent estimator of $\boldsymbol{\theta}^0$, that is, $\tilde{\boldsymbol{\theta}} \xrightarrow{a.s.} \boldsymbol{\theta}^0$ as $\textmd{min}\{M, N\} \rightarrow \infty$.
\end{theorem}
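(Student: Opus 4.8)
The plan is to profile out the linear amplitudes and reduce the theorem to the strong consistency of the non-linear block $\tilde{\boldsymbol{\vartheta}}=(\tilde\alpha,\tilde\beta,\tilde\gamma,\tilde\delta)$. Because $\tilde{\boldsymbol{\vartheta}}$ is the maximiser of $I(\boldsymbol{\vartheta})$ over the \emph{bounded} domain $(0,\pi)^4$ --- with $A$ and $B$ already eliminated analytically through~(\ref{eq:lin_est}) --- the optimisation effectively takes place over a set with compact closure $[0,\pi]^4$, and the unboundedness of the amplitude space never enters. This is precisely the structural feature that lets us dispense with the boundedness condition on $A,B$ needed for the LSE. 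I would therefore first prove $\tilde{\boldsymbol{\vartheta}}\xrightarrow{a.s.}\boldsymbol{\vartheta}^0$ and recover $\tilde A,\tilde B$ at the very end.

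Writing $u(m,n)=(m,m^2,n,n^2)^{T}$ and $\mu(m,n)=A^0\cos((\boldsymbol{\vartheta}^0)^{T}u(m,n))+B^0\sin((\boldsymbol{\vartheta}^0)^{T}u(m,n))$, so that $y=\mu+X$, the first substantive step is to pin down the two pointwise limits of the normalised criterion $\frac{1}{MN}I(\boldsymbol{\vartheta})=2\big|\frac{1}{MN}\sum_{m,n}y(m,n)e^{-i\boldsymbol{\vartheta}^{T}u(m,n)}\big|^2$. At the true value, expanding $\mu\,e^{-i(\boldsymbol{\vartheta}^0)^{T}u}$ into squared and cross trigonometric terms and applying Lemma~\ref{lemma:1}~\textit{(d)}--\textit{(f)} (the squares average to $1/2$, the cross term to $0$) gives $\frac{1}{MN}\sum\mu\,e^{-i(\boldsymbol{\vartheta}^0)^{T}u}\to \tfrac12 A^0-\tfrac{i}{2}B^0$, while Lemma~\ref{lemma:1}~\textit{(g)} makes the noise term negligible; hence $\frac{1}{MN}I(\boldsymbol{\vartheta}^0)\to\big((A^0)^2+(B^0)^2\big)/2>0$ a.s. For any fixed $\boldsymbol{\vartheta}\neq\boldsymbol{\vartheta}^0$ the signal contribution becomes a sum of products of trigonometric functions with \emph{distinct} frequency vectors, which Lemma~\ref{lemma:5} forces to $o(\sqrt{MN})$, so that, again with Lemma~\ref{lemma:1}~\textit{(g)}, $\frac{1}{MN}I(\boldsymbol{\vartheta})\to 0$ a.s.

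The consistency of $\tilde{\boldsymbol{\vartheta}}$ then follows by contradiction. Suppose it fails; by compactness of $[0,\pi]^4$ there are an $\varepsilon>0$ and a subsequence of $\{(M,N)\}$ along which $\tilde{\boldsymbol{\vartheta}}\in S_\varepsilon:=\{\boldsymbol{\vartheta}:|\boldsymbol{\vartheta}-\boldsymbol{\vartheta}^0|\ge\varepsilon\}$. Since $\tilde{\boldsymbol{\vartheta}}$ maximises the periodogram, $\frac{1}{MN}I(\tilde{\boldsymbol{\vartheta}})\ge\frac{1}{MN}I(\boldsymbol{\vartheta}^0)\to\big((A^0)^2+(B^0)^2\big)/2$ along this subsequence, whereas the uniform bound
\begin{equation*}
\limsup_{\min\{M,N\}\to\infty}\ \sup_{\boldsymbol{\vartheta}\in S_\varepsilon}\frac{1}{MN}I(\boldsymbol{\vartheta})=0\qquad\text{a.s.}
\end{equation*}
forces $\frac{1}{MN}I(\tilde{\boldsymbol{\vartheta}})\to 0$ --- a contradiction. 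Having established $\tilde{\boldsymbol{\vartheta}}\xrightarrow{a.s.}\boldsymbol{\vartheta}^0$, I would substitute it into $\hat{\boldsymbol{\phi}}(\boldsymbol{\vartheta})$ of~(\ref{eq:lin_est}): Lemma~\ref{lemma:6} gives $\frac{1}{MN}W(\tilde{\boldsymbol{\vartheta}})^{T}W(\tilde{\boldsymbol{\vartheta}})\to\tfrac12 I_2$, and the trigonometric-sum limits of Lemmas~\ref{lemma:1} and~\ref{lemma:5} together with the continuity of the map yield $\frac{2}{MN}W(\tilde{\boldsymbol{\vartheta}})^{T}Y\to(A^0,B^0)^{T}$, so that $\tilde A\xrightarrow{a.s.}A^0$ and $\tilde B\xrightarrow{a.s.}B^0$, completing the argument.

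I expect the main obstacle to be the uniform bound over $S_\varepsilon$, that is, upgrading the pointwise limit $\frac{1}{MN}I(\boldsymbol{\vartheta})\to 0$ to convergence that is uniform in $\boldsymbol{\vartheta}$ (equivalently, ruling out a moving maximiser $\tilde{\boldsymbol{\vartheta}}$ that drifts while keeping the criterion large). Lemma~\ref{lemma:1}~\textit{(g)} already handles the stochastic part uniformly, so the residual difficulty is deterministic: controlling the oscillatory signal sum simultaneously over the continuum $S_\varepsilon$. The natural device is to cover $S_\varepsilon$ by finitely many cells, bound the variation of $\frac{1}{MN}I$ inside each cell by a Lipschitz constant in $\boldsymbol{\vartheta}$, and invoke the pointwise limit at the cell centres; the delicate point is that differentiating $e^{-i\boldsymbol{\vartheta}^{T}u}$ produces the polynomial weights $m,m^2,n,n^2$, whose sizes reach $M^2$ and $N^2$, so the mesh of the cover must be shrunk with $(M,N)$ and the a.s.\ pointwise convergence must be made to hold simultaneously on the growing grid. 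Carrying out this equicontinuity estimate uniformly in the scaling is the technical heart of the proof; everything else reduces to the identities already catalogued in Lemmas~\ref{lemma:1} and~\ref{lemma:5}.
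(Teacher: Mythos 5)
Your architecture coincides with the paper's: you profile out the amplitudes, establish the subsequence--contradiction criterion (this is exactly the first lemma of Appendix~A), compute the limit at $\boldsymbol{\vartheta}^0$ via Lemma~\ref{lemma:1}, and recover $\tilde A,\tilde B$ at the end (as the paper does in the proof of Theorem~\ref{theorem:3}). However, the uniform bound you isolate as the ``technical heart'' is stated too strongly, and in that strong form it is \emph{false}: $\limsup\sup_{\boldsymbol{\vartheta}\in S_\varepsilon}\frac{1}{MN}I(\boldsymbol{\vartheta})$ is not $0$. All the trigonometric limits in Lemmas~\ref{lemma:1}, \ref{lemma:4}, \ref{lemma:5} hold only ``except for a countable number of points'', and these exceptional points are dense in $S_\varepsilon$. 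Concretely, take $\alpha=\alpha^0$, $\gamma=\gamma^0$, $\delta=\delta^0$ and $\beta=\beta^0-2\pi/3$ (arranged to lie in $(0,\pi)$): then $e^{i(\beta^0-\beta)m^2}=e^{2\pi i m^2/3}$ is $3$-periodic in $m$, $\frac{1}{M}\sum_{m=1}^M e^{2\pi i m^2/3}\to(1+2e^{2\pi i/3})/3$, which has modulus $1/\sqrt{3}$, so $\frac{1}{MN}I(\boldsymbol{\vartheta})$ converges to a strictly positive limit at this $\boldsymbol{\vartheta}\in S_\varepsilon$. Consequently your chaining/equicontinuity programme cannot succeed as described: the pointwise limit fails at a dense set of prospective cell centres and the limit function is discontinuous, so no mesh-refinement can upgrade it to a sup-norm limit of zero. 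What the contradiction argument actually needs --- and what the paper's criterion is carefully phrased to require --- is only the strict gap $\limsup\sup_{S_c}\frac{1}{MN}\left(I(\boldsymbol{\vartheta})-I(\boldsymbol{\vartheta}^0)\right)<0$: at resonant points the limit is damped by a normalized Gauss-sum factor of modulus strictly less than $1$ (and for generic $\boldsymbol{\vartheta}^0$ the sum- and difference-frequency terms cannot resonate simultaneously), so the supremum stays strictly below the peak value $\frac{1}{2}\left({A^0}^2+{B^0}^2\right)$ even though it does not vanish. You should restructure your argument around this difference inequality rather than the sup-to-zero claim.

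There is a second, independent gap in your amplitude step. Consistency of $\tilde{\boldsymbol{\vartheta}}$ together with ``continuity of the map'' does not yield $\frac{2}{MN}W(\tilde{\boldsymbol{\vartheta}})^{T}Y\to(A^0,B^0)^{T}$: the regressors are $\cos(\tilde\alpha m+\tilde\beta m^2+\tilde\gamma n+\tilde\delta n^2)$ with arguments growing like $M$, $M^2$, $N$, $N^2$, so the error $\cos(\tilde{\boldsymbol{\vartheta}}^{T}u(m,n))-\cos((\boldsymbol{\vartheta}^0)^{T}u(m,n))$ is controlled by $m|\tilde\alpha-\alpha^0|+m^2|\tilde\beta-\beta^0|+n|\tilde\gamma-\gamma^0|+n^2|\tilde\delta-\delta^0|$, which is negligible uniformly over $m\leqslant M$, $n\leqslant N$ only if $M(\tilde\alpha-\alpha^0)\to0$, $M^2(\tilde\beta-\beta^0)\to0$, $N(\tilde\gamma-\gamma^0)\to0$, $N^2(\tilde\delta-\delta^0)\to0$ almost surely. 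This superconsistency is exactly the paper's Lemma~\ref{lemma:7} (proved by a Taylor expansion of $I'$ around $\boldsymbol{\vartheta}^0$ and the a.s.\ convergence of the scaled Hessian to the positive definite matrix $\mathbf{S}$), and your proposal omits it entirely; with it, the amplitude recovery goes through as you intend, via the Taylor expansion of the cosine and Lemma~\ref{lemma:1} part \textit{(g)}, as in the paper's treatment of the linear parameters.
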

\begin{proof}
See~\nameref{appendix:A}.\\
\end{proof}
\justify
\vspace{-8mm}
In the following theorem, we state the result obtained on the asymptotic distribution of the proposed estimators.
\begin{theorem}\label{theorem:2}
If the assumptions 1 and 2 are true, then the asymptotic distribution of $(\tilde{\boldsymbol{\theta}} - \boldsymbol{\theta}^0)\mathbf{D}^{-1}$ is same as that of $(\hat{\boldsymbol{\theta}} - \boldsymbol{\theta}^0)\mathbf{D}^{-1}$ as $\textmd{min}\{M, N\} \rightarrow \infty$, where 
 $\tilde{\boldsymbol{\theta}}$ = $(\tilde{A}, \tilde{B}, \tilde{\alpha}, \tilde{\beta},  \tilde{\gamma}, \tilde{\delta})$ is the ALSE of $\boldsymbol{\theta}^0$ and $\hat{\boldsymbol{\theta}}$ = $(\hat{A}, \hat{B}, \hat{\alpha}, \hat{\beta},  \hat{\gamma}, \hat{\delta})$ is the LSE of $\boldsymbol{\theta}^0$ and $\mathbf{D}$ is a 6 $\times$ 6 diagonal matrix defined as:\\
$ \mathbf{D}$ = $\textnormal{diag}(M^{\frac{-1}{2}}N^{\frac{-1}{2}}, M^{\frac{-1}{2}}N^{\frac{-1}{2}}, M^{\frac{-3}{2}}N^{\frac{-1}{2}}, M^{\frac{-5}{2}}N^{\frac{-1}{2}}, M^{\frac{-1}{2}}N^{\frac{-3}{2}}, M^{\frac{-1}{2}}N^{\frac{-5}{2}}).$
\end{theorem}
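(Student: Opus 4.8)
The strategy is to obtain the limiting law of $(\tilde{\boldsymbol{\theta}}-\boldsymbol{\theta}^0)\mathbf{D}^{-1}$ from a first-order expansion of the estimating equations and to verify that the identical expansion for the LSE produces the same limit. By Theorem~\ref{theorem:1} the ALSE is strongly consistent, so for large $M,N$ it is an interior point: the linear ALSEs $(\tilde A,\tilde B)=\hat{\boldsymbol{\phi}}(\tilde{\boldsymbol{\vartheta}})$ from~(\ref{eq:lin_est}) make the linear block of the gradient $Q'$ of the least squares objective~(\ref{eq:ess}) vanish exactly, while $\tilde{\boldsymbol{\vartheta}}$ solves $I'(\tilde{\boldsymbol{\vartheta}})=\mathbf{0}$, which by the identity of Section~\ref{sec:4} makes the non-linear block of $Q'(\tilde{\boldsymbol{\theta}})$ vanish up to a remainder that is negligible after $\mathbf{D}$-scaling. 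Treating $\tilde{\boldsymbol{\theta}}$ as an (approximate) stationary point of $Q$ and Taylor-expanding $Q'$ about $\boldsymbol{\theta}^0$ yields $\mathbf{0}=Q'(\boldsymbol{\theta}^0)+(\tilde{\boldsymbol{\theta}}-\boldsymbol{\theta}^0)\,Q''(\bar{\boldsymbol{\theta}})$ (to leading order) for an intermediate $\bar{\boldsymbol{\theta}}$, whence, since $\mathbf{D}$ is diagonal,
\[
(\tilde{\boldsymbol{\theta}}-\boldsymbol{\theta}^0)\mathbf{D}^{-1} = -\big[Q'(\boldsymbol{\theta}^0)\mathbf{D}\big]\big[\mathbf{D}\,Q''(\bar{\boldsymbol{\theta}})\,\mathbf{D}\big]^{-1},
\]
and the analogous identity with $\hat{\boldsymbol{\theta}}$ holds exactly for the LSE.

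The two substantive steps are (i) $Q'(\boldsymbol{\theta}^0)\mathbf{D}\xrightarrow{d}N_6(\mathbf{0},\Sigma)$ and (ii) $\mathbf{D}\,Q''(\bar{\boldsymbol{\theta}})\,\mathbf{D}\xrightarrow{a.s.}\Gamma$ for a fixed nonsingular matrix $\Gamma$. The coordinates of $Q'(\boldsymbol{\theta}^0)$ are weighted error sums $\sum_{m,n} m^s n^t X(m,n)$ times $\cos$ or $\sin$ of the true chirp phase; substituting the linear-process representation of Assumption~1 and controlling the trigonometric sums by Lemma~\ref{lemma:4}, a central limit theorem for weighted sums of the i.i.d.\ innovations gives (i), the powers of $M,N$ in $\mathbf{D}$ being exactly those that make each coordinate $O_p(1)$. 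For (ii), the entries of $Q''$ separate into a deterministic part --- sums of $m^s n^t$ times products of two trigonometric factors of the chirp phase, whose scaled limits come from Lemma~\ref{lemma:5} and Lemma~\ref{lemma:1}(e),(f) --- and an error-dependent part shown to be negligible by the uniform bound of Lemma~\ref{lemma:1}(g). One also checks, using consistency and the same lemmas, that $\bar{\boldsymbol{\theta}}$ can be replaced by $\boldsymbol{\theta}^0$ inside the scaled Hessian without changing the limit.

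For the equivalence claim, I would run steps (i)--(ii) for both estimators and confirm that they yield the same $\Sigma$ and $\Gamma$. The Section~\ref{sec:4} identity shows that the ALSE objective $-I(\boldsymbol{\vartheta})$ and the concentrated least squares objective agree up to an additive constant in $\boldsymbol{\vartheta}$ and an $o(1)$ remainder; the work is to upgrade this from function values to first and second derivatives, i.e.\ to show the $\mathbf{D}$-scaled gradients and Hessians of the two objectives differ by $o_p(1)$. Granting this, the displayed identity and its LSE analogue have matching right-hand sides in the limit, so $(\tilde{\boldsymbol{\theta}}-\boldsymbol{\theta}^0)\mathbf{D}^{-1}$ and $(\hat{\boldsymbol{\theta}}-\boldsymbol{\theta}^0)\mathbf{D}^{-1}$ converge to the common normal law $N_6(\mathbf{0},\Gamma^{-1}\Sigma\,\Gamma^{-1})$.

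The main obstacle is the heterogeneity of $\mathbf{D}$: the frequency rates $\beta,\delta$ are estimated at rates $M^{-5/2}N^{-1/2}$ and $M^{-1/2}N^{-5/2}$, far faster than the $(MN)^{-1/2}$ rate of the amplitudes, so the expansion mixes derivatives of very different orders and the scaled Hessian is genuinely multi-rate. Ensuring that every cross-derivative decays at precisely the rate needed for $\Gamma$ to be nonsingular (and block-structured as $\mathbf{D}$ suggests), and that the intermediate-point and $o(1)$-remainder errors stay negligible under this uneven scaling, is where Lemma~\ref{lemma:5} does the bulk of the work and where the calculation is most delicate.
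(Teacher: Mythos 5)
Your proposal is correct and takes essentially the same route as the paper: a $\mathbf{D}$-scaled Taylor expansion of the estimating equations about $\boldsymbol{\theta}^0$, with Lemma~\ref{lemma:1} \textit{(c)--(g)} and Lemma~\ref{lemma:5} showing that the scaled gradient and Hessian of the ALSE criterion coincide asymptotically with $-\mathbf{Q}'_{MN}$ and $-\mathbf{Q}''_{MN}$, so that both estimators satisfy the same limiting identity $(\cdot-\boldsymbol{\theta}^0)\mathbf{D}^{-1}=-[\text{gradient}\cdot\mathbf{D}][\mathbf{D}\cdot\text{Hessian}\cdot\mathbf{D}]^{-1}$. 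The one packaging difference is that the paper makes your ``approximate stationary point of $Q$'' step exact by introducing the auxiliary criterion $J_{MN}(\boldsymbol{\theta})$, whose exact maximizer is the ALSE and whose $\mathbf{D}$-scaled derivatives match those of $-Q_{MN}$ (equations~(\ref{eq:relation_Q_J})--(\ref{eq:J_part_2})), which is precisely the ``upgrade from function values to derivatives'' you flag as the remaining work.
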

 \begin{proof}
See~\nameref{appendix:B}.\\
\end{proof}
\section{Multiple Component 2-D Chirp Model}\label{sec:6}
In this section, we consider a 2-D chirp model with multiple components, mathematically expressed in the following form:
\begin{equation}\begin{split}\label{eq:model_mul_comp}
y(m, n) & =  \sum_{k=1}^p \bigg(A_k^0\cos(\alpha_k^0 m + \beta_k^0 m^2 + \gamma_k^0 n + \delta_k^0 n^2) + B_k^0\sin(\alpha_k^0 m + \beta_k^0 m^2 + \gamma_k^0 n + \delta_k^0 n^2)\bigg) \\
& \quad \quad  + X(m,n);    \quad \quad  m = 1, \cdots, M; \ n = 1, \cdots, N.
\end{split}\end{equation}
Here $y(m, n)$ is the observed data vector, $A_k^0$s, $B_k^0$s are the amplitudes, $\alpha_k^0$s, $\gamma_k^0$s are the frequencies and the $\beta_k^0$s, $\delta_k^0$s are the frequency rates. The random variables sequence $\{X(m, n)\}$ is a stationary error sequence. In practice, the number of components, $p$ is unknown and its estimation is an important and still an open problem. For recent references on this model, see Zhang et al. \cite{2008} and Lahiri \cite{2013_1}.  

\noindent Here it is assumed that $p$ is known and our main purpose is to estimate the unknown parameters of this model, primarily the non-linear parameters. Finding the ALSEs for the above model is computationally challenging, especially when the number of components, $p$ is large. Even when $p = 2$, we need to solve a 12-D optimisation problem to obtain the ALSEs. Thus, we propose a sequential procedure to find these estimates. This method reduces the complexity of computation without compromising on the efficiency of the estimators. We prove that the ALSEs obtained by the proposed sequential procedure are strongly consistent and have the same rates of convergence as the LSEs. 

\noindent In the following subsection, we provide the algorithm to obtain the sequential ALSEs of the unknown parameters of the $p$ component 2-D chirp signal. Let us denote $\boldsymbol{\vartheta}_{k} = (\alpha_k, \beta_k, \gamma_k, \delta_k)$.
\vspace{-9mm}
\subsection{Algorithm to find the ALSEs:}
\textit{Step 1:} Maximizing the periodogram-like function
\begin{equation}\begin{split}\label{eq:per_1}
I_{1}(\boldsymbol{\vartheta})& = \frac{1}{MN} \bigg(\sum_{m=1}^{M}\sum_{n=1}^{N}y(m, n)\cos(\alpha m + \beta m^2 + \gamma n + \delta n^2)\bigg)^2 + \\
& \quad \frac{1}{MN} \bigg(\sum_{m=1}^{M}\sum_{n=1}^{N}y(m, n)\sin(\alpha m + \beta m^2 + \gamma n + \delta n^2)\bigg)^2.
\end{split}\end{equation}
We first obtain the non-linear parameter estimates: $\tilde{\boldsymbol{\vartheta}_1} = (\tilde{\alpha_1}, \tilde{\beta_1}, \tilde{\gamma_1}, \tilde{\delta_1})$.
Then the linear parameter estimates can be obtained by substituting  $\tilde{\boldsymbol{\vartheta}_1}$ in~(\ref{eq:lin_est}).
Thus
\begin{equation}\begin{split}\label{eq:lin_ALSEs}
\tilde{A_1} = \frac{2}{M N}\sum\limits_{n = 1}^{N}\sum\limits_{m = 1}^{M} y(m, n) \cos(\tilde{\alpha_1} m + \tilde{\beta_1} m^2 + \tilde{\gamma_1} n + \tilde{\delta_1} n^2),\\ 
\tilde{B_1} = \frac{2}{M N}\sum\limits_{n= 1}^{N}\sum\limits_{m = 1}^{M} y(m, n) \sin(\tilde{\alpha_1} m + \tilde{\beta_1} m^2 + \tilde{\gamma_1} n + \tilde{\delta_1} n^2). \\ 
\end{split}\end{equation}
\textit{Step 2:} Now we have the estimates of the parameters of the first component of the observed signal. We subtract the contribution of the first component from the original signal vector $Y_{MN \times 1}$ to eliminate the effect of the first component and obtain a new data vector, say 
$$Y^{1} = Y - W(\tilde{\boldsymbol{\vartheta}_1})\begin{pmatrix}
\tilde{A_1} \\\tilde{B_1}
\end{pmatrix}.
$$
\textit{Step 3:} Now we compute $\tilde{\boldsymbol{\vartheta}_2} = (\tilde{\alpha_{2}},\tilde{\beta_{2}}, \tilde{\gamma_{2}}, \tilde{\delta_{2}})$ by maximizing $I_{2}(\boldsymbol{\vartheta})$ which is obtained by replacing the original data vector by the new data vector in~(\ref{eq:per_1}) and then the linear parameters, $\tilde{A_2}$ and $\tilde{B_2}$ can be obtained by substituting $\tilde{\boldsymbol{\vartheta}_2}$ in~(\ref{eq:lin_est}). \\ \\
Step 4: Continue the process upto $p$-steps. 
\justify
\vspace{-3mm}
\subsection{Asymptotic Properties}
Further assumptions required to study the consistency property and derive the asymptotic distribution of the proposed estimators, are stated as follows:
\justify
\vspace{-2mm}
\underline{\textbf{Assumption 3.}}\label{assump:3} $\boldsymbol{\theta}_{k}^{0}$ is an interior point of $\Theta$, for all $k = 1, \ldots, p$ and the frequencies $\alpha_{k}^0s$, $\gamma_{k}^0s$ and the frequency rates $\beta_{k}^0s$, $\delta_{k}^0s$ are such that $(\alpha_i^0, \beta_i^0, \gamma_i^0, \delta_i^0) \neq (\alpha_j^0, \beta_j^0, \gamma_j^0, \delta_j^0)$ $\forall i \neq j$. \\ \\
\underline{\textbf{Assumption 4.}}\label{assump:4} $A_k^0$s and $B_k^0$s satisfy the following relationship:
\begin{equation*}
\infty > {A_{1}^{0}}^2 + {B_{1}^{0}}^2 > {A_{2}^{0}}^2 + {B_{2}^{0}}^2 > \cdots > {A_{p}^{0}}^2 + {B_{p}^{0}}^2 > 0.
\end{equation*}
In the following theorems, we state the results we obtained on the consistency of the proposed estimators.
\begin{theorem}\label{theorem:3}
Under the assumptions 1, 3 and 4, $\tilde{A_1}, \tilde{B_1}, \tilde{\alpha_1}$, $\tilde{\beta_1}$, $\tilde{\gamma_1}$ and $\tilde{\delta_1}$ are strongly consistent estimators of $A_{1}^{0}, B_{1}^{0},  \alpha_{1}^{0}$, $\beta_{1}^{0}$, $\gamma_{1}^{0}$, $\delta_{1}^{0}$ respectively, that is, $\tilde{\boldsymbol{\theta}_1} \xrightarrow{a.s.} \boldsymbol{\theta}_1^0$ as $\textmd{min}\{M, N\} \rightarrow \infty$.
\end{theorem}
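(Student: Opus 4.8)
The plan is to reduce Theorem~\ref{theorem:3} to the single-component result, Theorem~\ref{theorem:1}, by showing that the first step of the sequential algorithm isolates the component carrying the largest amplitude, which is the first component by Assumption~4. The object to analyse is the normalised periodogram $\tfrac{1}{MN}I_1(\boldsymbol{\vartheta})$ of~(\ref{eq:per_1}), evaluated along the full $p$-component data $y(m,n)$ of~(\ref{eq:model_mul_comp}). First I would substitute the model into $I_1$ and expand, writing $\tfrac{1}{MN}I_1(\boldsymbol{\vartheta}) = \big(\tfrac{1}{MN}\sum_{m,n} y(m,n)\cos(\alpha m + \beta m^2 + \gamma n + \delta n^2)\big)^2 + \big(\tfrac{1}{MN}\sum_{m,n} y(m,n)\sin(\alpha m + \beta m^2 + \gamma n + \delta n^2)\big)^2$. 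Each projection then splits into $p$ signal--trigonometric cross-products, a signal-free noise contribution, and nothing else, which sets up the limit computation.

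Next I would determine the pointwise almost-sure limit $g(\boldsymbol{\vartheta})$ of $\tfrac{1}{MN}I_1(\boldsymbol{\vartheta})$. The noise contribution vanishes uniformly by Lemma~\ref{lemma:1}(g) with $s=t=0$, and each deterministic cross-product is handled by the product-to-sum identities together with Lemma~\ref{lemma:1}(c)--(f): every sum-frequency term decays, while a difference-frequency term decays unless $\boldsymbol{\vartheta}$ coincides with some $\boldsymbol{\vartheta}_j^0$, in which case the squared-cosine (resp.\ sine) average tends to $\tfrac12$. Consequently $g(\boldsymbol{\vartheta}) = \tfrac{(A_j^0)^2+(B_j^0)^2}{4}$ when $\boldsymbol{\vartheta}=\boldsymbol{\vartheta}_j^0$ and $g(\boldsymbol{\vartheta})=0$ otherwise. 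The distinctness of the frequency vectors in Assumption~3 keeps these peaks separated, and the strict amplitude ordering in Assumption~4 makes $\boldsymbol{\vartheta}_1^0$ the unique global maximiser of $g$ over $[0,\pi]^4$, with value $\tfrac{(A_1^0)^2+(B_1^0)^2}{4}$ strictly exceeding every other peak by at least $\tfrac{(A_1^0)^2+(B_1^0)^2}{4} - \tfrac{(A_2^0)^2+(B_2^0)^2}{4} > 0$.

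To convert this into strong consistency of $\tilde{\boldsymbol{\vartheta}_1}$, I would use the standard argmax argument. Fix $\epsilon>0$ and set $S_\epsilon = \{\boldsymbol{\vartheta}\in[0,\pi]^4 : \|\boldsymbol{\vartheta}-\boldsymbol{\vartheta}_1^0\|\ge\epsilon\}$. Since $\tilde{\boldsymbol{\vartheta}_1}$ maximises $I_1$, we have $\tfrac{1}{MN}\big[I_1(\boldsymbol{\vartheta}_1^0)-I_1(\tilde{\boldsymbol{\vartheta}_1})\big]\le 0$, so it suffices to show $\liminf_{\min\{M,N\}\to\infty}\inf_{\boldsymbol{\vartheta}\in S_\epsilon}\tfrac{1}{MN}\big[I_1(\boldsymbol{\vartheta}_1^0)-I_1(\boldsymbol{\vartheta})\big] > 0$ almost surely, which forces $\tilde{\boldsymbol{\vartheta}_1}\notin S_\epsilon$ eventually. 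The amplitude gap from Assumption~4 supplies this strictly positive lower bound, since $\tfrac{1}{MN}I_1(\boldsymbol{\vartheta}_1^0)\to\tfrac{(A_1^0)^2+(B_1^0)^2}{4}$ while $g\le\tfrac{(A_2^0)^2+(B_2^0)^2}{4}$ on $S_\epsilon$. Once $\tilde{\boldsymbol{\vartheta}_1}\xrightarrow{a.s.}\boldsymbol{\vartheta}_1^0$ is in hand, the linear estimates $\tilde{A_1},\tilde{B_1}$ of~(\ref{eq:lin_ALSEs}) follow by direct substitution: the contributions of the components $j\ge 2$ vanish because their frequency vectors differ from $\boldsymbol{\vartheta}_1^0$ (Assumption~3), the noise vanishes by Lemma~\ref{lemma:1}(g), and Lemma~\ref{lemma:1}(c),(e) yields $\tilde{A_1}\to A_1^0$, $\tilde{B_1}\to B_1^0$.

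The hard part will be upgrading the pointwise limit $g$ to the uniform control over $S_\epsilon$ demanded by the argmax step, since Lemma~\ref{lemma:1} only gives convergence off a countable exceptional set rather than uniformly in $\boldsymbol{\vartheta}$. I would address this either by combining the uniform noise bound of Lemma~\ref{lemma:1}(g) with an equicontinuity/covering argument on the compact cube $[0,\pi]^4$, or by transcribing the device already used for Theorem~\ref{theorem:1}. The genuinely delicate point, and the reason Assumption~4 is indispensable, is ensuring that the remaining $p-1$ components cannot conspire within the supremum to lift $\tfrac{1}{MN}I_1$ above the first-component level; a parallel (and milder) uniform-continuity remark is also needed to pass from the random argument $\tilde{\boldsymbol{\vartheta}_1}$ to the limit when evaluating $\tilde{A_1}$ and $\tilde{B_1}$.
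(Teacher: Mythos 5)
For the non-linear parameters your route coincides with the paper's own: your argmax criterion is exactly Lemma~\ref{lemma:8}, and your identification of the a.s.\ limit of $\frac{1}{MN}I_1(\boldsymbol{\vartheta})$ --- peaks of height $({A_j^0}^2+{B_j^0}^2)/4$ at the $\boldsymbol{\vartheta}_j^0$ and $0$ elsewhere --- is precisely how the paper evaluates $\limsup\sup_{S_c}\frac{1}{MN}\big(I_1(\boldsymbol{\vartheta})-I_1(\boldsymbol{\vartheta}_1^0)\big)$ after splitting $S_c^{\boldsymbol{\vartheta}_1^0}$ into the piece where $\boldsymbol{\vartheta}$ coincides with another component's true frequency vector (giving $\frac{1}{4}({A_2^0}^2+{B_2^0}^2-{A_1^0}^2-{B_1^0}^2)<0$ by Assumption~4) and the piece where it coincides with none (giving $-\frac{1}{4}({A_1^0}^2+{B_1^0}^2)<0$). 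That half of your proposal is sound and is essentially the paper's argument.

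The genuine gap is in the linear step. You assert that $\tilde{A}_1\rightarrow A_1^0$ and $\tilde{B}_1\rightarrow B_1^0$ follow ``by direct substitution'' once $\tilde{\boldsymbol{\vartheta}}_1\xrightarrow{a.s.}\boldsymbol{\vartheta}_1^0$, with only a ``milder uniform-continuity remark'' outstanding. No such remark can be supplied: the maps $\boldsymbol{\vartheta}\mapsto\frac{2}{MN}\sum_{m,n}y(m,n)\cos(\alpha m+\beta m^2+\gamma n+\delta n^2)$ are not equicontinuous as $\min\{M,N\}\rightarrow\infty$, since an error $\epsilon$ in $\alpha$ perturbs the phase by $\epsilon m\sim\epsilon M$ and an error in $\beta$ by $\epsilon m^2\sim\epsilon M^2$. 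Plain consistency, $\tilde{\alpha}_1-\alpha_1^0=o(1)$, is therefore compatible with the empirical projection at $\tilde{\boldsymbol{\vartheta}}_1$ failing to converge to $A_1^0/2$ (for instance if $\tilde{\alpha}_1-\alpha_1^0$ decayed only like $M^{-1/2}$). What is actually required --- and what the paper proves and invokes at exactly this point --- is the superconsistency rate of Lemma~\ref{lemma:9}: $M(\tilde{\alpha}_1-\alpha_1^0)\rightarrow 0$, $M^2(\tilde{\beta}_1-\beta_1^0)\rightarrow 0$, $N(\tilde{\gamma}_1-\gamma_1^0)\rightarrow 0$ and $N^2(\tilde{\delta}_1-\delta_1^0)\rightarrow 0$ a.s., obtained as in Lemma~\ref{lemma:7} from a Taylor expansion of $I_1'$ about $\boldsymbol{\vartheta}_1^0$, the uniform noise bound of Lemma~\ref{lemma:1}(g), and the a.s.\ convergence of the scaled second-derivative matrix to a negative definite limit. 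With these rates, the terms $m(\tilde{\alpha}_1-\alpha_1^0)$, $m^2(\tilde{\beta}_1-\beta_1^0)$, $n(\tilde{\gamma}_1-\gamma_1^0)$, $n^2(\tilde{\delta}_1-\delta_1^0)$ in the expansion of $\cos(\tilde{\alpha}_1 m+\tilde{\beta}_1 m^2+\tilde{\gamma}_1 n+\tilde{\delta}_1 n^2)$ about the true phase are uniformly $o(1)$, and Lemma~\ref{lemma:1}, parts \textit{(c)}--\textit{(f)}, then delivers $\tilde{A}_1\rightarrow A_1^0$, $\tilde{B}_1\rightarrow B_1^0$ as in~(\ref{eq:lin_ALSEs}). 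Your argmax argument by itself cannot produce these rates, so as written the linear half of Theorem~\ref{theorem:3} is unproven; you would need to add the derivative-based rate lemma before the substitution step.
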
  
\begin{proof}
See~\nameref{appendix:C}.\\
\end{proof}
\begin{theorem}\label{theorem:4}
If the assumptions 1, 3 and 4 are satisfied and p $\geqslant$ 2,then $\tilde{\boldsymbol{\theta}_2} \xrightarrow{a.s.} \boldsymbol{\theta}_2^0$ as $\textmd{min}\{M, N\} \rightarrow \infty$.
\end{theorem}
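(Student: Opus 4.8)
The plan is to deduce Theorem~\ref{theorem:4} from Theorem~\ref{theorem:3} by showing that, after the first (largest) component has been approximately removed, the residual vector $Y^{1}$ behaves, as far as the second-step periodogram is concerned, exactly like a genuine $(p-1)$-component chirp signal whose dominant component is the one indexed by $\boldsymbol{\vartheta}_2^{0}$. By Theorem~\ref{theorem:3} we have $\tilde{\boldsymbol{\theta}}_1\xrightarrow{a.s.}\boldsymbol{\theta}_1^{0}$; moreover its proof in fact supplies the almost-sure rates $M(\tilde\alpha_1-\alpha_1^{0})\to0$, $M^{2}(\tilde\beta_1-\beta_1^{0})\to0$, $N(\tilde\gamma_1-\gamma_1^{0})\to0$ and $N^{2}(\tilde\delta_1-\delta_1^{0})\to0$ a.s., which I shall use below. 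Writing $\boldsymbol{\phi}_k^{0}=(A_k^{0},B_k^{0})^{T}$ and $\hat{\boldsymbol{\phi}}_1=(\tilde A_1,\tilde B_1)^{T}$, decompose the residual as
\[
Y^{1}=\underbrace{\sum_{k=2}^{p}W(\boldsymbol{\vartheta}_k^{0})\boldsymbol{\phi}_k^{0}+X}_{=:\,Z}\;+\;\underbrace{\big(W(\boldsymbol{\vartheta}_1^{0})\boldsymbol{\phi}_1^{0}-W(\tilde{\boldsymbol{\vartheta}}_1)\hat{\boldsymbol{\phi}}_1\big)}_{=:\,\xi},
\]
where $Z$ is the residual one would obtain by subtracting the \emph{true} first component; by Assumption~4 the amplitudes of $Z$ still obey the strict ordering, with ${A_2^{0}}^{2}+{B_2^{0}}^{2}$ now the largest.

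The technical heart of the argument is to show that the contamination $\xi$ is asymptotically invisible to the second-step periodogram, i.e. $\sup_{\boldsymbol{\vartheta}}\big|\tfrac{1}{MN}I_2(\boldsymbol{\vartheta})-\tfrac{1}{MN}I_Z(\boldsymbol{\vartheta})\big|\xrightarrow{a.s.}0$, where $I_Z$ is the periodogram~(\ref{eq:per_1}) built from $Z$. Expanding the squared modulus, this reduces to bounding the self-term of $\xi$ and its cross-term with $Z$. Since $\xi$ does not depend on $\boldsymbol{\vartheta}$, its self-term obeys the $\boldsymbol{\vartheta}$-free bound $\tfrac{2}{(MN)^{2}}\big(\sum_{m,n}|\xi(m,n)|\big)^{2}$. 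Splitting $\xi=(W(\boldsymbol{\vartheta}_1^{0})-W(\tilde{\boldsymbol{\vartheta}}_1))\boldsymbol{\phi}_1^{0}+W(\tilde{\boldsymbol{\vartheta}}_1)(\boldsymbol{\phi}_1^{0}-\hat{\boldsymbol{\phi}}_1)$, the second summand has entries of order $\|\boldsymbol{\phi}_1^{0}-\hat{\boldsymbol{\phi}}_1\|\to0$, so after the $(MN)^{-2}$ normalisation it contributes $o(1)$.

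The first summand is the delicate one: because of the quadratic phases $m^{2},n^{2}$ it is not small pointwise. Writing $\rho(\boldsymbol{\vartheta};m,n)=\alpha m+\beta m^{2}+\gamma n+\delta n^{2}$ for the chirp phase and using $|\cos\rho(\boldsymbol{\vartheta}_1^{0};m,n)-\cos\rho(\tilde{\boldsymbol{\vartheta}}_1;m,n)|\le|\tilde\alpha_1-\alpha_1^{0}|m+|\tilde\beta_1-\beta_1^{0}|m^{2}+|\tilde\gamma_1-\gamma_1^{0}|n+|\tilde\delta_1-\delta_1^{0}|n^{2}$ (and likewise for the sine), the elementary sums $\sum_{m,n}m^{s}n^{t}\sim M^{s+1}N^{t+1}/\{(s+1)(t+1)\}$ turn the self-term into a finite combination of quantities of the form $\{M(\tilde\alpha_1-\alpha_1^{0})\}^{2}$, $\{M^{2}(\tilde\beta_1-\beta_1^{0})\}^{2}$, $\{N(\tilde\gamma_1-\gamma_1^{0})\}^{2}$, $\{N^{2}(\tilde\delta_1-\delta_1^{0})\}^{2}$, each vanishing by the first-step rates recorded above. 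The cross-term with $Z$ is then handled by the Cauchy--Schwarz inequality, $\tfrac{1}{MN}|\mathrm{cross}|\le 2\big(\tfrac{1}{MN}I_Z(\boldsymbol{\vartheta})\big)^{1/2}\big(\tfrac{1}{MN}I_\xi(\boldsymbol{\vartheta})\big)^{1/2}$ (with $I_\xi$ the periodogram of $\xi$), together with the uniform boundedness of $\tfrac{1}{MN}I_Z$, itself a consequence of Lemma~\ref{lemma:1}(g) with $s=t=0$. Both terms are therefore $o(1)$ uniformly in $\boldsymbol{\vartheta}$.

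Finally, with the uniform approximation $\tfrac{1}{MN}I_2\approx\tfrac{1}{MN}I_Z$ in hand, the maximiser $\tilde{\boldsymbol{\vartheta}}_2$ of $I_2$ has the same almost-sure limit as the maximiser of $I_Z$. But $Z$ is exactly a multiple-component chirp plus stationary noise whose leading component, by Assumption~4, is the one with parameter $\boldsymbol{\vartheta}_2^{0}$; hence the argument proving Theorem~\ref{theorem:3} — which relies on the orthogonality of distinct chirps (Lemma~\ref{lemma:5}) and on the uniform noise bound (Lemma~\ref{lemma:1}(g)) — applies verbatim to $I_Z$ and yields $\tilde{\boldsymbol{\vartheta}}_2\xrightarrow{a.s.}\boldsymbol{\vartheta}_2^{0}$. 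Consistency of $\tilde A_2,\tilde B_2$ then follows from \eqref{eq:lin_ALSEs} together with Lemma~\ref{lemma:6} and Lemma~\ref{lemma:5}, exactly as in the first step, completing $\tilde{\boldsymbol{\theta}}_2\xrightarrow{a.s.}\boldsymbol{\theta}_2^{0}$. I expect the uniform control of the phase-mismatch term $(W(\boldsymbol{\vartheta}_1^{0})-W(\tilde{\boldsymbol{\vartheta}}_1))\boldsymbol{\phi}_1^{0}$ to be the main obstacle, since it is the single place where the bare consistency of $\tilde{\boldsymbol{\vartheta}}_1$ is insufficient and the precise first-step convergence rates are indispensable.
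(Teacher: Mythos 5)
Your proposal is correct and takes essentially the same route as the paper: both arguments hinge on the first-step convergence rates $\tilde{\alpha}_1=\alpha_1^{0}+o(M^{-1})$, $\tilde{\beta}_1=\beta_1^{0}+o(M^{-2})$, $\tilde{\gamma}_1=\gamma_1^{0}+o(N^{-1})$, $\tilde{\delta}_1=\delta_1^{0}+o(N^{-2})$ (Lemma~\ref{lemma:9}) to show that the subtracted first component agrees with the true one up to a negligible term --- the paper's equation~(\ref{eq:relation_est_true}) --- so that $y^{1}(m,n)$ is a genuine $(p-1)$-component chirp plus stationary noise plus $o(1)$, after which the Theorem~\ref{theorem:3} limsup--sup argument with Assumption~4 and Lemma~\ref{lemma:8} yields consistency of $\tilde{\boldsymbol{\vartheta}}_2$ and then of $\tilde{A}_2,\tilde{B}_2$. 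The only cosmetic difference is bookkeeping: you control the contamination $\xi$ through averaged bounds and Cauchy--Schwarz on the periodograms, whereas the paper observes that, given those rates, the phase-mismatch term is already uniformly $o(1)$ pointwise over $m\leqslant M$, $n\leqslant N$ (so your caution that it ``is not small pointwise'' is needed only under bare consistency, not once the rates are in hand).
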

\begin{proof}
See \nameref{appendix:C}.\\
\end{proof}
\justify
\vspace{-6mm}
The result obtained in the above theorem can be extended upto the $p$-th step. Thus for any $k$ $\leqslant$ $p$, the ALSEs obtained at the $k$-th step are strongly consistent.
\begin{theorem}\label{theorem:5}
If the assumptions 1, 3 and 4 are satisfied, and if $\tilde{A_k}$,  $\tilde{B_k}$, $\tilde{\alpha_k}$, $\tilde{\beta_k}$, $\tilde{\gamma_k}$ and $\tilde{\delta_k}$ are the estimators obtained at the $k$-th step, and k $>$ p then $\tilde{A_k}$ $\xrightarrow{a.s}$ 0 and $\tilde{B_k}$ $\xrightarrow{a.s}$ 0 as $\textmd{min}\{M, N\} \rightarrow \infty$.
\end{theorem}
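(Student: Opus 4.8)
The plan is to reduce the statement to the vanishing of the maximised periodogram, and then to prove that vanishing by induction on $k \ge p+1$. Write $u(m,n)=(m,m^2,n,n^2)^{T}$, so that the phase of a component with parameter $\boldsymbol{\vartheta}$ is $\boldsymbol{\vartheta}^{T}u(m,n)$, and let $Y^{k-1}(m,n)$ denote the $(m,n)$-th entry of the residual vector $Y^{k-1}$ fed into the $k$-th step. By~(\ref{eq:lin_ALSEs}) applied to $Y^{k-1}$, the amplitude estimates $\tilde{A}_k,\tilde{B}_k$ are the cosine and sine projections $\tfrac{2}{MN}\sum_{m,n}Y^{k-1}(m,n)\cos(\tilde{\boldsymbol{\vartheta}}_k^{T}u(m,n))$ and $\tfrac{2}{MN}\sum_{m,n}Y^{k-1}(m,n)\sin(\tilde{\boldsymbol{\vartheta}}_k^{T}u(m,n))$. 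Substituting these into the step-$k$ version of~(\ref{eq:per_1}) at its maximiser $\tilde{\boldsymbol{\vartheta}}_k$ gives the exact identity $\tfrac{1}{MN}I_k(\tilde{\boldsymbol{\vartheta}}_k)=\tfrac14(\tilde{A}_k^2+\tilde{B}_k^2)$. Since $\tilde{\boldsymbol{\vartheta}}_k$ maximises $I_k$, it then suffices to prove
\begin{equation*}
\sup_{\boldsymbol{\vartheta}}\frac{1}{MN}I_k(\boldsymbol{\vartheta})\longrightarrow 0\quad a.s.,
\end{equation*}
as this forces both $\tilde{A}_k\to 0$ and $\tilde{B}_k\to 0$.

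Next I would expand the residual. Since $Y=\sum_{l=1}^{p}W(\boldsymbol{\vartheta}_l^{0})(A_l^0,B_l^0)^{T}+X$ and, for $k>p$, the $k$-th step subtracts $k-1\ge p$ fitted components, entrywise
\begin{equation*}
Y^{k-1}(m,n)=X(m,n)+\sum_{l=1}^{p}D_l(m,n)-\sum_{j=p+1}^{k-1}\big[\tilde{A}_j\cos(\tilde{\boldsymbol{\vartheta}}_j^{T}u)+\tilde{B}_j\sin(\tilde{\boldsymbol{\vartheta}}_j^{T}u)\big],
\end{equation*}
where $D_l=A_l^0\cos((\boldsymbol{\vartheta}_l^{0})^{T}u)+B_l^0\sin((\boldsymbol{\vartheta}_l^{0})^{T}u)-\tilde{A}_l\cos(\tilde{\boldsymbol{\vartheta}}_l^{T}u)-\tilde{B}_l\sin(\tilde{\boldsymbol{\vartheta}}_l^{T}u)$ measures the gap between the $l$-th genuine component and its estimate. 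Splitting $\tfrac{1}{MN}\sum_{m,n}Y^{k-1}(m,n)\cos(\boldsymbol{\vartheta}^{T}u)$ (and its sine analogue) accordingly, the pure-noise piece $\tfrac{1}{MN}\sum_{m,n}X(m,n)\cos(\boldsymbol{\vartheta}^{T}u)$ tends to $0$ uniformly in $\boldsymbol{\vartheta}$ a.s.\ by Lemma~\ref{lemma:1}(g) with $s=t=0$, and the spurious-step piece is bounded uniformly in $\boldsymbol{\vartheta}$ by $\sum_{j=p+1}^{k-1}(|\tilde{A}_j|+|\tilde{B}_j|)$, which tends to $0$ by the induction hypothesis (there being only finitely many terms, as $k$ is fixed).

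The remaining piece $\tfrac{1}{MN}\sum_{m,n}\sum_{l=1}^{p}D_l(m,n)\cos(\boldsymbol{\vartheta}^{T}u)$ is the crux, and bounding it uniformly in $\boldsymbol{\vartheta}$ is the main obstacle. Mere consistency of $\tilde{\boldsymbol{\vartheta}}_l$ is insufficient, because the phase error $(\tilde{\boldsymbol{\vartheta}}_l-\boldsymbol{\vartheta}_l^{0})^{T}u(m,n)$ carries the growing terms $(\tilde{\beta}_l-\beta_l^0)m^2$ and $(\tilde{\delta}_l-\delta_l^0)n^2$, which need not be small merely because $\tilde{\beta}_l-\beta_l^0\to0$. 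The resolution is to invoke the convergence rates of the sequential ALSEs---the same as for the LSEs, encoded by $\mathbf{D}$ in Theorem~\ref{theorem:2}---which yield $M|\tilde{\alpha}_l-\alpha_l^0|$, $M^2|\tilde{\beta}_l-\beta_l^0|$, $N|\tilde{\gamma}_l-\gamma_l^0|$ and $N^2|\tilde{\delta}_l-\delta_l^0|$ all $\to0$ a.s., so that $\eta_l:=\sup_{m\le M,\,n\le N}|(\tilde{\boldsymbol{\vartheta}}_l-\boldsymbol{\vartheta}_l^{0})^{T}u(m,n)|\to0$ a.s. Using $|\cos(a+\varepsilon)-\cos a|\le|\varepsilon|$ and $|\cos(\boldsymbol{\vartheta}^{T}u)|\le1$, a term-by-term bound then gives
\begin{equation*}
\sup_{\boldsymbol{\vartheta}}\Big|\frac{1}{MN}\sum_{m,n}\sum_{l=1}^{p}D_l(m,n)\cos(\boldsymbol{\vartheta}^{T}u)\Big|\le\sum_{l=1}^{p}\big[|A_l^0-\tilde{A}_l|+|B_l^0-\tilde{B}_l|+(|\tilde{A}_l|+|\tilde{B}_l|)\eta_l\big],
\end{equation*}
which $\to0$ a.s.\ by the strong consistency of the first $p$ estimates and the boundedness of $\tilde{A}_l,\tilde{B}_l$. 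Collecting the three pieces (and their sine counterparts) establishes $\sup_{\boldsymbol{\vartheta}}\tfrac{1}{MN}I_k(\boldsymbol{\vartheta})\to0$ a.s.; the base case $k=p+1$ is this argument with the spurious sum empty, and the inductive step uses $\tilde{A}_j,\tilde{B}_j\to0$ for $p<j<k$. This completes the induction and hence the proof.
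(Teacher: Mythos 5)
Your proposal is correct and follows essentially the same route as the paper's own proof: in both, the step-$k$ residual is decomposed as noise plus the gap between the true first $p$ components and their fitted versions, the gap is killed using the a.s.\ rates $M(\tilde{\alpha}_l-\alpha_l^0)\to 0$, $M^2(\tilde{\beta}_l-\beta_l^0)\to 0$, $N(\tilde{\gamma}_l-\gamma_l^0)\to 0$, $N^2(\tilde{\delta}_l-\delta_l^0)\to 0$ (Lemma~\ref{lemma:9}, packaged in the paper as equation~(\ref{eq:relation_est_true})), and the noise projection is killed uniformly in the frequencies via Lemma~\ref{lemma:1}\,\textit{(g)}, which is exactly how the paper also handles the data-dependence of $\tilde{\boldsymbol{\vartheta}}_k$. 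Your two refinements---the identity $\frac{1}{MN}I_k(\tilde{\boldsymbol{\vartheta}}_k)=\frac{1}{4}\big(\tilde{A}_k^2+\tilde{B}_k^2\big)$ in place of the paper's direct evaluation of the amplitude projections, and the induction over the spurious steps $j=p+1,\dots,k-1$ with the explicit uniform phase bound $\eta_l$---merely write out carefully the general case $k>p+1$, which the paper proves explicitly only for $k=p+1$.
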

\begin{proof}
See \nameref{appendix:C}.\\
\end{proof}
\justify
\vspace{-4mm}
Next we derive the asymptotic distribution of the proposed estimators. In the following theorem, we state the results on the distribution of the sequential ALSEs.

\begin{theorem}\label{theorem:6}
If the assumptions, 1, 3 and 4 are satisfied, then 
\begin{equation*}
(\tilde{\boldsymbol{\theta}_1} - \boldsymbol{\theta}_1^0)\mathbf{D}^{-1} \xrightarrow{d} N_6(0, \sigma^2 c \boldsymbol{\Sigma}_1^{-1}) 
\end{equation*}
where $\mathbf{D}$ is the diagonal matrix as defined in Theorem~\ref{theorem:2} and  $c = \sum\limits_{j=-\infty}^{\infty}\sum\limits_{k=-\infty}^{\infty}a(j, k)^2$
$$\boldsymbol{\Sigma}_1^{-1} = \frac{2}{{A_1^0}^2 + {B_1^0}^2}\begin{bmatrix}
{A_1^0}^2 + 17 {B_1^0}^2 & -16 A_1^0B_1^0 & -36 B_1^0 & 30 B_1^0 & -36 B_1^0 & 30 B_1^0 \\
-16 A_1^0B_1^0 & 17 {A_1^0}^2 + {B_1^0}^2 & 36 A_1^0 & -30A_1^0 &  36 A_1^0 & -30A_1^0 \\
-36B_1^0 & 36A_1^0 & 192 & -180 & 0 & 0 \\
30B_1^0 & -30A_1^0 & -180 & 180 & 0 & 0 \\
-36B_1^0 & 36A_1^0 & 0 & 0 & 192 & -180 \\
30B_1^0 & -30A_1^0 & 0 & 0 & -180 & 180 \\ 
\end{bmatrix}$$
\end{theorem}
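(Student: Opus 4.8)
The plan is to treat the sequential ALSE $\tilde{\boldsymbol{\theta}_1}$ as an interior maximizer and derive its limiting law by the standard Taylor-expansion (sandwich) argument, using the orthogonality lemmas to decouple the first component from the remaining $p-1$ components. Because Section~\ref{sec:4} already gives the asymptotic equivalence $\frac{1}{MN}Q(\hat A(\boldsymbol{\vartheta}),\hat B(\boldsymbol{\vartheta}),\boldsymbol{\vartheta}) = \frac{1}{MN}Y^{T}Y - I(\boldsymbol{\vartheta}) + o(1)$, it is cleanest to work with the single-chirp residual sum of squares $Q_1(\boldsymbol{\theta})$ fitted to the full data $Y$ rather than directly with $I_1$. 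By Theorem~\ref{theorem:3}, $\tilde{\boldsymbol{\theta}_1}\xrightarrow{a.s.}\boldsymbol{\theta}_1^0$, and since $\boldsymbol{\theta}_1^0$ is an interior point (Assumption 3), eventually $\tilde{\boldsymbol{\theta}_1}$ is an interior critical point, so $Q_1'(\tilde{\boldsymbol{\theta}_1})=\mathbf{0}$. A one-term Taylor expansion of each coordinate of the gradient about $\boldsymbol{\theta}_1^0$ gives $\mathbf{0}=Q_1'(\boldsymbol{\theta}_1^0)+(\tilde{\boldsymbol{\theta}_1}-\boldsymbol{\theta}_1^0)Q_1''(\bar{\boldsymbol{\theta}})$ with $\bar{\boldsymbol{\theta}}$ on the segment joining $\tilde{\boldsymbol{\theta}_1}$ and $\boldsymbol{\theta}_1^0$, whence $(\tilde{\boldsymbol{\theta}_1}-\boldsymbol{\theta}_1^0)\mathbf{D}^{-1}=-[Q_1'(\boldsymbol{\theta}_1^0)\mathbf{D}]\,[\mathbf{D}\,Q_1''(\bar{\boldsymbol{\theta}})\,\mathbf{D}]^{-1}$, the matrix $\mathbf{D}$ being inserted precisely so that both bracketed factors have non-degenerate limits.

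Next I would compute the limit of the scaled Hessian $\mathbf{D}\,Q_1''(\bar{\boldsymbol{\theta}})\,\mathbf{D}$. By the a.s.\ consistency $\bar{\boldsymbol{\theta}}\to\boldsymbol{\theta}_1^0$ and the continuity of its entries, $\bar{\boldsymbol{\theta}}$ may be replaced by $\boldsymbol{\theta}_1^0$. The entries are double sums of the form $\frac{1}{M^{s+1}N^{t+1}}\sum_{m,n}m^s n^t\{\cos^2,\sin^2,\cos\cdot\sin\}(\cdots)$; Lemma~\ref{lemma:1}(e)--(f) supply the $\cos^2/\sin^2$ limits $\frac{1}{2(s+1)(t+1)}$, while the $\cos\cdot\sin$ pieces and the purely oscillatory terms vanish. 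Crucially, the cross-terms arising from the other $p-1$ components carry distinct chirp frequencies and are annihilated by Lemma~\ref{lemma:5}, so they are $o(1)$ after scaling. This yields $\mathbf{D}\,Q_1''(\bar{\boldsymbol{\theta}})\,\mathbf{D}\to 2\boldsymbol{\Sigma}_1$ for a deterministic positive-definite $\boldsymbol{\Sigma}_1$ whose inverse is the displayed $6\times6$ matrix, and it is exactly here that the multi-component estimator is reduced to the one-component case.

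The crux is the central limit theorem for the scaled gradient $Q_1'(\boldsymbol{\theta}_1^0)\mathbf{D}$. Evaluated at the truth, $y(m,n)-\big(A_1^0\cos(\cdots)+B_1^0\sin(\cdots)\big)$ equals the remaining components plus $X(m,n)$; the derivative of $Q_1$ then splits into first-component$\times$other-component terms, which vanish by Lemma~\ref{lemma:5}, and a term $\sum_{m,n}c_{m,n}X(m,n)$ with trigonometric/polynomial weights $c_{m,n}$. Substituting the linear-process representation of Assumption 1 and invoking a central limit theorem for weighted sums of a stationary linear process (using $\sum|a(j,k)|<\infty$) gives joint asymptotic normality with mean zero. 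The delicate point is the covariance: because the chirp phase sweeps through all frequencies, the error autocovariances average out, so the limiting covariance collapses to $4\sigma^2 c\,\boldsymbol{\Sigma}_1$ with $c=\sum_{j,k}a(j,k)^2$, rather than involving the spectral density at a single fixed frequency; controlling the off-diagonal correlation contributions leans on Lemmas~\ref{lemma:4} and~\ref{lemma:5}.

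Finally, Slutsky's theorem gives $(\tilde{\boldsymbol{\theta}_1}-\boldsymbol{\theta}_1^0)\mathbf{D}^{-1}\xrightarrow{d}N_6\big(\mathbf{0},(2\boldsymbol{\Sigma}_1)^{-1}(4\sigma^2 c\,\boldsymbol{\Sigma}_1)(2\boldsymbol{\Sigma}_1)^{-1}\big)=N_6(\mathbf{0},\sigma^2 c\,\boldsymbol{\Sigma}_1^{-1})$, and a direct matrix inversion verifies that $\boldsymbol{\Sigma}_1^{-1}$ coincides with the explicit matrix in the statement. I expect the main obstacle to be the covariance computation of the scaled gradient, namely pinning down the $\sigma^2 c$ factor and showing the cross-frequency correlations are negligible, with the cross-component decoupling via Lemma~\ref{lemma:5} a close second.
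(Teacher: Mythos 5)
There is one genuine gap, and it sits at the very first step of your argument: the claim that $Q_1'(\tilde{\boldsymbol{\theta}}_1)=\mathbf{0}$. The sequential ALSE is \emph{not} a stationary point of the residual sum of squares $Q_1$ — the minimizer of $Q_1$ is the LSE, a different estimator. What $\tilde{\boldsymbol{\theta}}_1$ actually satisfies exactly is the first-order condition for the periodogram-type criterion: $\tilde{\boldsymbol{\vartheta}}_1$ maximizes $I_1(\boldsymbol{\vartheta})$ and $(\tilde{A}_1,\tilde{B}_1)$ are defined by~(\ref{eq:lin_ALSEs}). The function-value equivalence $\frac{1}{MN}Q = \frac{1}{MN}Y^TY - I(\boldsymbol{\vartheta}) + o(1)$ from Section~\ref{sec:4} that you invoke does not license differentiating through the $o(1)$ term, and in particular gives no control of $Q_1'$ at the random point $\tilde{\boldsymbol{\theta}}_1$ at the scales $M^{3/2}N^{1/2},\dots,M^{1/2}N^{5/2}$ needed after multiplication by $\mathbf{D}$; the paper establishes a derivative-level comparison (its equation relating $\mathbf{Q}'\mathbf{D}$ and $\mathbf{J}'\mathbf{D}$) only at the fixed true point $\boldsymbol{\theta}^0$, with explicit little-$o$ rates, precisely because such a transfer is not automatic. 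Expanding $Q_1$ around $\boldsymbol{\theta}_1^0$ as you propose would yield the asymptotic law of the sequential \emph{LSE}, and since the paper states no multicomponent analogue of Theorem~\ref{theorem:2}, you cannot close the loop by citing ALSE--LSE equivalence either.

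The paper's device, which repairs exactly this point, is the auxiliary criterion
$J_1(\boldsymbol{\theta}) = 2\sum_{n,m} y(m,n)\bigl(A\cos(\alpha m+\beta m^2+\gamma n+\delta n^2)+B\sin(\cdot)\bigr) - MN\frac{A^2+B^2}{2}$:
maximizing $J_1$ over $(A,B)$ for fixed $\boldsymbol{\vartheta}$ returns precisely the ALSE formulas~(\ref{eq:lin_ALSEs}), and the resulting profile is (a constant multiple of) $I_1(\boldsymbol{\vartheta})$, so $\mathbf{J}_1'(\tilde{\boldsymbol{\theta}}_1)=\mathbf{0}$ holds \emph{exactly} and the Taylor expansion is performed on $J_1$, not $Q_1$. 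Everything downstream in your outline then coincides with the paper's proof: $\mathbf{D}\mathbf{J}_1''(\bar{\boldsymbol{\theta}}_1)\mathbf{D}\to-2\boldsymbol{\Sigma}_1$ via Lemma~\ref{lemma:1}\textit{(c)--(g)} with the cross-component terms killed by Lemma~\ref{lemma:5}; the gradient $\mathbf{J}_1'(\boldsymbol{\theta}_1^0)\mathbf{D}$ reduced to pure-noise weighted sums by Lemma~\ref{lemma:5} and handled by the CLT for stationary linear processes (Fuller~\cite{2009}), giving covariance $4c\sigma^2\boldsymbol{\Sigma}_1$; and the sandwich $(2\boldsymbol{\Sigma}_1)^{-1}(4c\sigma^2\boldsymbol{\Sigma}_1)(2\boldsymbol{\Sigma}_1)^{-1}=c\sigma^2\boldsymbol{\Sigma}_1^{-1}$, whose arithmetic you have right. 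So the repair is local — replace $Q_1$ by $J_1$ in your first paragraph — but as written the opening step is false for the estimator the theorem is about.
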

\begin{proof}
See \nameref{appendix:D}.
\end{proof}
\justify
\vspace{-5mm}
The above result holds true for all $1 \leqslant k \leqslant p$ and is stated in the following theorem.
\begin{theorem}\label{theorem:7}
If the assumptions, 1, 3 and 4 are satisfied, then 
\begin{equation*}
(\tilde{\boldsymbol{\theta}_k} - \boldsymbol{\theta}_k^0)\mathbf{D}^{-1} \xrightarrow{d} N_6(0, \sigma^2 c \boldsymbol{\Sigma}_k^{-1}),
\end{equation*}
where $\boldsymbol{\Sigma}_k^{-1}$ can be obtained by replacing $A_1^0$ by $A_k^0$ and $B_1^0$ by $B_k^0$ in $\boldsymbol{\Sigma}_1^{-1}$ defined above. 
\end{theorem}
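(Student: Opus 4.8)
The plan is to prove Theorem~\ref{theorem:7} by induction on $k$, taking Theorem~\ref{theorem:6} as the base case $k=1$ and reducing each inductive step to the single-component analysis carried out there. The central observation is that at the $k$-th stage of the sequential algorithm the residual vector $Y^{k-1}$ behaves asymptotically like a single-component chirp with parameter $\boldsymbol{\theta}_k^0$ contaminated only by the stationary error $X$. Once this reduction is justified, the derivation of Theorem~\ref{theorem:6} carries over verbatim with $A_1^0,B_1^0$ replaced by $A_k^0,B_k^0$, which is exactly the definition of $\boldsymbol{\Sigma}_k^{-1}$.

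First I would decompose the residual data using $Y=\sum_{j=1}^{p} W(\boldsymbol{\vartheta}_j^0)\begin{pmatrix} A_j^0 \\ B_j^0 \end{pmatrix}+X$ and the algorithm's subtraction rule, obtaining
\[
Y^{k-1} = \sum_{j=k}^{p} W(\boldsymbol{\vartheta}_j^0)\begin{pmatrix} A_j^0 \\ B_j^0 \end{pmatrix} + X + \sum_{j=1}^{k-1}\left[ W(\boldsymbol{\vartheta}_j^0)\begin{pmatrix} A_j^0 \\ B_j^0 \end{pmatrix} - W(\tilde{\boldsymbol{\vartheta}}_j)\begin{pmatrix} \tilde{A}_j \\ \tilde{B}_j \end{pmatrix} \right].
\]
Differentiating $I_k$ and evaluating the normalized score $\mathbf{D}\,I_k'(\boldsymbol{\vartheta}_k^0)$, the contribution of the later components $j=k+1,\dots,p$ vanishes: each such term is a product of two chirps whose $(\alpha,\beta,\gamma,\delta)$ differ (distinct by Assumption~3), so Lemma~\ref{lemma:5} drives the corresponding normalized sums to zero. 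The contribution of $X$ produces, exactly as in Theorem~\ref{theorem:6}, the asymptotically normal score governed by the error structure and the constant $c$, while the limiting normalized Hessian $\mathbf{D}^{-1} I_k''(\boldsymbol{\vartheta}_k^0)\mathbf{D}^{-1}$ converges via Lemma~\ref{lemma:1} to $\boldsymbol{\Sigma}_k$, the first-component matrix with amplitudes $A_k^0,B_k^0$.

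Next I would control the discrepancy term. By the consistency results of Theorems~\ref{theorem:3}--\ref{theorem:4} together with the rates encoded in $\mathbf{D}$, each $\tilde{\boldsymbol{\vartheta}}_j$ approaches $\boldsymbol{\vartheta}_j^0$ quickly enough that a mean-value expansion of $W(\tilde{\boldsymbol{\vartheta}}_j)\begin{pmatrix} \tilde{A}_j \\ \tilde{B}_j \end{pmatrix}-W(\boldsymbol{\vartheta}_j^0)\begin{pmatrix} A_j^0 \\ B_j^0 \end{pmatrix}$ yields, after normalization by $\mathbf{D}$ and projection onto the $k$-th score, only $o_p(1)$ terms; the leading cross-terms between components with distinct frequency and frequency-rate parameters are again annihilated by Lemma~\ref{lemma:5}. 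Hence the normalized score at $\boldsymbol{\vartheta}_k^0$ shares the asymptotic law of the one-component model, and a standard Taylor expansion of the estimating equation $I_k'(\tilde{\boldsymbol{\vartheta}}_k)=0$ about $\boldsymbol{\vartheta}_k^0$ delivers the stated $N_6(0,\sigma^2 c\,\boldsymbol{\Sigma}_k^{-1})$ limit.

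The hard part will be this second step: proving rigorously that replacing the true parameters of the first $k-1$ subtracted components by their estimates perturbs the normalized score by only $o_p(1)$. This requires combining the precise convergence rates from the consistency proofs with a careful expansion of the subtracted chirp regressors, and checking that every resulting cross-term between components with distinct $(\alpha,\beta,\gamma,\delta)$ is suppressed by Lemma~\ref{lemma:5} (and the error-weighted sums by Lemma~\ref{lemma:1}(g)). Once this propagation-of-error estimate is established, the reduction to Theorem~\ref{theorem:6} is immediate and the induction closes.
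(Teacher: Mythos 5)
Your plan is correct and is essentially the paper's own argument: the paper proves Theorem~\ref{theorem:7} by stating that one proceeds exactly as in Theorem~\ref{theorem:6} applied to the residual data, whose reduction to a single-component chirp plus $o(1)$ is precisely the relation~(\ref{eq:relation_est_true}) you derive from the consistency rates, with cross-component terms annihilated by Lemma~\ref{lemma:5} and the error-weighted sums handled by Lemma~\ref{lemma:1}\textit{(g)} and the central limit theorem. The only cosmetic difference is that the paper carries out the Taylor expansion on the six-parameter criterion $J_k(\boldsymbol{\theta})$, treating the amplitudes jointly with the frequencies, rather than on the four-dimensional score $I_k'(\boldsymbol{\vartheta})$ alone as in your sketch, but the substance is identical.
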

\begin{proof}
This proof can be obtained by proceeding exactly in the same manner as in the proof of Theorem~\ref{theorem:6}.\\
\end{proof}
\vspace{-10mm}
\section{Simulation Studies}\label{sec:7}
\vspace{-5mm}
\subsection{Simulation results for the one component model}
We perform numerical simulations on model~(\ref{eq:model_1}) with the following parameters:
$$A^0 = 2,\ B^0 = 3,\ \alpha^0 = 1.5,\ \beta^0 = 0.5,\ \gamma^0 = 2.5\ \textmd{and}\ \delta^0 = 0.75.$$
The following error structures are used to generate the data:
\begin{align}
& 1. \quad X(m, n) = \epsilon(m, n).\label{eq:error_normal}\\
& 2. \quad X(m, n) = \epsilon(m, n) + 0.5  \epsilon(m, n-1) + 0.4 \epsilon(m-1, n) + 0.3 \epsilon(m-1, n-1).\label{eq:error_MA}
\end{align}
\justify
Here $\epsilon(m, n)$ $\sim$ $N(0, \sigma^2)$. For simulations we consider different values of $\sigma$ and different values of $M$ and $N$ as can be seen in the tables.
We estimate the parameters both by least  squares estimation method and approximate least squares estimation method. These estimates are obtained 1000 times each and averages, biases and MSEs are reported. We also compute the asymptotic variances to compare with the corresponding MSEs.
\noindent
From the tables above, it is observed that as the error variance increases, the MSEs also increase for both the LSEs and the ALSEs. As the sample size increases, one can see that the estimates become closer to the corresponding true values, that is, the biases become small. Also, the MSEs decrease as the sample size, $M$ and $N$ increase, and the order of the MSEs of both the estimators is almost equivalent to the order of the asymptotic variances. Hence, one may conclude that they are well matched. The MSEs of the ALSEs get close to those of LSEs as $M$ and $N$ increase and hence to the theoretical asymptotic variances of the LSEs, showing that they are asymptotically equivalent.
\begin{table}[H]
\centering
\resizebox{0.99\textwidth}{!}{\begin{tabular}{||c|c||c|c|c|c||c|c|c|c||}
\hline
\hline
\multicolumn{2}{||c||}{Parameters} & $\alpha$ & $\beta$ & $\gamma$ & $\delta$ & $\alpha$ & $\beta$ & $\gamma$ & $\delta$ \\
\hline  \multicolumn{2}{||c||}{True values} & 1.5 & 0.5 & 2.5 & 0.75 & 1.5 & 0.5 & 2.5 & 0.75 \\
\hline $\sigma$ & & \multicolumn{4}{c||}{ALSEs} & \multicolumn{4}{c||}{LSEs} \\
\hline 0.1 & Avg & 1.4910 & 0.5005 & 2.5194 & 0.7492 & 1.5000 & 0.4999 & 2.5000 & 0.7499\\
\hline & Bias & -0.0090 & 0.0005 & 0.0194 & -0.0008 & 3.85E-05 & -1.66E-06 & 1.22E-05 & -3.97E-07 \\
\hline  & MSE & 8.21E-05 & 2.83E-07 & 3.79E-04 & 5.62E-07 & 8.29E-07 & 1.13E-09 & 7.18E-07 & 9.90E-10 \\
\hline  & AVar & 7.56E-07 & 1.13E-09 & 7.56E-07 & 1.13E-09 & 7.56E-07 & 1.13E-09 & 7.56E-07 & 1.13E-09\\
\hline
\multicolumn{2}{||c||}{} & \multicolumn{4}{c||}{ALSEs} & \multicolumn{4}{c||}{LSEs} \\
\hline 
0.5 & Avg & 1.4912 & 0.5005 & 2.5196 & 0.7492 & 1.5000 & 0.5000 & 2.5003 & 0.7499\\\hline  & Bias & -0.0088 & 0.0005 & 0.0196 & -0.0008 & 3.01E-05 & 1.29E-06 & 0.0003 & -9.60E-06  \\\hline  & MSE & 9.78E-05 & 3.08E-07 & 4.10E-04 & 6.03E-07 & 2.03E-05 & 2.76E-08  & 2.10E-05 & 2.96E-08 \\\hline  & AVar & 1.89E-05 & 2.48E-08 & 1.89E-05 & 2.48E-08 & 1.89E-05 & 2.48E-08 & 1.89E-05 & 2.48E-08 \\
\hline
\multicolumn{2}{||c||}{} & \multicolumn{4}{c||}{ALSEs} & \multicolumn{4}{c||}{LSEs}\\
\hline 1 & Avg & 1.4911 & 0.5005 & 2.5184 & 0.7492 & 1.5001 & 0.4999 & 2.4992 & 0.7500 \\
\hline & Bias & -0.0089 & 0.0005  & 0.0184 & -0.0008 & 0.0001 & -1.15E-06 & -0.0007 & 2.44E-05 \\
\hline  & MSE & 1.52E-04 & 3.87E-07 & 4.21E-04 & 6.23E-07 & 8.64E-05 & 1.18E-07 & 7.82E-05 & 1.09E-07\\
\hline  & AVar & 7.56E-05 & 1.13E-07 & 7.56E-05 & 1.13E-07 & 7.56E-05 & 1.13E-07 & 7.56E-05 & 1.13E-07\\ 
\hline 
\hline
\end{tabular}}
\caption{Estimates of the parameters of model~(\ref{eq:model_1}) when errors are i.i.d. Gaussian random variables as defined in~(\ref{eq:error_normal}) and M =  N = 25}
\label{table:1}
\end{table}

\begin{table}[H]
\centering
\resizebox{0.99\textwidth}{!}{\begin{tabular}{||c|c||c|c|c|c||c|c|c|c||}
\hline
\hline
\multicolumn{2}{||c||}{Parameters} & $\alpha$ & $\beta$ & $\gamma$ & $\delta$ & $\alpha$ & $\beta$ & $\gamma$ & $\delta$ \\
\hline  \multicolumn{2}{||c||}{True values} & 1.5 & 0.5 & 2.5 & 0.75 & 1.5 & 0.5 & 2.5 & 0.75 \\
\hline $\sigma$ & & \multicolumn{4}{c||}{ALSEs} & \multicolumn{4}{c||}{LSEs} \\
\hline 0.1 & Avg & 1.5039 & 0.4999 & 2.4997 & 0.7500 & 1.5000 & 0.4999 & 2.5000 & 0.7499\\
\hline & Bias & 0.0039 & -9.22E-05 & -0.0002 & 1.01E-05 & 5.12E-06 & -6.96E-08 & 2.97E-06 & -2.76E-08 \\
\hline  & MSE & 1.95E-05 & 9.75E-09 & 3.22E-07 & 2.03E-10 & 2.73E-08 & 1.04E-11 & 3.07E-08 & 1.14E-11 \\
\hline  & AVar & 4.73E-08 & 1.77E-11 & 4.73E-08 & 1.77E-11 & 4.73E-08 & 1.77E-11 & 4.73E-08 & 1.77E-11 \\
\hline
\multicolumn{2}{||c||}{} & \multicolumn{4}{c||}{ALSEs} & \multicolumn{4}{c||}{LSEs} \\
\hline 
0.5 & Avg & 1.5041 & 0.4999 & 2.4997 & 0.7500 & 1.5000 & 0.4999 & 2.5000 & 0.7499  \\
\hline  & Bias & 0.0041 & -9.53E-05 & -0.0002 & 9.99E-06 & 2.34E-05 & -3.01E-07 & 9.23E-06 & -2.12E-07 \\
\hline  & MSE & 2.11E-05 & 1.04E-08 & 1.67E-06 & 6.70E-10 & 9.53E-07 & 3.44E-10 & 8.90E-07 & 3.33E-10 \\
\hline  & AVar & 1.18E-06 & 4.43E-10 & 1.18E-06 & 4.43E-10 & 1.18E-06 & 4.43E-10 & 1.18E-06 & 4.43E-10 \\
\hline
\multicolumn{2}{||c||}{} & \multicolumn{4}{c||}{ALSEs} & \multicolumn{4}{c||}{LSEs}\\
\hline 1 & Avg & 1.504 & 0.4999 & 2.4997 & 0.7500 & 1.5000 & 0.4999 & 2.5000 & 0.7499 \\
\hline & Bias & 0.0040 & -9.40E-05 & -0.0002 & 1.01E-05& 8.76E-05 & -1.28E-06 & 1.66E-05 & -9.05E-08 \\
\hline  & MSE & 2.42E-05 & 1.14E-08 & 5.00E-06 & 1.87E-09  & 4.24E-06 & 1.53E-09 & 4.01E-06 & 1.45E-09\\
\hline  & AVar & 4.73E-06 & 1.77E-09 & 4.73E-06 & 1.77E-09 & 4.73E-06 & 1.77E-09 & 4.73E-06 & 1.77E-09\\ 
\hline
\hline 
\end{tabular}}
\caption{Estimates of the parameters of model~(\ref{eq:model_1}) when errors are i.i.d. Gaussian random variables as defined in~(\ref{eq:error_normal}) and M =  N = 50}
\label{table:2}
\end{table}
\begin{table}[H]
\centering
\resizebox{0.99\textwidth}{!}{\begin{tabular}{||c|c||c|c|c|c||c|c|c|c||}
\hline
\hline
\multicolumn{2}{||c||}{Parameters} & $\alpha$ & $\beta$ & $\gamma$ & $\delta$ & $\alpha$ & $\beta$ & $\gamma$ & $\delta$ \\
\hline  \multicolumn{2}{||c||}{True values} & 1.5 & 0.5 & 2.5 & 0.75 & 1.5 & 0.5 & 2.5 & 0.75 \\
\hline $\sigma$ & & \multicolumn{4}{c||}{ALSEs} & \multicolumn{4}{c||}{LSEs} \\
\hline 0.1 & Avg & 1.5005 & 0.4999 & 2.4997 & 0.7500 & 1.4999 & 0.5 & 2.5000 & 0.7499\\
\hline & Bias & 0.0005 & -7.67E-06 & -0.0003 & 2.02E-06 & -4.78E-07 & 3.90E-09 & 3.55E-07 & -5.54E-09  \\
\hline  & MSE & 3.15E-07 & 6.20E-11 & 1.69E-07 & 1.68E-11 & 8.29E-09 & 1.33E-12 & 7.29E-10 & 2.04E-13 \\
\hline  & AVar & 9.34E-09 & 1.56E-12 & 9.34E-09 & 1.56E-12 & 9.34E-09 & 1.56E-12 & 9.34E-09 & 1.56E-12\\
\hline
\multicolumn{2}{||c||}{} & \multicolumn{4}{c||}{ALSEs} & \multicolumn{4}{c||}{LSEs} \\
\hline 
0.5 & Avg & 1.5003 & 0.4999 & 2.4996 & 0.7500  & 1.5000 & 0.4999 & 2.5000 & 0.7499\\
\hline  & Bias & 0.0003 & -5.85E-06 & -0.0004 & 2.40E-06 & 4.90E-06 & -1.40E-07 & 4.41E06 & -1.60E-08 \\
\hline  & MSE & 3.80E-07 & 7.20E-11 & 3.14E-07 & 4.12E-11  & 1.55E-07 & 2.62E-11 & 1.07E-07 & 1.88E-11 \\
\hline  & AVar & 2.33E-07 & 3.89E-11 & 2.33E-07 & 3.89E-11 & 2.33E-07 & 3.89E-11 & 2.33E-07 & 3.89E-11 \\
\hline
\multicolumn{2}{||c||}{} & \multicolumn{4}{c||}{ALSEs} & \multicolumn{4}{c||}{LSEs}\\
\hline 1 & Avg & 1.5004 & 0.4999 & 2.4995 & 0.7500  & 1.5000 & 0.4999 & 2.4999 & 0.7500 \\
\hline & Bias & 0.0004 & -6.70E-06  & -0.0005 & 3.89E-06 & 4.90E-05 & -6.11E-07 & -1.45E-05 & 5.86E-08\\
\hline  & MSE & 1.01E-06 & 1.73E-10 & 9.37E-07 & 1.38E-10 & 7.11E-07 & 1.17E-10 & 5.98E-07 & 9.97E-11\\
\hline  & AVar & 9.34E-07 & 1.56E-10 & 9.34E-07 & 1.56E-10 & 9.34E-07 & 1.56E-10 & 9.34E-07 & 1.56E-10 \\ 
\hline 
\hline
\end{tabular}}
\caption{Estimates of the parameters of model~(\ref{eq:model_1}) when errors are i.i.d. Gaussian random variables as defined in~(\ref{eq:error_normal}) and M =  N = 75}
\label{table:3}
\end{table}

\begin{table}[H]
\centering
\resizebox{0.99\textwidth}{!}{\begin{tabular}{||c|c||c|c|c|c||c|c|c|c||}
\hline
\hline
\multicolumn{2}{||c||}{Parameters} & $\alpha$ & $\beta$ & $\gamma$ & $\delta$ & $\alpha$ & $\beta$ & $\gamma$ & $\delta$ \\
\hline  \multicolumn{2}{||c||}{True values} & 1.5 & 0.5 & 2.5 & 0.75 & 1.5 & 0.5 & 2.5 & 0.75 \\
\hline $\sigma$ & & \multicolumn{4}{c||}{ALSEs} & \multicolumn{4}{c||}{LSEs} \\
\hline 0.1 & Avg & 1.4999 & 0.5000 & 2.4999 & 0.7500 & 1.5000 & 0.4999 & 2.5000 & 0.7500\\
\hline & Bias & -4.19E-05 & 1.99E-06 & -4.30E-05 & 3.65E-08 & 2.55E-06 & -3.13E-08 & 3.12E-07 & -5.88E-10 \\
\hline  & MSE  & 1.60E-08 & 5.19E-12 & 2.18E-08 & 1.78E-12& 5.38E-10 & 5.93E-14 & 7.86E-10 & 8.54E-14  \\
\hline  & AVar & 2.95E-09 & 2.77E-13 & 2.95E-09 & 2.77E-13 & 2.95E-09 & 2.77E-13 & 2.95E-09 & 2.77E-13\\
\hline
\multicolumn{2}{||c||}{} & \multicolumn{4}{c||}{ALSEs} & \multicolumn{4}{c||}{LSEs} \\
\hline 
0.5 & Avg & 1.4998 & 0.5000 & 2.4998 & 0.7500 & 1.5000 & 0.4999 & 2.5000 & 0.7500\\
\hline  & Bias & -0.0002 & 2.77E-06 & -0.0002 & 7.46E-07 & 4.96E-06 & -4.93E-08 & 1.34E-06 & 2.74E-08 \\
\hline  & MSE & 8.14E-08 & 1.38E-11 & 9.44E-08 & 8.14E-12  & 3.83E-08 & 3.75E-12 & 3.64E-08 & 3.66E-12\\
\hline  & AVar & 7.38E-08 & 6.92E-12 & 7.38E-08 & 6.92E-12 & 7.38E-08 & 6.92E-12 & 7.38E-08 & 6.92E-12 \\
\hline
\multicolumn{2}{||c||}{} & \multicolumn{4}{c||}{ALSEs} & \multicolumn{4}{c||}{LSEs}\\
\hline 1 & Avg  & 1.4997 & 0.5000 & 2.4997 & 0.7500 & 1.5000 & 0.4999 & 2.5000 & 0.7499 \\
\hline & Bias  & -0.0003 & 3.60E-06 & -0.0002 & 1.43E-06 & 9.37E-07 & -2.97E-08 & 2.10E-06 & -8.23E-08\\
\hline  & MSE & 2.35E-07 & 3.09E-11 & 2.71E-07 & 2.35E-11 & 1.60E-07 & 1.57E-11 & 1.91E-07 & 1.79E-11\\
\hline  & AVar & 2.95E-07 & 2.77E-11 & 2.95E-07 & 2.77E-11 & 2.95E-07 & 2.77E-11 & 2.95E-07 & 2.77E-11 \\ 
\hline 
\hline
\end{tabular}}
\caption{Estimates of the parameters of model~(\ref{eq:model_1}) when errors are i.i.d. Gaussian random variables as defined in~(\ref{eq:error_normal}) and M =  N = 100}
\label{table:4}
\end{table}

\begin{table}[H]
\centering
\resizebox{0.99\textwidth}{!}{\begin{tabular}{||c|c||c|c|c|c||c|c|c|c||}
\hline
\hline
\multicolumn{2}{||c||}{Parameters} & $\alpha$ & $\beta$ & $\gamma$ & $\delta$ & $\alpha$ & $\beta$ & $\gamma$ & $\delta$ \\
\hline  \multicolumn{2}{||c||}{True values} & 1.5 & 0.5 & 2.5 & 0.75 & 1.5 & 0.5 & 2.5 & 0.75 \\
\hline $\sigma$ & & \multicolumn{4}{c||}{ALSEs} & \multicolumn{4}{c||}{LSEs} \\
\hline 0.1 & Avg & 1.4911 & 0.5005 & 2.5193 & 0.7492& 1.4999 & 0.5000 & 2.5000 & 0.7499 \\
\hline & Bias  & -0.0089 & 0.0005 & 0.0193 & -0.0008 & -5.15E-05 & 1.81E-06 & 3.05E-05 & -7.46E-07\\
\hline  & MSE & 8.28E-05 & 2.84E-07 & 3.78E-04 & 5.60E-07 & 1.13E-06 & 1.70E-09 & 1.12E-06 & 1.57E-09\\
\hline  & AVar & 1.13E-06 & 1.70E-09 & 1.13E-06 & 1.70E-09 & 1.13E-06 & 1.70E-09 & 1.13E-06 & 1.70E-09\\
\hline
\multicolumn{2}{||c||}{} & \multicolumn{4}{c||}{ALSEs} & \multicolumn{4}{c||}{LSEs} \\
\hline 
0.5 & Avg  & 1.4910 & 0.5005 & 2.5192 & 0.7492 & 1.4998 & 0.5000 & 2.5000 & 0.7500\\
\hline  & Bias  & -0.0090 & 0.0005 & 0.0192 & -0.0007  & -0.0002 & 6.45E-06 & 2.31E-06 & 1.59E-06 \\
\hline  & MSE & 1.09E-04 & 3.29E-07 & 4.03E-04 & 5.93E-07 & 3.13E-05 & 4.60E-08 & 2.87E-05 & 4.00E-08\\
\hline  & AVar & 2.84E-05 & 4.25E-08 & 2.84E-05 & 4.25E-08 & 2.84E-05 & 4.25E-08 & 2.84E-05 & 4.25E-08 \\
\hline
\multicolumn{2}{||c||}{} & \multicolumn{4}{c||}{ALSEs} & \multicolumn{4}{c||}{LSEs}\\
\hline 1 & Avg & 1.4910 & 0.5005 & 2.5195 & 0.7492 & 1.4997 & 0.5000 & 2.5002 & 0.7499 \\
\hline & Bias & -0.0090 & 0.0005 & 0.0195 & -0.0008& -0.0003 & 8.25E-06 & 0.0002 & -6.10E-06\\
\hline  & MSE & 1.91E-04 & 4.57E-07 & 5.04E-04 & 7.30E-07 & 1.31E-04 & 1.94E-07 & 1.24E-04 & 1.77E-07 \\
\hline  & AVar & 1.13E-04 & 1.70E-07 & 1.13E-04 & 1.70E-07 & 1.13E-04 & 1.70E-07 & 1.13E-04 & 1.70E-07\\ 
\hline 
\hline
\end{tabular}}
\caption{Estimates of the parameters of model~(\ref{eq:model_1}) when errors are stationary random variables as defined in~(\ref{eq:error_MA}) and M =  N = 25}
\label{table:5}
\end{table}

\begin{table}[H]
\centering
\resizebox{0.99\textwidth}{!}{\begin{tabular}{||c|c||c|c|c|c||c|c|c|c||}
\hline
\hline
\multicolumn{2}{||c||}{Parameters} & $\alpha$ & $\beta$ & $\gamma$ & $\delta$ & $\alpha$ & $\beta$ & $\gamma$ & $\delta$ \\
\hline  \multicolumn{2}{||c||}{True values} & 1.5 & 0.5 & 2.5 & 0.75 & 1.5 & 0.5 & 2.5 & 0.75 \\
\hline $\sigma$ & & \multicolumn{4}{c||}{ALSEs} & \multicolumn{4}{c||}{LSEs} \\
\hline 0.1 & Avg & 1.5039 & 0.4999 & 2.4997 & 0.7500 & 1.5000 & 0.4999 & 2.5000 & 0.7499\\
\hline & Bias  & 0.0039 & -9.19E-05 & -0.0003 & 1.05E-05 & 1.26E-05 & -2.32E-07 & 3.69E-06 & -7.66E-08 \\
\hline  & MSE & 1.94E-05 & 9.71E-09 & 3.99E-07 & 2.32E-10 & 3.92E-08 & 1.54E-11 & 4.35E-08 & 1.60E-11 \\
\hline  & AVar  & 7.09E-08 & 2.66E-11 & 7.09E-08 & 2.66E-11 & 7.09E-08 & 2.66E-11 & 7.09E-08 & 2.66E-11\\
\hline
\multicolumn{2}{||c||}{} & \multicolumn{4}{c||}{ALSEs} & \multicolumn{4}{c||}{LSEs} \\
\hline 
0.5 & Avg & 1.5042 & 0.4999 & 2.4998 & 0.7500 & 1.5000 & 0.4999 & 2.5000 & 0.7499\\
\hline  & Bias & 0.0042 & -9.70E-05 & -0.0002 & 8.66E-06 & 6.93E-05 & -1.12E-06 & 4.43E-05 & -1.17E-06 \\
\hline  & MSE & 2.24E-05 & 1.10E-08 & 2.31E-06 & 9.16E-10 & 1.47E-06 & 5.55E-10 & 1.45E-06 & 5.63E-10 \\
\hline  & AVar  & 1.77E-06 & 6.65E-10 & 1.77E-06 & 6.65E-10 & 1.77E-06 & 6.65E-10 & 1.77E-06 & 6.65E-10 \\
\hline
\multicolumn{2}{||c||}{} & \multicolumn{4}{c||}{ALSEs} & \multicolumn{4}{c||}{LSEs}\\
\hline 1 & Avg & 1.5041 & 0.4999 & 2.4998 & 0.7500 & 1.4999 & 0.5000 & 2.4999 & 0.7499\\
\hline & Bias & 0.0041 & -9.59E-05 & -0.0002 & 8.50E-06 & -3.56E-05  & 1.71E-07 & -2.04E-05 & -1.20E-07 \\
\hline  & MSE & 2.60E-05 & 1.24E-08 & 7.63E-06 & 2.77E-09 & 6.11E-06 & 2.30E-09 & 6.68E-06 & 2.37E-09\\
\hline  & AVar  & 7.09E-06 & 2.66E-09 & 7.09E-06 & 2.66E-09 & 7.09E-06 & 2.66E-09 &  7.09E-06  & 2.66E-09 \\ 
\hline 
\hline
\end{tabular}}
\caption{Estimates of the parameters of model~(\ref{eq:model_1}) when errors are stationary random variables as defined in~(\ref{eq:error_MA}) and M =  N = 50}
\label{table:6}
\end{table}
\begin{table}[H]
\centering
\resizebox{0.99\textwidth}{!}{\begin{tabular}{||c|c||c|c|c|c||c|c|c|c||}
\hline
\hline
\multicolumn{2}{||c||}{Parameters} & $\alpha$ & $\beta$ & $\gamma$ & $\delta$ & $\alpha$ & $\beta$ & $\gamma$ & $\delta$ \\
\hline  \multicolumn{2}{||c||}{True values} & 1.5 & 0.5 & 2.5 & 0.75 & 1.5 & 0.5 & 2.5 & 0.75 \\
\hline $\sigma$ & & \multicolumn{4}{c||}{ALSEs} & \multicolumn{4}{c||}{LSEs} \\
\hline 0.1 & Avg  & 1.5005 & 0.4999 & 2.4997  & 0.7500 & 1.4999 & 0.5000 & 2.5000 & 0.7499\\
\hline & Bias  & 0.0005 & -7.49E-06 &  -0.0002 &  1.92E-06 & -1.18E-06 & 1.58E-08 & 1.46E-06 & -1.12E-08 \\
\hline  & MSE & 3.11E-07 & 6.12E-11 & 1.68E-07 &  1.71E-11  & 1.21E-08 & 2.03E-12 & 9.24E-10 & 2.82E-13 \\
\hline  & AVar & 1.40E-08 & 2.33E-12 & 1.40E-08  & 2.33E-12 & 1.40E-08 &  2.33E-12 & 1.40E-08 & 2.33E-12\\
\hline
\multicolumn{2}{||c||}{} & \multicolumn{4}{c||}{ALSEs} & \multicolumn{4}{c||}{LSEs} \\
\hline 
0.5 & Avg & 1.5004 &  0.4999 &  2.4996 & 0.7500 & 1.5000 & 0.4999 & 2.5000 & 0.7500 \\
\hline  & Bias & 0.0004 & -6.10E-06 &  -0.0004 & 3.16E-06  & 5.48E-06 & -8.28E-08 & 1.95E-06 &  4.45E-08 \\
\hline  & MSE & 5.07E-07 & 9.31E-11 & 4.75E-07 & 6.37E-11 & 2.80E-07 & 2.80E-07 & 2.10E-07 & 3.49E-11\\
\hline  & AVar  & 3.50E-07 & 5.83E-11 & 3.50E-07 & 5.83E-11 & 3.50E-07 & 5.83E-11 & 3.50E-07 & 5.83E-11 \\
\hline
\multicolumn{2}{||c||}{} & \multicolumn{4}{c||}{ALSEs} & \multicolumn{4}{c||}{LSEs}\\
\hline 1 & Avg & 1.5004 & 0.4999 & 2.4995 & 0.7500 & 1.5000 & 0.4999 &  2.4999 & 0.7500 \\
\hline & Bias & 0.0004 & -6.65E-06 &  -0.0005 & 4.30E-06 & 3.76E-05  & -6.39E-07 &  -1.91E-05 & 1.19E-07\\
\hline  & MSE & 1.37E-06 & 2.39E-10 & 1.26E-06 &1.90E-10& 1.07E-06 & 1.80E-10 & 9.31E-07 & 1.58E-10 \\
\hline  & AVar & 1.40E-06 & 2.33E-10 & 1.40E-06 & 2.33E-10 & 1.40E-06 & 2.33E-10 & 1.40E-06 & 2.33E-10\\ 
\hline
\hline 
\end{tabular}}
\caption{Estimates of the parameters of model~(\ref{eq:model_1})when errors are stationary random variables as defined in~(\ref{eq:error_MA}) and M =  N = 75}
\label{table:7}
\end{table}

\begin{table}[H]
\centering
\resizebox{0.99\textwidth}{!}{\begin{tabular}{||c|c||c|c|c|c||c|c|c|c||}
\hline
\hline
\multicolumn{2}{||c||}{Parameters} & $\alpha$ & $\beta$ & $\gamma$ & $\delta$ & $\alpha$ & $\beta$ & $\gamma$ & $\delta$ \\
\hline  \multicolumn{2}{||c||}{True values} & 1.5 & 0.5 & 2.5 & 0.75 & 1.5 & 0.5 & 2.5 & 0.75 \\
\hline $\sigma$ & & \multicolumn{4}{c||}{ALSEs} & \multicolumn{4}{c||}{LSEs} \\
\hline 0.1 & Avg & 1.4999 & 0.5000 & 2.4999 & 0.7500& 1.5000 & 0.4999 & 2.4999 & 0.7500\\
\hline & Bias  & -4.14E-05 & 1.98E-06 & -4.85E-05 & 9.25E-08 & 3.60E-06 & -3.84E-08 & -9.42E-07 & 1.36E-08 \\
\hline  & MSE & 1.68E-08 & 5.28E-12 & 2.5063E-08 & 2.02E-12 & 9.26E-10 & 1.07E-13 & 1.81E-09 &   1.82E-13\\
\hline  & AVar & 4.43E-09 & 4.15E-13 & 4.43E-09 & 4.15E-13 & 4.43E-09 & 4.15E-13 & 4.43E-09 & 4.15E-13\\
\hline
\multicolumn{2}{||c||}{} & \multicolumn{4}{c||}{ALSEs} & \multicolumn{4}{c||}{LSEs} \\
\hline 
0.5 & Avg  & 1.4998 & 0.5000 & 2.4998 & 0.7500 & 1.4999 & 0.5000 & 2.4999 & 0.7500\\
\hline  & Bias &  -0.0002 & 3.21E-06 & -0.0001 & 1.02E-06 & -6.40E-06 & 3.26E-08 & -4.78E-06 &  3.26E-08 \\
\hline  & MSE & 1.36E-07 & 2.00E-11 & 1.36E-07 & 1.16E-11 & 6.31E-08 & 6.15E-12 & 6.12E-08 & 5.81E-12\\
\hline  & AVar  & 1.11E-07 & 1.04E-11 & 1.11E-07 & 1.04E-11 & 1.11E-07 & 1.04E-11 & 1.11E-07 & 1.04E-11 \\
\hline
\multicolumn{2}{||c||}{} & \multicolumn{4}{c||}{ALSEs} & \multicolumn{4}{c||}{LSEs}\\
\hline 1 & Avg & 1.4997 & 0.5000 & 2.4997 & 0.7500 & 1.4999 & 0.5000 & 2.5000 & 0.7499 \\
\hline & Bias & -0.0003 & 3.94E-06 & -0.0003 & 1.60E-06 & -2.75E-05 & 2.64E-07 & 6.40E-06 & -4.03E-08 \\
\hline  & MSE & 3.66E-07 & 4.48E-11 & 3.67E-07 &   3.29E-11 & 2.73E-07 & 2.67E-11 & 2.78E-07 & 2.67E-11\\
\hline  & AVar  & 4.43E-07 & 4.15E-11 & 4.43E-07 &   4.15E-11  & 4.43E-07 & 4.15E-11 & 4.43E-07 &   4.15E-11 \\ 
\hline 
\hline
\end{tabular}}
\caption{Estimates of the parameters of model~(\ref{eq:model_1}) when errors are stationary random variables as defined in~(\ref{eq:error_MA}) and M =  N = 100}
\label{table:8}
\end{table}
\subsection{Simulation results for the multiple component model with $\mathbf{p = 2}$}
Next we conduct numerical simulations on model~(\ref{eq:model_mul_comp}) with $p = 2$ and the following parameters:
$$A_1^0 = 5,\ B_1^0 = 4,\ \alpha_1^0 = 2.1,\ \beta_1^0 = 0.1,\ \gamma_1^0 = 1.25\ \textmd{and}\ \delta_1^0 = 0.25. $$
$$ A_2^0 = 3,\ B_2^0 = 2,\ \alpha_2^0 = 1.5 ,\ \beta_2^0 = 0.5,\ \gamma_2^0 = 1.75\ \textmd{and}\ \delta_2^0 = 0.75$$
The error structures used to generate the data are same as that used for the one component model, see equations,~(\ref{eq:error_normal}) and (\ref{eq:error_MA}).
For simulations we consider different values of $\sigma$ and different values of $M$ and $N$, again same as that for the one component model.
We estimate the parameters both by least  squares estimation method and approximate least squares estimation method. These estimates are obtained 1000 times each and averages, biases, MSEs and asymptotic variances are computed. The results are reported in the following tables.
From the tables, it can be seen that the estimates, both the ALSEs and the LSEs are quite close to their true values. It is observed that the estimates of the second component are better than those of the first component, in the sense that their biases and MSEs are smaller and the MSEs are better matched with the corresponding asymptotic variances. For both the estimators, as the sample size increases, the MSEs and the biases of the estimates of both components, decrease thus showing consistency.

\begin{table}[!htbp]
\centering
\resizebox{\textwidth}{!}{\begin{tabular}{||c|c||c|c|c|c|c|c|c|c||}
\hline
\multicolumn{2}{||c||}{Parameters}  & $\alpha_1$ & $\beta_1$ & $\gamma_1$ & $\delta_1$ & $\alpha_2$ & $\beta_2$ & $\gamma_2$ & $\delta_2$ \\ \hline
\multicolumn{2}{||c||}{True values} & 2.1        & 0.1       & 1.25       & 0.25       & 1.5        & 0.5       & 1.75       & 0.75       \\ \hline
$\sigma$         & & \multicolumn{8}{c||}{}                                                                                                \\ \hline
    &         & \multicolumn{8}{c||}{ALSEs}                                                               \\ \hline
    & Average & 2.1154   & 0.0994   & 1.2587   & 0.2500   & 1.5411   & 0.4988    & 1.7664   & 0.7493    \\ \hline
    & Bias    & 0.0154   & -0.0006  & 0.0087   & 1.01E-05 & 0.0411   & -0.0012   & 0.0164   & -0.0007   \\ \hline
    & MSE     & 2.36E-04 & 3.48E-07 & 7.67E-05 & 4.85E-10 & 2.36E-04 & 1.45E-06  & 2.68E-04 & 4.85E-10  \\ \hline
0.1 &         & \multicolumn{8}{c||}{LSEs}                                                                \\ \hline
    & Average & 2.1031   & 0.0998   & 1.2565   & 0.2500   & 1.5017   & 0.5000    & 1.7510   & 0.7500    \\ \hline
    & Bias    & 0.0031   & -0.0002  & 0.0065   & 3.83E-05 & 0.0017   & -2.16E-05 & 0.0010   & -2.92E-05 \\ \hline
    & MSE     & 9.70E-06 & 3.14E-08 & 4.23E-05 & 1.85E-09 & 3.71E-06 & 1.75E-09  & 1.93E-06 & 2.11E-09  \\ \hline
    &         & \multicolumn{8}{l||}{}                                                                    \\ \hline
    & AVar    & 2.40E-07 & 3.60E-10 & 2.40E-07 & 3.60E-10 & 7.56E-07 & 1.13E-09  & 7.56E-07 & 1.13E-09  \\ \hline
    &         & \multicolumn{8}{c||}{ALSEs}                                                               \\ \hline
    & Average & 2.1154   & 0.0994   & 1.2586   & 0.2500   & 1.5412   & 0.4988    & 1.7664   & 0.7493    \\ \hline
    & Bias    & 0.0154   & -0.0006  & 0.0086   & 1.49E-05 & 0.0412   & -0.0012   & 0.0164   & -0.0007   \\ \hline
    & MSE     & 2.44E-04 & 3.59E-07 & 8.02E-05 & 8.99E-09 & 2.44E-04 & 1.48E-06  & 2.87E-04 & 8.99E-09  \\ \hline
0.5 &         & \multicolumn{8}{c||}{LSEs}                                                                \\ \hline
    & Average & 2.1031   & 0.0998   & 1.2563   & 0.2500   & 1.5017   & 0.5000    & 1.7510   & 0.7500    \\ \hline
    & Bias    & 0.0031   & -0.0002  & 0.0063   & 4.40E-05 & 0.0017   & -2.25E-05 & 0.0010   & -3.13E-05 \\ \hline
    & MSE     & 1.66E-05 & 4.03E-08 & 4.63E-05 & 1.04E-08 & 2.48E-05 & 3.16E-08  & 2.55E-05 & 3.50E-08  \\ \hline
    &         & \multicolumn{8}{l||}{}                                                                    \\ \hline
    & AVar    & 5.99E-06 & 8.99E-09 & 5.99E-06 & 8.99E-09 & 1.89E-05 & 2.84E-08  & 1.89E-05 & 2.84E-08  \\ \hline
    &         & \multicolumn{8}{c||}{ALSEs}                                                               \\ \hline
    & Average & 2.1154   & 0.0994   & 1.2585   & 0.2500   & 1.5408   & 0.4988    & 1.7665   & 0.7493    \\ \hline
    & Bias    & 0.0154   & -0.0006  & 0.0085   & 1.88E-05 & 0.0408   & -0.0012   & 0.0165   & -0.0007   \\ \hline
    & MSE     & 2.65E-04 & 3.84E-07 & 9.75E-05 & 3.93E-08 & 2.65E-04 & 1.53E-06  & 3.38E-04 & 3.93E-08  \\ \hline
1   &         & \multicolumn{8}{c||}{LSEs}                                                                \\ \hline
    & Average & 2.1031   & 0.0998   & 1.2563   & 0.2500   & 1.5015   & 0.5000    & 1.7513   & 0.7500    \\ \hline
    & Bias    & 0.0031   & -0.0002  & 0.0063   & 4.78E-05 & 0.0015   & -1.40E-05 & 0.0013   & -4.21E-05 \\ \hline
    & MSE     & 3.63E-05 & 6.50E-08 & 6.50E-05 & 3.98E-08 & 8.57E-05 & 1.22E-07  & 8.44E-05 & 1.18E-07  \\ \hline
    &         & \multicolumn{8}{l||}{}                                                                    \\ \hline
    & AVar    & 2.40E-05 & 3.60E-08 & 2.40E-05 & 3.60E-08 & 7.56E-05 & 1.13E-07  & 7.56E-05 & 1.13E-07  \\ \hline
\end{tabular}}
\caption{Estimates of the parameters of model~(\ref{eq:model_mul_comp}) when errors are i.i.d Gaussian random variables as defined in~(\ref{eq:error_normal}) and M =  N = 25}
\label{table:9}
\end{table}
\begin{table}[p]
\centering
\resizebox{\textwidth}{!}{\begin{tabular}{||c|c||c|c|c|c|c|c|c|c||}
\hline
\multicolumn{2}{||c||}{Parameters}  & $\alpha_1$ & $\beta_1$ & $\gamma_1$ & $\delta_1$ & $\alpha_2$ & $\beta_2$ & $\gamma_2$ & $\delta_2$ \\ \hline
\multicolumn{2}{||c||}{True values} & 2.1        & 0.1       & 1.25       & 0.25       & 1.5        & 0.5       & 1.75       & 0.75       \\ \hline
$\sigma$         & & \multicolumn{8}{c||}{}                                                                                                \\ \hline
    &         & \multicolumn{8}{c||}{ALSEs}                                                                \\ \hline
    & Average & 2.1011   & 0.1000    & 1.2597   & 0.2499   & 1.5127   & 0.4997    & 1.7529   & 0.7499    \\ \hline
    & Bias    & 0.0011   & -1.36E-05 & 0.0097   & -0.0001  & 0.0127   & -0.0003   & 0.0029   & -5.92E-05 \\ \hline
    & MSE     & 1.16E-06 & 1.92E-10  & 9.37E-05 & 2.07E-08 & 1.16E-06 & 7.47E-08  & 8.34E-06 & 2.07E-08  \\ \hline
0.1 &         & \multicolumn{8}{c||}{LSEs}                                                                 \\ \hline
    & Average & 2.1010   & 0.1000    & 1.2572   & 0.2499   & 1.5007   & 0.5000    & 1.7507   & 0.7500    \\ \hline
    & Bias    & 0.0010   & -1.07E-05 & 0.0072   & -0.0001  & 0.0007   & -1.35E-05 & 0.0007   & -1.19E-05 \\ \hline
    & MSE     & 1.12E-06 & 1.24E-10  & 5.18E-05 & 1.49E-08 & 6.03E-07 & 1.99E-10  & 5.16E-07 & 1.60E-10  \\ \hline
    &         & \multicolumn{8}{l||}{}                                                                     \\ \hline
    & AVar    & 1.50E-08 & 5.62E-12  & 1.50E-08 & 5.62E-12 & 4.73E-08 & 1.77E-11  & 4.73E-08 & 1.77E-11  \\ \hline
    &         & \multicolumn{8}{c||}{ALSEs}                                                                \\ \hline
    & Average & 2.1011   & 0.1000    & 1.2597   & 0.2499   & 1.5127   & 0.4997    & 1.7529   & 0.7499    \\ \hline
    & Bias    & 0.0011   & -1.36E-05 & 0.0097   & -0.0001  & 0.0127   & -0.0003   & 0.0029   & -5.94E-05 \\ \hline
    & MSE     & 1.57E-06 & 3.33E-10  & 9.39E-05 & 2.08E-08 & 1.57E-06 & 7.46E-08  & 9.66E-06 & 2.08E-08  \\ \hline
0.5 &         & \multicolumn{8}{c||}{LSEs}                                                                 \\ \hline
    & Average & 2.1011   & 0.1000    & 1.2572   & 0.2499   & 1.5007   & 0.5000    & 1.7507   & 0.7500    \\ \hline
    & Bias    & 0.0011   & -1.09E-05 & 0.0072   & -0.0001  & 0.0007   & -1.27E-05 & 0.0007   & -1.20E-05 \\ \hline
    & MSE     & 1.53E-06 & 2.59E-10  & 5.22E-05 & 1.50E-08 & 1.75E-06 & 5.97E-10  & 1.67E-06 & 5.74E-10  \\ \hline
    &         & \multicolumn{8}{l||}{}                                                                     \\ \hline
    & AVar    & 3.75E-07 & 1.40E-10  & 3.75E-07 & 1.40E-10 & 1.18E-06 & 4.43E-10  & 1.18E-06 & 4.43E-10  \\ \hline
    &         & \multicolumn{8}{c||}{ALSEs}                                                                \\ \hline
    & Average & 2.1010   & 0.1000    & 1.2597   & 0.2499   & 1.5127   & 0.4997    & 1.7528   & 0.7499    \\ \hline
    & Bias    & 0.0010   & -1.32E-05 & 0.0097   & -0.0001  & 0.0127   & -0.0003   & 0.0028   & -5.66E-05 \\ \hline
    & MSE     & 2.69E-06 & 7.54E-10  & 9.51E-05 & 2.13E-08 & 2.69E-06 & 7.60E-08  & 1.28E-05 & 2.13E-08  \\ \hline
1   &         & \multicolumn{8}{c||}{LSEs}                                                                 \\ \hline
    & Average & 2.1010   & 0.1000    & 1.2572   & 0.2499   & 1.5007   & 0.5000    & 1.7506   & 0.7500    \\ \hline
    & Bias    & 0.0010   & -1.03E-05 & 0.0072   & -0.0001  & 0.0007   & -1.30E-05 & 0.0006   & -9.31E-06 \\ \hline
    & MSE     & 2.62E-06 & 6.72E-10  & 5.32E-05 & 1.54E-08 & 5.14E-06 & 1.84E-09  & 5.11E-06 & 1.80E-09  \\ \hline
    &         & \multicolumn{8}{l||}{}                                                                     \\ \hline
    & AVar    & 1.50E-06 & 5.62E-10  & 1.50E-06 & 5.62E-10 & 4.73E-06 & 1.77E-09  & 4.73E-06 & 1.77E-09  \\ \hline
\end{tabular}}
\caption{Estimates of the parameters of model~(\ref{eq:model_mul_comp}) when errors are i.i.d Gaussian random variables as defined in~(\ref{eq:error_normal}) and M =  N = 50}
\label{table:10}
\end{table}
\begin{table}[p]
\centering
\resizebox{\textwidth}{!}{\begin{tabular}{||c|c||c|c|c|c|c|c|c|c||}
\hline
\multicolumn{2}{||c||}{Parameters}  & $\alpha_1$ & $\beta_1$ & $\gamma_1$ & $\delta_1$ & $\alpha_2$ & $\beta_2$ & $\gamma_2$ & $\delta_2$ \\ \hline
\multicolumn{2}{||c||}{True values} & 2.1        & 0.1       & 1.25       & 0.25       & 1.5        & 0.5       & 1.75       & 0.75       \\ \hline
$\sigma$         & & \multicolumn{8}{c||}{}    \\ \hline
    &         & \multicolumn{8}{c||}{ALSEs} \\ \hline
    & Average & 2.0999    & 0.1000      & 1.2534      & 0.2500      & 1.5002      & 0.5000      & 1.7506      & 0.7500      \\ \hline
    & Bias    & -6.54E-05 & -4.88E-07   & 0.0034      & -4.26E-05   & 0.0002      & -2.87E-06   & 0.0006      & -6.20E-06   \\ \hline
    & MSE     & 7.71E-09  & 7.86E-13    & 1.16E-05    & 1.82E-09    & 7.71E-09    & 9.76E-12    & 3.97E-07    & 1.82E-09    \\ \hline
0.1 &         & \multicolumn{8}{c||}{LSEs}    \\ \hline
    & Average & 2.1000    & 0.1000      & 1.2528      & 0.2500      & 1.5001      & 0.5000      & 1.7500      & 0.7500      \\ \hline
    & Bias    & -2.64E-05 & 6.85E-08    & 0.0028      & -3.27E-05   & 5.99E-05    & -6.06E-07   & 2.94E-05    & -1.04E-07   \\ \hline
    & MSE     & 4.16E-09  & 5.63E-13    & 7.86E-06    & 1.07E-09    & 1.27E-08    & 1.83E-12    & 9.57E-09    & 1.40E-12    \\ \hline
    &         & \multicolumn{8}{l||}{}   \\ \hline
    & AVar    & 2.96E-09  & 4.93E-13    & 2.96E-09    & 4.93E-13    & 9.34E-09    & 1.56E-12    & 9.34E-09    & 1.56E-12    \\ \hline
    &         & \multicolumn{8}{c||}{ALSEs} \\ \hline
    & Average & 2.0999    & 0.1000      & 1.2534      & 0.2500      & 1.5001      & 0.5000      & 1.7506      & 0.7500      \\ \hline
    & Bias    & -7.54E-05 & -3.69E-07   & 0.0034      & -4.26E-05   & 0.0001      & -2.16E-06   & 0.0006      & -6.22E-06   \\ \hline
    & MSE     & 8.85E-08  & 1.35E-11    & 1.16E-05    & 1.83E-09    & 8.85E-08    & 3.64E-11    & 6.31E-07    & 1.83E-09    \\ \hline
0.5 &         & \multicolumn{8}{c||}{LSEs}  \\ \hline
    & Average & 2.1000    & 0.1000      & 1.2528      & 0.2500      & 1.5000      & 0.5000      & 1.7500      & 0.7500      \\ \hline
    & Bias    & -3.61E-05 & 1.81E-07    & 0.0028      & -3.27E-05   & 3.47E-05    & -2.27E-07   & 2.33E-05    & -4.46E-08   \\ \hline
    & MSE     & 8.36E-08  & 1.34E-11    & 7.91E-06    & 1.08E-09    & 1.91E-07    & 3.11E-11    & 2.54E-07    & 4.18E-11    \\ \hline
    &         & \multicolumn{8}{l||}{} \\ \hline
    & AVar    & 7.40E-08  & 1.23E-11    & 7.40E-08    & 1.23E-11    & 2.33E-07    & 3.89E-11    & 2.33E-07    & 3.89E-11    \\ \hline
    &         & \multicolumn{8}{c||}{ALSEs} \\ \hline
    & Average & 2.0999 & 0.0999 & 1.2534 & 0.2499 & 1.5001 & 0.4999 & 1.7506 & 0.7499 \\ \hline
    & Bias    & -4.69E-05 & -8.01E-07   & 0.00341 & -4.27E-05   & 0.0001 & -2.93E-06   & 0.0006 & -6.44E-06   \\ \hline
    & MSE     & 3.07E-07  & 4.87E-11    & 1.20E-05    & 1.88E-09    & 3.07E-07    & 1.58E-10    & 1.41E-06    & 1.88E-09    \\ \hline
1   &         & \multicolumn{8}{c||}{LSEs}
\\ \hline
    & Average & 2.1000    & 0.1000      & 1.2528      & 0.2500      & 1.5001      & 0.5000      & 1.7500      & 0.7500      \\ \hline
    & Bias    & -9.67E-06 & -2.20E-07   & 0.0028      & -3.28E-05   & 7.50E-05    & -1.01E-06   & 4.89E-05    & -2.68E-07   \\ \hline
    & MSE     & 2.94E-07  & 4.70E-11    & 8.24E-06    & 1.13E-09    & 8.96E-07    & 1.45E-10    & 1.02E-06    & 1.60E-10    \\ \hline
    &         & \multicolumn{8}{l||}{}
    \\ \hline
    & AVar    & 2.96E-07  & 4.93E-11    & 2.96E-07    & 4.93E-11    & 9.34E-07    & 1.56E-10    & 9.34E-07    & 1.56E-10    \\ \hline
\end{tabular}}
\caption{Estimates of the parameters of model~(\ref{eq:model_mul_comp}) when errors are i.i.d Gaussian random variables as defined in~(\ref{eq:error_normal}) and M =  N = 75}
\label{table:11}
\end{table}
\begin{table}[p]
\centering
\resizebox{\textwidth}{!}{\begin{tabular}{||c|c||c|c|c|c|c|c|c|c||}
\hline
\multicolumn{2}{||c||}{Parameters}  & $\alpha_1$ & $\beta_1$ & $\gamma_1$ & $\delta_1$ & $\alpha_2$ & $\beta_2$ & $\gamma_2$ & $\delta_2$ \\ \hline
\multicolumn{2}{||c||}{True values} & 2.1        & 0.1       & 1.25       & 0.25       & 1.5        & 0.5       & 1.75       & 0.75       \\ \hline
$\sigma$         & & \multicolumn{8}{c||}{}                                                                                                \\ \hline
    &         & \multicolumn{8}{c||}{ALSEs}                                                                \\ \hline
    & Average & 2.1005   & 0.1000    & 1.2502   & 0.2500    & 1.4991   & 0.5000    & 1.7501    & 0.7500    \\ \hline
    & Bias    & 0.0005   & -6.19E-06 & 0.0002   & -3.44E-06 & -0.0009  & 1.31E-05  & 7.84E-05  & -1.07E-07 \\ \hline
    & MSE     & 3.00E-07 & 3.85E-11  & 6.07E-08 & 1.20E-11  & 3.00E-07 & 1.72E-10  & 9.58E-09  & 1.20E-11  \\ \hline
0.1 &         & \multicolumn{8}{c||}{LSEs}                                                                   \\ \hline
    & Average & 2.0995   & 0.1000    & 1.2504   & 0.2500    & 1.5000   & 0.5000    & 1.7500    & 0.7500    \\ \hline
    & Bias    & -0.0005  & 2.67E-06  & 0.0004   & -3.55E-06 & 8.50E-06 & -6.72E-08 & -2.24E-06 & -2.87E-08 \\ \hline
    & MSE     & 2.19E-07 & 7.29E-12  & 1.99E-07 & 1.27E-11  & 2.81E-09 & 2.56E-13  & 2.64E-09  & 2.49E-13  \\ \hline
    &         & \multicolumn{8}{c||}{}                                                                       \\ \hline
    & AVar    & 9.37E-10 & 8.78E-14  & 9.37E-10 & 8.78E-14  & 2.95E-09 & 2.77E-13  & 2.95E-09  & 2.77E-13  \\ \hline
    &         & \multicolumn{8}{c||}{ALSEs}                                                                  \\ \hline
    & Average & 2.1006   & 0.1000    & 1.2502   & 0.2500    & 1.4991   & 0.5000    & 1.7501    & 0.7500    \\ \hline
    & Bias    & 0.0006   & -6.24E-06 & 0.0002   & -3.37E-06 & -0.0009  & 1.31E-05  & 7.87E-05  & -1.43E-07 \\ \hline
    & MSE     & 3.31E-07 & 4.13E-11  & 7.89E-08 & 1.35E-11  & 3.31E-07 & 1.78E-10  & 7.61E-08  & 1.35E-11  \\ \hline
0.5 &         & \multicolumn{8}{c||}{LSEs}                                                                   \\ \hline
    & Average & 2.0995   & 0.1000    & 1.2504   & 0.2500    & 1.5000   & 0.5000    & 1.7500    & 0.7500    \\ \hline
    & Bias    & -0.0005  & 2.65E-06  & 0.0004   & -3.50E-06 & 7.81E-06 & -5.19E-08 & -2.02E-06 & -6.13E-08 \\ \hline
    & MSE     & 2.39E-07 & 9.24E-12  & 2.17E-07 & 1.44E-11  & 7.26E-08 & 6.57E-12  & 7.23E-08  & 6.81E-12  \\ \hline
    &         & \multicolumn{8}{l||}{}                                                                       \\ \hline
    & AVar    & 2.34E-08 & 2.20E-12  & 2.34E-08 & 2.20E-12  & 7.38E-08 & 6.92E-12  & 7.38E-08  & 6.92E-12  \\ \hline
    &         & \multicolumn{8}{c||}{ALSEs}                                                                  \\ \hline
    & Average & 2.1005   & 0.1000    & 1.2502   & 0.2500    & 1.4991   & 0.5000    & 1.7501    & 0.7500    \\ \hline
    & Bias    & 0.0005   & -6.21E-06 & 0.0002   & -3.55E-06 & -0.0009  & 1.30E-05  & 0.0001    & -3.75E-07 \\ \hline
    & MSE     & 4.01E-07 & 4.76E-11  & 1.51E-07 & 2.08E-11  & 4.01E-07 & 1.98E-10  & 3.15E-07  & 2.08E-11  \\ \hline
1   &         & \multicolumn{8}{c||}{LSEs}                                                                   \\ \hline
    & Average & 2.0995   & 0.1000    & 1.2504   & 0.2500    & 1.5000   & 0.5000    & 1.7500    & 0.7500    \\ \hline
    & Bias    & -0.0005  & 2.67E-06  & 0.0004   & -3.65E-06 & 1.66E-05 & -1.33E-07 & 3.24E-05  & -2.92E-07 \\ \hline
    & MSE     & 3.14E-07 & 1.59E-11  & 2.93E-07 & 2.17E-11  & 3.28E-07 & 2.98E-11  & 3.16E-07  & 2.85E-11  \\ \hline
    &         & \multicolumn{8}{l||}{}                                                                       \\ \hline
    & AVar    & 9.37E-08 & 8.78E-12  & 9.37E-08 & 8.78E-12  & 2.95E-07 & 2.77E-11  & 2.95E-07  & 2.77E-11  \\ \hline
\end{tabular}}
\caption{Estimates of the parameters of model~(\ref{eq:model_mul_comp}) when errors are i.i.d Gaussian random variables as defined in~(\ref{eq:error_normal}) and M =  N = 100}
\label{table:12}
\end{table}
\begin{table}[p]
\centering
\resizebox{\textwidth}{!}{\begin{tabular}{||c|c||c|c|c|c|c|c|c|c||}
\hline
\multicolumn{2}{||c||}{Parameters}  & $\alpha_1$ & $\beta_1$ & $\gamma_1$ & $\delta_1$ & $\alpha_2$ & $\beta_2$ & $\gamma_2$ & $\delta_2$ \\ \hline
\multicolumn{2}{||c||}{True values} & 2.1        & 0.1       & 1.25       & 0.25       & 1.5        & 0.5       & 1.75       & 0.75       \\ \hline
$\sigma$         & & \multicolumn{8}{c||}{}                                                                                                \\ \hline
               &  & \multicolumn{8}{c||}{ALSEs}                                                                                           \\ \hline
                 & Average        & 2.1153     & 0.0994    & 1.2588     & 0.2500     & 1.5411     & 0.4988    & 1.7663     & 0.7493     \\ \hline
                 & Bias           & 0.0153     & -0.0006   & 0.0088     & 9.19E-06   & 0.0411     & -0.0012   & 0.0163     & -0.0007    \\ \hline
                 & MSE            & 2.36E-04   & 3.48E-07  & 7.72E-05   & 6.90E-10   & 2.36E-04   & 1.46E-06  & 2.67E-04   & 6.90E-10   \\ \hline
0.1            &  & \multicolumn{8}{c||}{LSEs}                                                                                            \\ \hline
                 & Average        & 2.1030     & 0.0998    & 1.2565     & 0.2500     & 1.5017     & 0.5000    & 1.7510     & 0.7500     \\ \hline
                 & Bias           & 0.0030     & -0.0002   & 0.0065     & 3.74E-05   & 0.0017     & -2.29E-05 & 0.0010     & -2.70E-05  \\ \hline
                 & MSE            & 9.68E-06   & 3.14E-08  & 4.26E-05   & 1.95E-09   & 4.09E-06   & 2.31E-09  & 2.34E-06   & 2.79E-09   \\ \hline
              &   & \multicolumn{8}{c||}{}                                                                                                \\ \hline
                 & Avar           & 3.60E-07   & 5.39E-10  & 3.60E-07   & 5.39E-10   & 1.13E-06   & 1.70E-09  & 1.13E-06   & 1.70E-09   \\ \hline
               &  & \multicolumn{8}{c||}{ALSEs}                                                                                           \\ \hline
                 & Average        & 2.1153     & 0.0994    & 1.2587     & 0.2500     & 1.5411     & 0.4988    & 1.7663     & 0.7493     \\ \hline
                 & Bias           & 0.0153     & -0.0006   & 0.0087     & 1.01E-05   & 0.0411     & -0.0012   & 0.0163     & -0.0007    \\ \hline
                 & MSE            & 2.44E-04   & 3.59E-07  & 8.41E-05   & 1.44E-08   & 2.44E-04   & 1.49E-06  & 2.95E-04   & 1.44E-08   \\ \hline
0.5           &   & \multicolumn{8}{c||}{LSEs}                                                                                            \\ \hline
                 & Average        & 2.1030     & 0.0998    & 1.2564     & 0.2500     & 1.5017     & 0.5000    & 1.7510     & 0.7500     \\ \hline
                 & Bias           & 0.0030     & -0.0002   & 0.0064     & 3.94E-05   & 0.0017     & -2.32E-05 & 0.0010     & -2.96E-05  \\ \hline
                 & MSE            & 1.77E-05   & 4.27E-08  & 4.97E-05   & 1.49E-08   & 3.27E-05   & 4.57E-08  & 3.68E-05   & 5.16E-08   \\ \hline
                &  & \multicolumn{8}{c||}{}                                                                                                \\ \hline
                 & Avar           & 8.99E-06   & 1.35E-08  & 8.99E-06   & 1.35E-08   & 2.84E-05   & 4.25E-08  & 2.84E-05   & 4.25E-08   \\ \hline
             &    & \multicolumn{8}{c||}{ALSEs}                                                                                           \\ \hline
                 & Average        & 2.1158     & 0.0994    & 1.2586     & 0.2500     & 1.5412     & 0.4988    & 1.7666     & 0.7493     \\ \hline
                 & Bias           & 0.0158     & -0.0006   & 0.0086     & 1.76E-05   & 0.0412     & -0.0012   & 0.0166     & -0.0007    \\ \hline
                 & MSE            & 2.88E-04   & 4.21E-07  & 1.06E-04   & 5.96E-08   & 2.88E-04   & 1.62E-06  & 3.88E-04   & 5.96E-08   \\ \hline
1              &  & \multicolumn{8}{c||}{LSEs}                                                                                            \\ \hline
                 & Average        & 2.1035     & 0.0998    & 1.2562     & 0.2500     & 1.5019     & 0.5000    & 1.7515     & 0.7500     \\ \hline
                 & Bias           & 0.0035     & -0.0002   & 0.0062     & 4.79E-05   & 0.0019     & -2.81E-05 & 0.0015     & -4.42E-05  \\ \hline
                 & MSE            & 4.65E-05   & 8.66E-08  & 7.20E-05   & 5.64E-08   & 1.27E-04   & 1.88E-07  & 1.40E-04   & 2.07E-07   \\ \hline
               &  & \multicolumn{8}{c||}{}                                                                                                \\ \hline
                 & Avar           & 3.60E-05   & 5.39E-08  & 3.60E-05   & 5.39E-08   & 1.13E-04   & 1.70E-07  & 1.13E-04   & 1.70E-07   \\ \hline
\end{tabular}}
\caption{Estimates of the parameters of model~(\ref{eq:model_mul_comp}) when errors are stationary random variables as defined in~(\ref{eq:error_MA}) and M =  N = 25}
\label{table:13}
\end{table}

\begin{table}[p]
\centering
\resizebox{\textwidth}{!}{\begin{tabular}{||c|c||c|c|c|c|c|c|c|c||}
\hline
\multicolumn{2}{||c||}{Parameters}  & $\alpha_1$ & $\beta_1$ & $\gamma_1$ & $\delta_1$ & $\alpha_2$ & $\beta_2$ & $\gamma_2$ & $\delta_2$ \\ \hline
\multicolumn{2}{||c||}{True values} & 2.1        & 0.1       & 1.25       & 0.25       & 1.5        & 0.5       & 1.75       & 0.75       \\ \hline
$\sigma$         & & \multicolumn{8}{c||}{}                                                                                                \\ \hline
 &  & \multicolumn{8}{c||}{ALSEs} \\ \hline
 & Average & 2.1011 & 0.1000 & 1.2597 & 0.2499 & 1.5127 & 0.4997 & 1.7529 & 0.7499 \\ \hline
 & Bias & 0.0011 & -1.39E-05 & 0.0097 & -0.0001 & 0.0127 & -0.0003 & 0.0029 & -5.90E-05 \\ \hline
 & MSE & 1.19E-06 & 2.02E-10 & 9.37E-05 & 2.07E-08 & 1.19E-06 & 7.47E-08 & 8.29E-06 & 2.07E-08 \\ \hline
0.1 &  & \multicolumn{8}{c||}{LSEs} \\ \hline
 & Average & 2.1011 & 0.1000 & 1.2572 & 0.2499 & 1.5007 & 0.5000 & 1.7507 & 0.7500 \\ \hline
 & Bias & 0.0011 & -1.11E-05 & 0.0072 & -0.0001 & 0.0007 & -1.36E-05 & 0.0007 & -1.19E-05 \\ \hline
 & MSE & 1.16E-06 & 1.34E-10 & 5.19E-05 & 1.49E-08 & 6.29E-07 & 2.11E-10 & 5.36E-07 & 1.69E-10 \\ \hline
 &  &   \multicolumn{8}{c||}{}  \\ \hline
 & AVar & 2.25E-08 & 8.43E-12 & 2.25E-08 & 8.43E-12 & 7.09E-08 & 2.66E-11 & 7.09E-08 & 2.66E-11 \\ \hline
 &  & \multicolumn{8}{c||}{ALSEs} \\ \hline
 & Average & 2.1010 & 0.1000 & 1.2597 & 0.2499 & 1.5128 & 0.4997 & 1.7529 & 0.7499 \\ \hline
 & Bias & 0.0010 & -1.27E-05 & 0.0097 & -0.0001 & 0.0128 & -0.0003 & 0.0029 & -6.06E-05 \\ \hline
 & MSE & 1.52E-06 & 3.66E-10 & 9.41E-05 & 2.09E-08 & 1.52E-06 & 7.56E-08 & 1.06E-05 & 2.09E-08 \\ \hline
0.5 &  & \multicolumn{8}{c||}{LSEs} \\ \hline
 & Average & 2.1010 & 0.1000 & 1.2572 & 0.2499 & 1.5008 & 0.5000 & 1.7508 & 0.7500 \\ \hline
 & Bias & 0.0010 & -9.81E-06 & 0.0072 & -0.0001 & 0.0008 & -1.41E-05 & 0.0008 & -1.32E-05 \\ \hline
 & MSE & 1.48E-06 & 2.97E-10 & 5.23E-05 & 1.51E-08 & 2.52E-06 & 9.37E-10 & 2.28E-06 & 8.01E-10 \\ \hline
 &  &   \multicolumn{8}{c||}{}  \\ \hline
 & Avar & 5.62E-07 & 2.11E-10 & 5.62E-07 & 2.11E-10 & 1.77E-06 & 6.65E-10 & 1.77E-06 & 6.65E-10 \\ \hline
 &  & \multicolumn{8}{c||}{ALSEs} \\ \hline
 & Average & 2.1011 & 0.1000 & 1.2597 & 0.2499 & 1.5128 & 0.4997 & 1.7528 & 0.7499 \\ \hline
 & Bias & 0.0011 & -1.41E-05 & 0.0097 & -0.0001 & 0.0128 & -0.0003 & 0.0028 & -5.87E-05 \\ \hline
 & MSE & 3.16E-06 & 1.07E-09 & 9.55E-05 & 2.15E-08 & 3.16E-06 & 7.75E-08 & 1.59E-05 & 2.15E-08 \\ \hline
1 &  & \multicolumn{8}{c||}{LSEs} \\ \hline
 & Average & 2.1011 & 0.1000 & 1.2572 & 0.2499 & 1.5008 & 0.5000 & 1.7506 & 0.7500 \\ \hline
 & Bias & 0.0011 & -1.12E-05 & 0.0072 & -0.0001 & 0.0008 & -1.48E-05 & 0.0006 & -1.17E-05 \\ \hline
 & MSE & 3.07E-06 & 9.72E-10 & 5.36E-05 & 1.56E-08 & 7.29E-06 & 2.71E-09 & 7.78E-06 & 2.82E-09 \\ \hline
 &  &   \multicolumn{8}{c||}{}  \\ \hline
 & Avar & 2.25E-06 & 8.43E-10 & 2.25E-06 & 8.43E-10 & 7.09E-06 & 2.66E-09 & 7.09E-06 & 2.66E-09 \\ \hline
\end{tabular}}
\caption{Estimates of the parameters of model~(\ref{eq:model_mul_comp}) when errors are stationary random variables as in~(\ref{eq:error_MA}) and M =  N = 50}
\label{table:14}
\end{table}

\begin{table}[p]
\centering
\resizebox{\textwidth}{!}{\begin{tabular}{||c|c||c|c|c|c|c|c|c|c||}
\hline
\multicolumn{2}{||c||}{Parameters}  & $\alpha_1$ & $\beta_1$ & $\gamma_1$ & $\delta_1$ & $\alpha_2$ & $\beta_2$ & $\gamma_2$ & $\delta_2$ \\ \hline
\multicolumn{2}{||c||}{True values} & 2.1        & 0.1       & 1.25       & 0.25       & 1.5        & 0.5       & 1.75       & 0.75       \\ \hline
$\sigma$         & & \multicolumn{8}{c||}{}                                                                                                \\ \hline
    &         & \multicolumn{8}{c||}{ALSEs}                                                                  \\ \hline
    & Average & 2.0999    & 0.1000    & 1.2534   & 0.2500    & 1.5002   & 0.5000    & 1.7506   & 0.7500    \\ \hline
    & Bias    & -6.49E-05 & -4.92E-07 & 0.0034   & -4.27E-05 & 0.0002   & -2.80E-06 & 0.0006   & -6.11E-06 \\ \hline
    & MSE     & 7.52E-09  & 7.78E-13  & 1.16E-05 & 1.82E-09  & 7.52E-09 & 1.05E-11  & 3.90E-07 & 1.82E-09  \\ \hline
0.1 &         & \multicolumn{8}{c||}{LSEs}                                                                   \\ \hline
    & Average & 2.1000    & 0.1000    & 1.2528   & 0.2500    & 1.5001   & 0.5000    & 1.7500   & 0.7500    \\ \hline
    & Bias    & -2.64E-05 & 7.13E-08  & 0.0028   & -3.28E-05 & 5.65E-05 & -5.64E-07 & 2.06E-05 & -6.25E-09 \\ \hline
    & MSE     & 4.06E-09  & 5.57E-13  & 7.88E-06 & 1.07E-09  & 1.74E-08 & 2.71E-12  & 1.36E-08 & 2.13E-12  \\ \hline
    &         & \multicolumn{8}{c||}{}                                                                       \\ \hline
    & AVar    & 4.44E-09  & 7.40E-13  & 4.44E-09 & 7.40E-13  & 1.40E-08 & 2.33E-12  & 1.40E-08 & 2.33E-12  \\ \hline
    &         & \multicolumn{8}{c||}{ALSEs}                                                                  \\ \hline
    & Average & 2.0999    & 0.1000    & 1.2534   & 0.2500    & 1.5001   & 0.5000    & 1.7506   & 0.7500    \\ \hline
    & Bias    & -6.36E-05 & -5.33E-07 & 0.0034   & -4.26E-05 & 0.0001   & -2.16E-06 & 0.0006   & -6.42E-06 \\ \hline
    & MSE     & 7.47E-08  & 1.18E-11  & 1.17E-05 & 1.83E-09  & 7.47E-08 & 5.31E-11  & 7.66E-07 & 1.83E-09  \\ \hline
0.5 &         & \multicolumn{8}{c||}{LSEs}                                                                   \\ \hline
    & Average & 2.1000    & 0.1000    & 1.2528   & 0.2500    & 1.5000   & 0.5000    & 1.7500   & 0.7500    \\ \hline
    & Bias    & -2.57E-05 & 4.05E-08  & 0.0028   & -3.27E-05 & 3.31E-05 & -3.19E-07 & 3.33E-05 & -2.08E-07 \\ \hline
    & MSE     & 7.04E-08  & 1.14E-11  & 7.94E-06 & 1.08E-09  & 2.75E-07 & 4.69E-11  & 3.78E-07 & 6.20E-11  \\ \hline
    &         & \multicolumn{8}{c||}{}                                                                       \\ \hline
    & AVar    & 1.11E-07  & 1.85E-11  & 1.11E-07 & 1.85E-11  & 3.50E-07 & 5.83E-11  & 3.50E-07 & 5.83E-11  \\ \hline
    &         & \multicolumn{8}{c||}{ALSEs}                                                                  \\ \hline
    & Average & 2.0999    & 0.1000    & 1.2534   & 0.2500    & 1.5002   & 0.5000    & 1.7506   & 0.7500    \\ \hline
    & Bias    & -5.65E-05 & -7.05E-07 & 0.0034   & -4.30E-05 & 0.0002   & -2.41E-06 & 0.0006   & -6.29E-06 \\ \hline
    & MSE     & 3.03E-07  & 4.77E-11  & 1.21E-05 & 1.92E-09  & 3.03E-07 & 1.96E-10  & 1.79E-06 & 1.92E-09  \\ \hline
1   &         & \multicolumn{8}{c||}{LSEs}                                                                   \\ \hline
    & Average & 2.1000    & 0.1000    & 1.2528   & 0.2500    & 1.5001   & 0.5000    & 1.7500   & 0.7500    \\ \hline
    & Bias    & -2.21E-05 & -8.58E-08 & 0.0028   & -3.30E-05 & 7.48E-05 & -7.56E-07 & 1.94E-05 & -1.03E-07 \\ \hline
    & MSE     & 2.90E-07  & 4.60E-11  & 8.36E-06 & 1.16E-09  & 1.10E-06 & 1.87E-10  & 1.45E-06 & 2.27E-10  \\ \hline
    &         & \multicolumn{8}{c||}{}                                                                       \\ \hline
    & AVar    & 4.44E-07  & 7.40E-11  & 4.44E-07 & 7.40E-11  & 1.40E-06 & 2.33E-10  & 1.40E-06 & 2.33E-10  \\ \hline
\end{tabular}}
\caption{Estimates of the parameters of model~(\ref{eq:model_mul_comp}) when errors are stationary random variables as in~(\ref{eq:error_MA}) and M =  N = 75}
\label{table:15}
\end{table}

\begin{table}[p]
\centering
\resizebox{\textwidth}{!}{\begin{tabular}{||c|c||c|c|c|c|c|c|c|c||}
\hline
\multicolumn{2}{||c||}{Parameters}  & $\alpha_1$ & $\beta_1$ & $\gamma_1$ & $\delta_1$ & $\alpha_2$ & $\beta_2$ & $\gamma_2$ & $\delta_2$ \\ \hline
\multicolumn{2}{||c||}{True values} & 2.1        & 0.1       & 1.25       & 0.25       & 1.5        & 0.5       & 1.75       & 0.75       \\ \hline
$\sigma$         & & \multicolumn{8}{c||}{}                                                                                                \\ \hline
    &         & \multicolumn{8}{c||}{ALSEs}                                                                  \\ \hline
    & Average & 2.1006   & 0.1000    & 1.2502   & 0.2500    & 1.4991   & 0.5000    & 1.7501    & 0.7500    \\ \hline
    & Bias    & 0.0006   & -6.23E-06 & 0.0002   & -3.43E-06 & -0.0009  & 1.30E-05  & 7.61E-05  & -8.05E-08 \\ \hline
    & MSE     & 3.04E-07 & 3.90E-11  & 6.09E-08 & 1.19E-11  & 3.04E-07 & 1.71E-10  & 1.17E-08  & 1.19E-11  \\ \hline
0.1 &         & \multicolumn{8}{c||}{LSEs}                                                                   \\ \hline
    & Average & 2.0995   & 0.1000    & 1.2504   & 0.2500    & 1.5000   & 0.5000    & 1.7500    & 0.7500    \\ \hline
    & Bias    & -0.0005  & 2.66E-06  & 0.0004   & -3.55E-06 & 1.24E-05 & -9.74E-08 & -3.02E-06 & -1.67E-08 \\ \hline
    & MSE     & 2.18E-07 & 7.21E-12  & 2.00E-07 & 1.28E-11  & 4.35E-09 & 4.02E-13  & 4.26E-09  & 4.06E-13  \\ \hline
    &         & \multicolumn{8}{l||}{}                                                                       \\ \hline
    & AVar    & 1.40E-09 & 1.32E-13  & 1.40E-09 & 1.32E-13  & 4.43E-09 & 4.15E-13  & 4.43E-09  & 4.15E-13  \\ \hline
    &         & \multicolumn{8}{c||}{ALSEs}                                                                  \\ \hline
    & Average & 2.1005   & 0.1000    & 1.2502   & 0.2500    & 1.4991   & 0.5000    & 1.7501    & 0.7500    \\ \hline
    & Bias    & 0.0005   & -6.17E-06 & 0.0002   & -3.39E-06 & -0.0009  & 1.30E-05  & 8.09E-05  & -1.51E-07 \\ \hline
    & MSE     & 3.20E-07 & 4.06E-11  & 8.98E-08 & 1.47E-11  & 3.20E-07 & 1.81E-10  & 1.22E-07  & 1.47E-11  \\ \hline
0.5 &         & \multicolumn{8}{c||}{LSEs}                                                                   \\ \hline
    & Average & 2.0995   & 0.1000    & 1.2504   & 0.2500    & 1.5000   & 0.5000    & 1.7500    & 0.7500    \\ \hline
    & Bias    & -0.0005  & 2.71E-06  & 0.0004   & -3.51E-06 & 1.33E-05 & -1.38E-07 & 9.68E-07  & -7.81E-08 \\ \hline
    & MSE     & 2.48E-07 & 9.82E-12  & 2.27E-07 & 1.56E-11  & 1.19E-07 & 1.10E-11  & 1.19E-07  & 1.08E-11  \\ \hline
    &         & \multicolumn{8}{l||}{}                                                                       \\ \hline
    & AVar    & 3.51E-08 & 3.29E-12  & 3.51E-08 & 3.29E-12  & 1.11E-07 & 1.04E-11  & 1.11E-07  & 1.04E-11  \\ \hline
    &         & \multicolumn{8}{c||}{ALSEs}                                                                  \\ \hline
    & Average & 2.1006   & 0.1000    & 1.2502   & 0.2500    & 1.4990   & 0.5000    & 1.7501    & 0.7500    \\ \hline
    & Bias    & 0.0006   & -6.33E-06 & 0.0002   & -3.43E-06 & -0.0010  & 1.33E-05  & 8.50E-05  & -2.22E-07 \\ \hline
    & MSE     & 4.17E-07 & 4.96E-11  & 1.86E-07 & 2.43E-11  & 4.17E-07 & 2.17E-10  & 4.81E-07  & 2.43E-11  \\ \hline
1   &         & \multicolumn{8}{c||}{LSEs}                                                                   \\ \hline
    & Average & 2.0996   & 0.1000    & 1.2504   & 0.2500    & 1.5000   & 0.5000    & 1.7500    & 0.7500    \\ \hline
    & Bias    & -0.0004  & 2.55E-06  & 0.0004   & -3.54E-06 & 2.30E-06 & 9.41E-08  & 2.39E-06  & -1.29E-07 \\ \hline
    & MSE     & 2.97E-07 & 1.57E-11  & 3.27E-07 & 2.51E-11  & 4.52E-07 & 4.21E-11  & 4.90E-07  & 4.52E-11  \\ \hline
    &         & \multicolumn{8}{c||}{}                                                                       \\ \hline
    & AVar    & 1.40E-07 & 1.32E-11  & 1.40E-07 & 1.32E-11  & 4.43E-07 & 4.15E-11  & 4.43E-07  & 4.15E-11  \\ \hline
\end{tabular}}
\caption{Estimates of the parameters of model~(\ref{eq:model_mul_comp}) when errors are stationary random variables as in~(\ref{eq:error_MA}) and M =  N = 100}
\label{table:16}
\end{table}

\section{Data Analysis}\label{sec:8}
We perform analysis of a simulated data set to exemplify how we can extract regular gray-scale texture from the one that is contaminated with noise. The data $y(m, n)$ is generated using~(\ref{eq:model_1}) with the following true parameter values:
$$A^0 = 6,\ B^0 = 6,\ \alpha^0 = 2.75,\ \beta^0 = 0.05,\ \gamma^0 = 2.5\ \textmd{ and } \delta^0 = 0.075$$
and the error random variables $\{X(m, n)\}$ are generated as follows:
\begin{equation*}
X(m, n) = \epsilon(m, n) + \rho_1 \epsilon(m-1,n) + \rho_2 \epsilon(m, n-1) + \rho_3 \epsilon(m-1, n-1),
\end{equation*}
where $\epsilon(m, n)$ are Gaussian random variables with mean 0 and variance $\sigma^2 = 2$ and $\rho_1 = 0.5$, $\rho_2 = 0.4$ and $\rho_3 = 0.3$. We generate $y(m, n)$ for $M = N = 100$. Figure~\ref{fig:true_signal} displays the true signal generated with the above mentioned parameters and Figure~\ref{fig:noisy_signal} displays the noisy signal that is the true signal along with the additive error $X(m, n)$ defined above.
\graphicspath{C:/Users/DELL-PC/Dropbox/Paper2}
\begin{figure}[H]
\centering
\begin{minipage}{0.5\textwidth}
\includegraphics[scale=0.2]{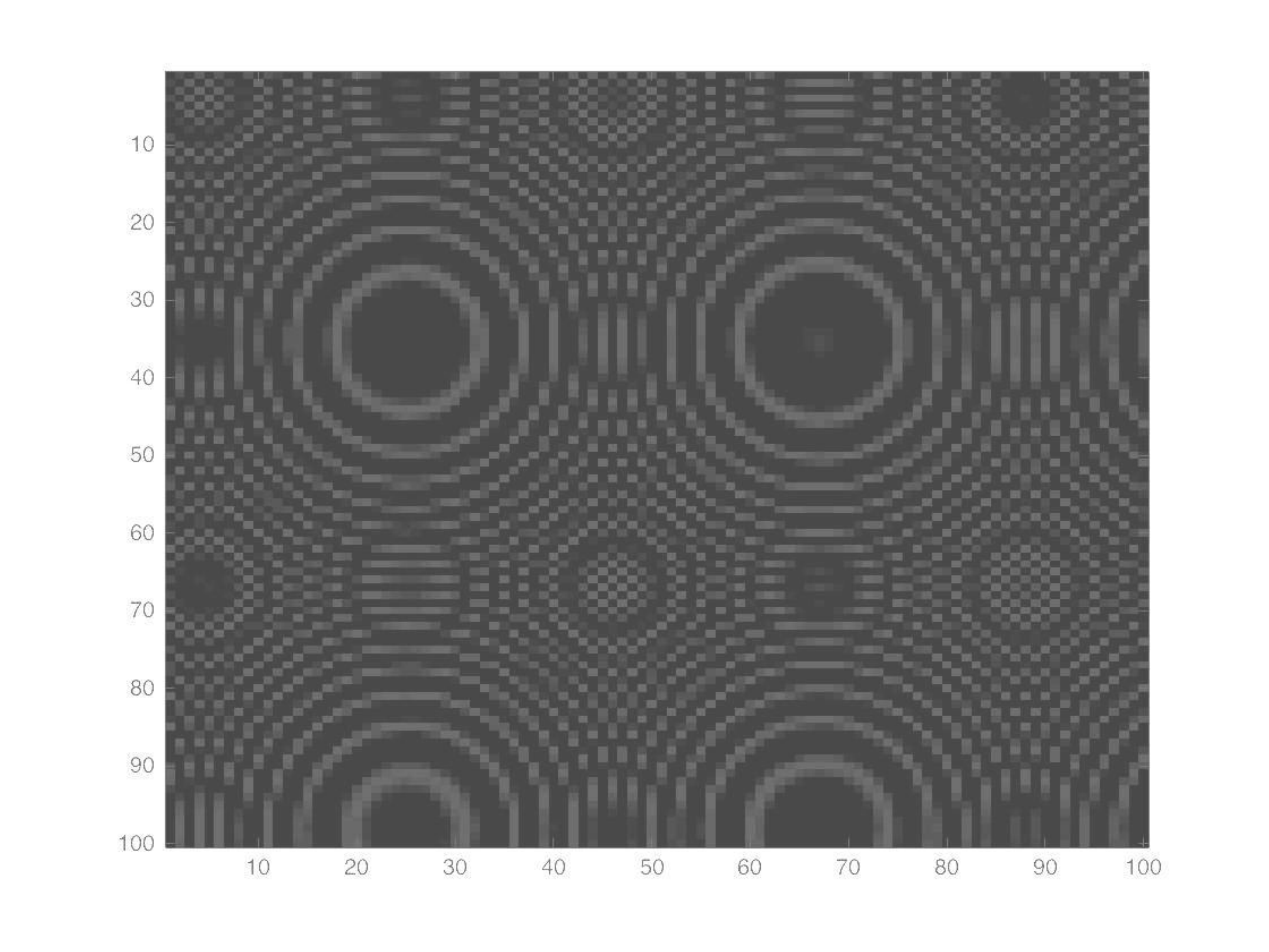}
\caption{True Signal}
\label{fig:true_signal}
\end{minipage}%
\begin{minipage}{0.5\textwidth}
\includegraphics[scale=0.2]{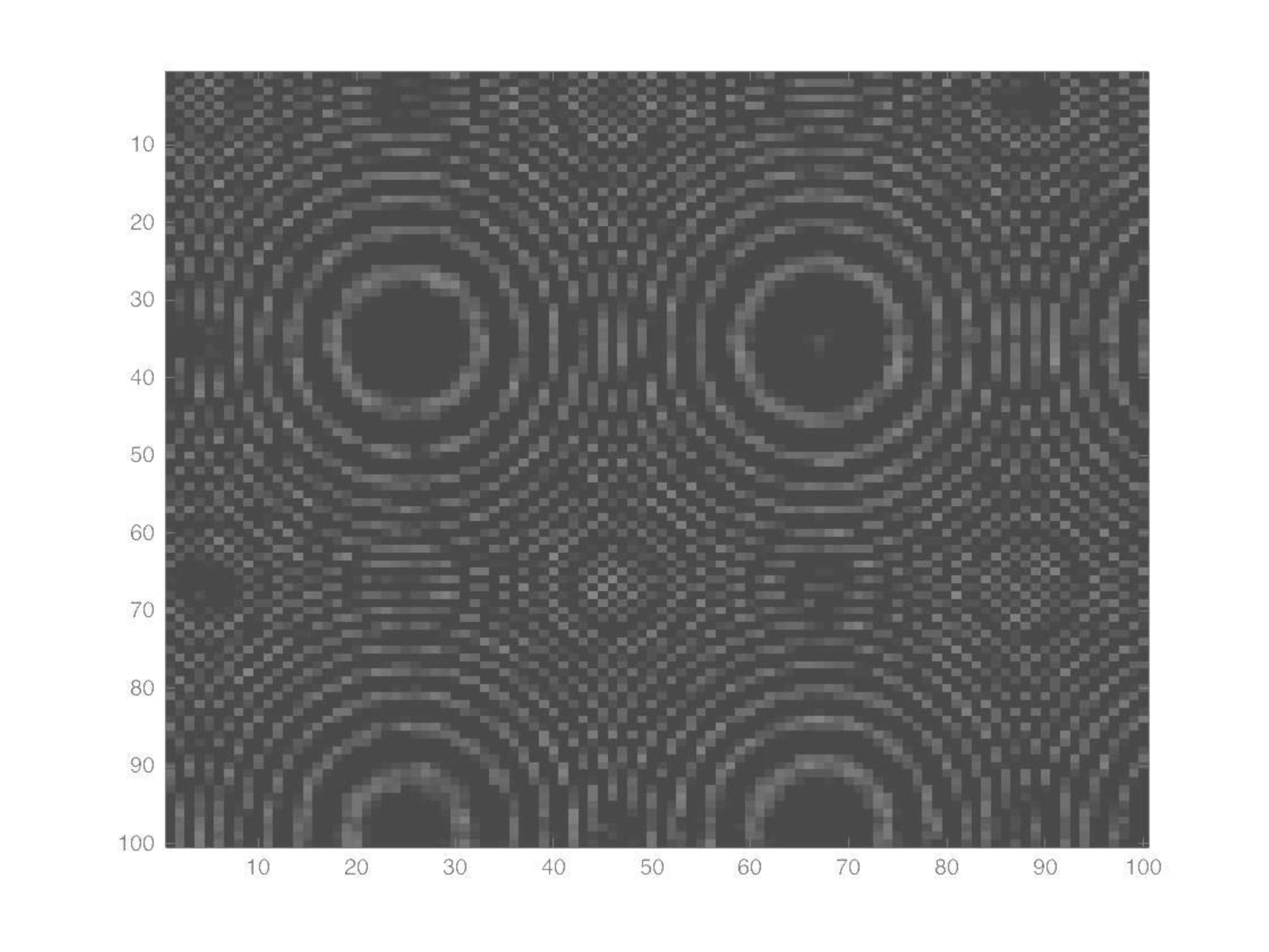}
\caption{Noisy Signal}
\label{fig:noisy_signal}
\end{minipage}
\end{figure}
\justify
Using the generated data matrix, we now fit model~(\ref{eq:model_1}) using both least squares estimation method and approximate least squares estimation method. 
 Following are the values of the estimates that we attain:
\begin{align*}\begin{split}
\textmd{ LSEs: } \hat{A} = 5.909047,\ \hat{B} = 6.073225,\ \hat{\alpha} = 2.750070,\  \hat{\beta}= 0.049998,\ \hat{\gamma} = 2.500897,\ \hat{\delta} = 0.074992 \\
\textmd{ALSEs: } \tilde{A} = 5.817259,\ \tilde{B} = 6.152546,\ \tilde{\alpha} = 2.751267,\ \tilde{\beta}= 0.049986,\ \tilde{\gamma} = 2.500725,\ \tilde{\delta} = 0.074993
\end{split} \end{align*}

\begin{figure}[H]
\centering
\begin{minipage}{0.5\textwidth}
\includegraphics[scale=0.2]{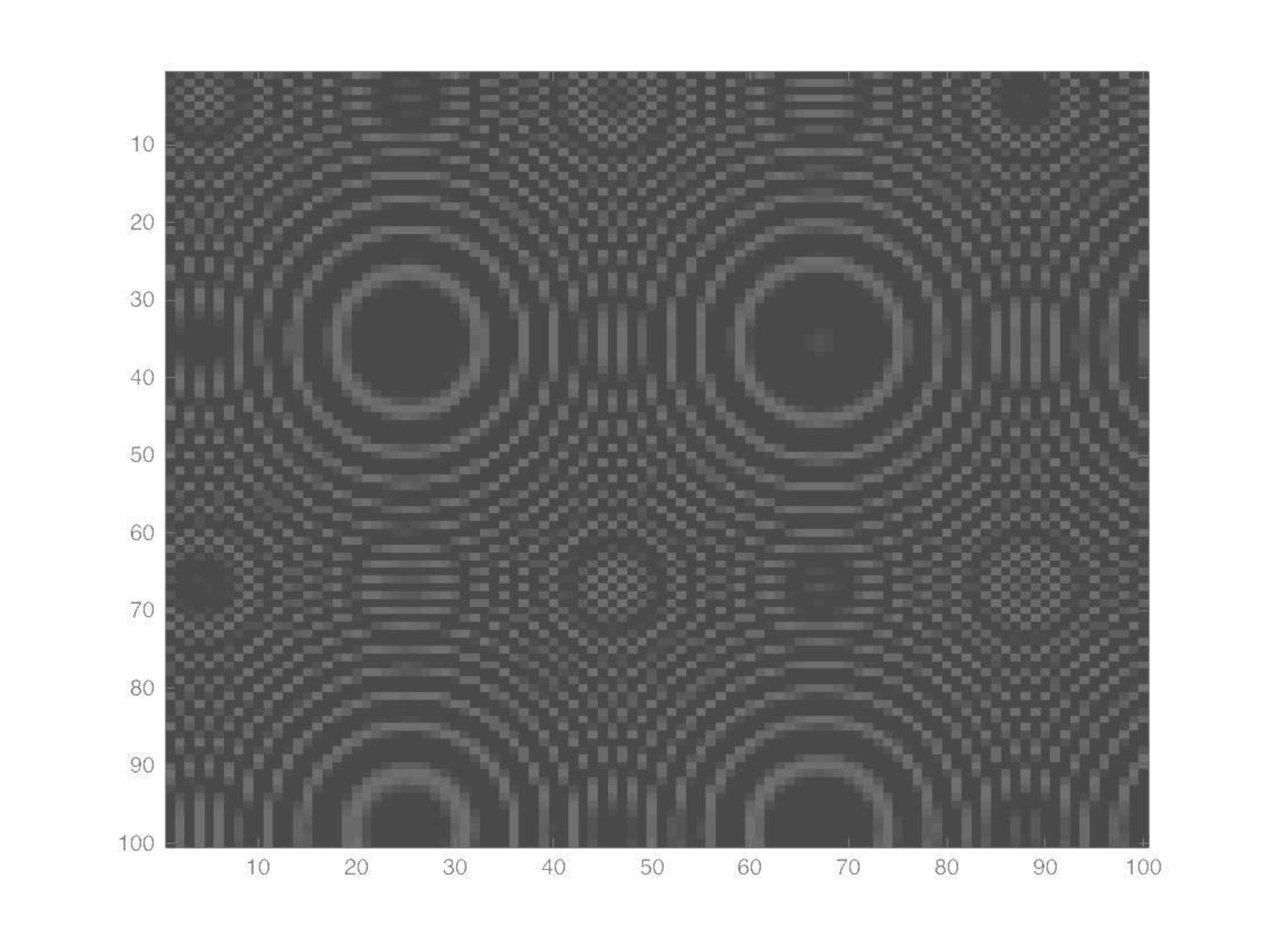}
\caption{Estimated Signal using LSEs}
\label{fig:estimated_signal_LSE}
\end{minipage}%
\begin{minipage}{0.5\textwidth}
\includegraphics[scale=0.2]{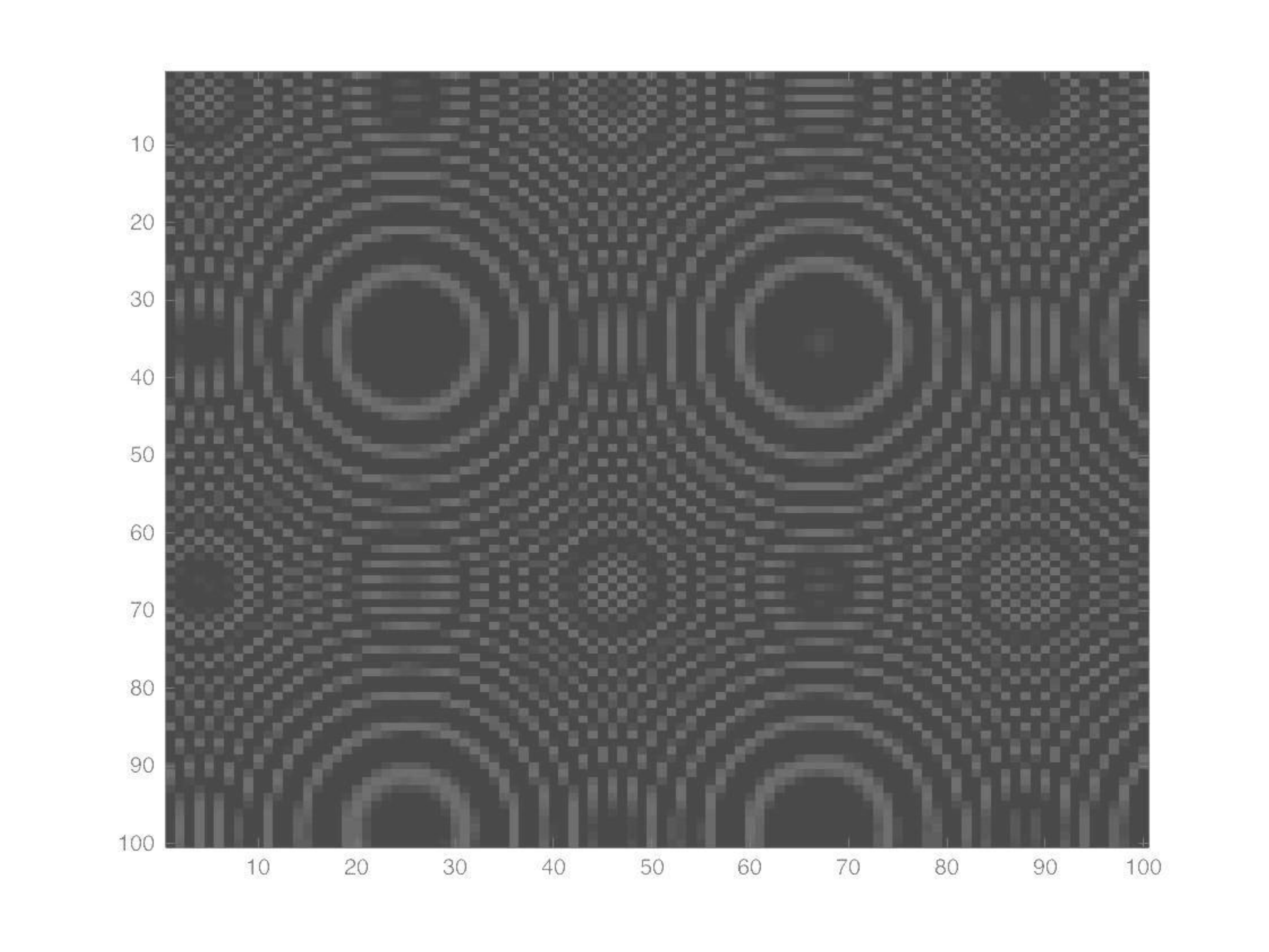}
\caption{Estimated Signal using ALSEs}
\label{fig:estimated_signal_ALSE}
\end{minipage}
\end{figure}
\justify
We plot the estimated signals using the above obtained estimates. The plots of the signals estimated using LSEs and ALSEs are given in Figures~\ref{fig:estimated_signal_LSE} and~\ref{fig:estimated_signal_ALSE} respectively.
Thus we may conclude that the estimated signal plots both using LSEs and ALSEs are well matched with the true signal plot, as is evident from the figures above.

\section{Conclusion}\label{sec:9}
In this paper, we propose approximate least squares estimators (ALSEs) to estimate the unknown parameters of a one component 2-D chirp model. We study their asymptotic properties. We show that they are strongly consistent and asymptotically, normally distributed and equivalent to the LSEs. The consistency of the ALSEs of the linear parameters is obtained under slightly weaker conditions than that obtained for the LSEs of the linear parameters, as we need not bound the parameter space in the former case. Also, the rate of convergence of the linear parameters is $M^{-1/2}N^{-1/2}$, that of frequencies $\alpha$ and $\gamma$ are $M^{-3/2}N^{-1/2}$ and $M^{-1/2}N^{-3/2}$ respectively and  that of frequency rates $\beta$ and $\delta$ are $M^{-5/2}N^{-1/2}$ and $M^{-1/2}N^{-5/2}$ respectively, same as that of corresponding LSEs.  Through simulation studies as well, we deduce that the estimators are consistent and asymptotically equivalent to the LSEs. We also propose sequential procedure to obtain the ALSEs of a multiple component 2-D chirp model, with the number of components to be known and study their asymptotic properties. We see that the results obtained for the one component model can be extended to the generalised model, that is the multiple component model.  
\appendix
\section*{Appendix A}\label{appendix:A}
\justify
The following lemmas are required to prove Theorem~\ref{theorem:1}.
\begin{lemma}\label{lemma:6}
Consider the set $S_c^{\boldsymbol{\vartheta}^0}$ = $\{\boldsymbol{\vartheta} : |\boldsymbol{\vartheta} - \boldsymbol{\vartheta}^0| \geqslant 4c\}$, if for any c $>$ 0, \\
\begin{equation}\label{eq:lemma6}
\limsup \sup\limits_{\boldsymbol{\vartheta} \in S_c^{\vartheta^0}} \frac{1}{M N}(I(\boldsymbol{\vartheta}) - I(\boldsymbol{\vartheta}^0)) < 0 \ a.s.
\end{equation}
then, $\tilde{\boldsymbol{\vartheta}}$ $\rightarrow$ $\boldsymbol{\vartheta}^0$ almost surely as min$\{M,N\} \rightarrow \infty$.
\end{lemma}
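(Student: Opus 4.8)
The plan is to argue by contradiction, exploiting the single fact that $\tilde{\boldsymbol{\vartheta}}$ is, by construction, a maximiser of the periodogram-type function $I(\boldsymbol{\vartheta})$. From this definition we get for free that $I(\tilde{\boldsymbol{\vartheta}}) \geqslant I(\boldsymbol{\vartheta}^0)$, and hence $\frac{1}{MN}\big(I(\tilde{\boldsymbol{\vartheta}}) - I(\boldsymbol{\vartheta}^0)\big) \geqslant 0$ for every $M$ and $N$. The hypothesis~(\ref{eq:lemma6}) supplies the opposite (strict, in the limit) inequality uniformly over the set $S_c^{\boldsymbol{\vartheta}^0}$ of parameter values bounded away from $\boldsymbol{\vartheta}^0$. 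The entire content of the lemma is to show that these two facts cannot coexist unless $\tilde{\boldsymbol{\vartheta}}$ eventually escapes every such set.

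Concretely, I would first suppose that $\tilde{\boldsymbol{\vartheta}}$ does \emph{not} converge to $\boldsymbol{\vartheta}^0$ almost surely. Then on an event of positive probability there exist a $c > 0$ and a subsequence of indices (along which $\mathrm{min}\{M,N\} \to \infty$) such that $|\tilde{\boldsymbol{\vartheta}} - \boldsymbol{\vartheta}^0| \geqslant 4c$, i.e.\ $\tilde{\boldsymbol{\vartheta}} \in S_c^{\boldsymbol{\vartheta}^0}$, for every index in that subsequence. Restricting attention to this subsequence and using that $\tilde{\boldsymbol{\vartheta}}$ then lies in $S_c^{\boldsymbol{\vartheta}^0}$, I would chain the two bounds as
\[
0 \;\leqslant\; \frac{1}{MN}\big(I(\tilde{\boldsymbol{\vartheta}}) - I(\boldsymbol{\vartheta}^0)\big) \;\leqslant\; \sup_{\boldsymbol{\vartheta} \in S_c^{\boldsymbol{\vartheta}^0}} \frac{1}{MN}\big(I(\boldsymbol{\vartheta}) - I(\boldsymbol{\vartheta}^0)\big).
\]
Taking $\limsup$ as $\mathrm{min}\{M,N\} \to \infty$ along the subsequence, the right-hand side is strictly negative almost surely by~(\ref{eq:lemma6}), whereas the left-hand side is non-negative, a contradiction. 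Hence no such $c$ and subsequence can persist on a set of positive probability, which forces $\tilde{\boldsymbol{\vartheta}} \xrightarrow{a.s.} \boldsymbol{\vartheta}^0$.

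I do not expect a serious obstacle in this particular lemma, since it is purely the identification-to-consistency bookkeeping step: all the genuine analytic effort is deferred to verifying the separation condition~(\ref{eq:lemma6}) itself, which is where Lemma~\ref{lemma:1} (and in particular the uniform almost-sure error bound of Lemma~\ref{lemma:1}\,(g)) will be invoked. The only points that demand a little care are (i) phrasing the negation of almost-sure convergence so that a single $c$ and a single subsequence may legitimately be extracted on a positive-probability event, and (ii) ensuring that the $\limsup$ is taken along exactly that subsequence, so that the strict negativity in~(\ref{eq:lemma6}) transfers to contradict the maximiser inequality. Both are routine and require nothing about the explicit form of $I(\boldsymbol{\vartheta})$.
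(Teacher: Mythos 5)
Your proposal is correct and follows essentially the same route as the paper's own proof: negate almost-sure convergence, extract a $c>0$ and a subsequence $\{M_k,N_k\}$ along which $\tilde{\boldsymbol{\vartheta}}\in S_c^{\boldsymbol{\vartheta}^0}$, and use the maximiser property $I(\tilde{\boldsymbol{\vartheta}})\geqslant I(\boldsymbol{\vartheta}^0)$ to force $\limsup\sup_{\boldsymbol{\vartheta}\in S_c^{\boldsymbol{\vartheta}^0}}\frac{1}{MN}\big(I(\boldsymbol{\vartheta})-I(\boldsymbol{\vartheta}^0)\big)\geqslant 0$, contradicting the hypothesis. Your explicit chaining of the two inequalities and the care about the positive-probability event are, if anything, slightly more detailed than the paper's write-up, but the argument is the same.
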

\begin{proof}
Let us denote $\boldsymbol{\vartheta}^{0}$ = ($\alpha^0$, $\beta^0$, $\gamma^0$, $\delta^0$) and $\tilde{\boldsymbol{\vartheta}}$ = ($\tilde{\alpha}$, $\tilde{\beta}$, $\tilde{\gamma}$, $\tilde{\delta}$) by $\tilde{\boldsymbol{\vartheta}}_{MN}$  to assert that it depends on $M$ and $N$.
 Suppose~(\ref{eq:lemma6}) is true and $\tilde{\boldsymbol{\vartheta}}_{MN}$ $\nrightarrow$ $\boldsymbol{\vartheta}^{0}$ almost surely as min$\{M, N\}$ $\rightarrow$ $\infty$, then there exists a $c > 0$ and a subsequence \{$M_k,$ $N_k$\} of $\{M, N\}$ such that $\tilde{\boldsymbol{\vartheta}}_{M_kN_k}$ $\in$ $S_c^{\vartheta^0}$ for all $k = 1, 2, \cdots$.\\
Since $\tilde{\boldsymbol{\vartheta}}_{M_kN_k}$ is the ALSE of $\boldsymbol{\vartheta}^{0}$ when $M = M_k$ and $N = N_k$, 
$$\limsup \sup\limits_{\boldsymbol{\vartheta} \in S_c^{\vartheta^0}} \frac{1}{M_kN_k}\bigg(I_{M_kN_k}(\boldsymbol{\vartheta}) - I_{M_kN_k}(\boldsymbol{\vartheta}^{0})\bigg) \geqslant 0$$
This contradicts~(\ref{eq:lemma6}). Hence $\tilde{\alpha}$ $\xrightarrow{a.s.}$ $\alpha^0$, $\tilde{\beta}$ $\xrightarrow{a.s.}$ $\beta^0$, $\tilde{\gamma}$ $\xrightarrow{a.s.}$ $\gamma^0$ and $\tilde{\delta}$ $\xrightarrow{a.s.}$ $\delta^0$.\\
\end{proof}

\begin{lemma}\label{lemma:7}
If assumptions 1 and 2 are satisfied then:
\begin{equation*}\begin{split}
M(\tilde{\alpha} - \alpha^0) \xrightarrow{a.s.} 0,\ M^2(\tilde{\beta} - \beta^0) \xrightarrow{a.s.} 0,\ N(\tilde{\gamma} - \gamma^0) \xrightarrow{a.s.} 0 \textmd{ and } N^2(\tilde{\delta} - \delta^0) \xrightarrow{a.s.} 0 
\end{split}\end{equation*}
\end{lemma}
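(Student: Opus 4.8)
The plan is to derive all four rates from a single geometric fact: the normalized periodogram $\frac{1}{MN}I(\boldsymbol{\vartheta})$ concentrates at $\boldsymbol{\vartheta}^0$ with peak-widths of order $M^{-1}, M^{-2}, N^{-1}, N^{-2}$ in the $\alpha,\beta,\gamma,\delta$ directions. I grant the strong consistency $\tilde{\boldsymbol{\vartheta}} \xrightarrow{a.s.} \boldsymbol{\vartheta}^0$, which is available once the criterion of the preceding lemma is verified (a coarser argument that does not use the present rates, so there is no circularity), and I work on the probability-one event on which this consistency and the uniform error bound of Lemma~\ref{lemma:1}(g) (with $s=t=0$) both hold. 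The entire argument rests on the defining inequality $I(\tilde{\boldsymbol{\vartheta}}) \ge I(\boldsymbol{\vartheta}^0)$, since $\tilde{\boldsymbol{\vartheta}}$ maximizes $I$ and $\boldsymbol{\vartheta}^0$ is interior.

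First I would make the peak explicit. Writing $A^0\cos\theta + B^0\sin\theta = \tfrac12(A^0-iB^0)e^{i\theta} + \tfrac12(A^0+iB^0)e^{-i\theta}$ and substituting the model (here $\phi$ and $\phi^0$ denote the trial and true phase functions), the noise part of $\frac{1}{MN}\sum_{m,n}y(m,n)e^{-i\phi}$ is $o(1)$ uniformly by Lemma~\ref{lemma:1}(g), and the term carrying the additive phase $\phi^0+\phi$ is $o(1)$ by Lemma~\ref{lemma:1}(c),(d), since its quadratic coefficients $\beta^0+\beta,\ \delta^0+\delta$ stay inside $(0,2\pi)$. Only the difference phase survives, and it factorizes: with $a=M(\alpha^0-\alpha)$, $b=M^2(\beta^0-\beta)$, $c=N(\gamma^0-\gamma)$, $d=N^2(\delta^0-\delta)$,
\[ \frac{1}{MN}\sum_{m,n}y(m,n)e^{-i\phi} = \frac{A^0-iB^0}{2}\,g_M(a,b)\,g_N(c,d)+o(1), \]
where $g_M(a,b)=\frac1M\sum_{m=1}^M e^{i(am/M+bm^2/M^2)}$ and $g_N$ is the $(c,d,N)$ analogue. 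At $\boldsymbol{\vartheta}^0$ one has $a=b=c=d=0$, so $g_M=g_N=1$ and $\frac{1}{MN}I(\boldsymbol{\vartheta}^0)\to \tfrac12\big((A^0)^2+(B^0)^2\big)$.

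The analytic heart is the behaviour of the chirp kernel $g_M$. As a Riemann sum it converges, $g_M(a,b)\to h(a,b):=\int_0^1 e^{i(ax+bx^2)}\,dx$ uniformly on compact sets, and $|h(a,b)|\le 1$ with equality iff $(a,b)=(0,0)$ (because $|h|\le\int_0^1 1\,dx=1$, with equality only when $ax+bx^2$ is a.e. constant); moreover $|h(a,b)|\to 0$ as $|a|+|b|\to\infty$ by van der Corput's estimate (the second derivative $2b$ handling large $|b|$, the monotone first derivative handling large $|a|$ with $|b|$ bounded). Hence whenever $(a,b)$ stays outside a fixed neighbourhood of the origin, $\limsup_M|g_M(a,b)|<1$, and similarly for $g_N$.

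Finally the contradiction: if the conclusion failed on our path, the vector $\big(M(\alpha^0-\tilde\alpha), M^2(\beta^0-\tilde\beta), N(\gamma^0-\tilde\gamma), N^2(\delta^0-\tilde\delta)\big)$ would not tend to $0$; I pass to a subsequence on which it stays outside a ball of radius $\epsilon$ and each coordinate converges in $[-\infty,\infty]$. Then at least one of the pairs $(a,b)$, $(c,d)$ is bounded away from the origin, so the matching factor has $\limsup|g|<1$ while the other has modulus $\le1$; writing $\kappa:=\limsup|g_M(a,b)|^2|g_N(c,d)|^2<1$ and combining with the height computation,
\[ \limsup\frac{1}{MN}I(\tilde{\boldsymbol{\vartheta}}) \le \tfrac12\big((A^0)^2+(B^0)^2\big)\,\kappa < \tfrac12\big((A^0)^2+(B^0)^2\big) = \lim\frac{1}{MN}I(\boldsymbol{\vartheta}^0), \]
using $(A^0)^2+(B^0)^2>0$ (the amplitudes are nonzero). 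This contradicts $I(\tilde{\boldsymbol{\vartheta}})\ge I(\boldsymbol{\vartheta}^0)$, so all four scaled differences tend to $0$ almost surely. The main obstacle is precisely the kernel estimate of the third paragraph: separating $|g_M|$ from $1$ uniformly, both when the scaled arguments converge to a nonzero point and when they diverge (where the Riemann-sum picture breaks and one must invoke van der Corput / Fresnel bounds). A secondary nuisance is the coupling of the four parameters, namely verifying that a failure confined to one coordinate still forces the corresponding two-dimensional factor strictly below $1$ rather than being masked by its companion sitting at the peak.
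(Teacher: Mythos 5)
Your route is genuinely different from the paper's. The paper proves this lemma by a mean-value expansion of the score: since $I'(\tilde{\boldsymbol{\vartheta}})=0$, it writes $(\tilde{\boldsymbol{\vartheta}}-\boldsymbol{\vartheta}^0)(\sqrt{MN}\mathbf{D}_1)^{-1}=-\frac{1}{\sqrt{MN}}I'(\boldsymbol{\vartheta}^0)\mathbf{D}_1\,[\mathbf{D}_1 I''(\bar{\boldsymbol{\vartheta}})\mathbf{D}_1]^{-1}$ with $\mathbf{D}_1=\textnormal{diag}(M^{-3/2}N^{-1/2},M^{-5/2}N^{-1/2},M^{-1/2}N^{-3/2},M^{-1/2}N^{-5/2})$, then shows the scaled gradient tends to $0$ a.s.\ and the scaled Hessian tends a.s.\ to a nonsingular matrix $-\mathbf{S}$, both directly from Lemma~\ref{lemma:1}(c)--(g) (consistency enters only through $\bar{\boldsymbol{\vartheta}}\to\boldsymbol{\vartheta}^0$ and continuity of $I''$). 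Your peak-width argument is derivative-free and more geometric, and it explains \emph{why} the rates are $M^{-1},M^{-2},N^{-1},N^{-2}$; but it requires new exponential-sum estimates that the paper's differentiation route never needs, because every limit the paper uses is already catalogued in Lemma~\ref{lemma:1}.

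That is where the genuine gap lies. You dismiss the sum-frequency term $\frac{1}{MN}\sum_{m,n}e^{-i(\phi^0+\phi)}$ ``by Lemma~\ref{lemma:1}(c),(d)'', but those are pointwise statements, valid for a \emph{fixed} frequency vector outside a countable exceptional set, whereas your argument evaluates this sum at the random, $(M,N)$-dependent point $\boldsymbol{\vartheta}=\tilde{\boldsymbol{\vartheta}}_{MN}$. The exceptional resonances are dense: e.g.\ at quadratic coefficient $v=\pi$ and linear coefficient $u=\pi$ one has $\frac{1}{M}\sum_{m=1}^M e^{i(um+vm^2)}=1$ identically (since $m+m^2$ is even), so no neighbourhood of a resonant point carries a uniform $o(1)$ bound. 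What you actually need is a locally uniform version near the doubled frequencies $(2\alpha^0,2\beta^0,2\gamma^0,2\delta^0)$ (mod $2\pi$) --- valid for generic $\boldsymbol{\vartheta}^0$, via minimal-denominator/Gauss-sum considerations --- i.e.\ the analogue for the signal term of the uniformity that Lemma~\ref{lemma:1}(g) provides for the noise term. This lemma is true but nontrivial, and it is nowhere stated or proved in your proposal; also, your coefficients lie in $(0,2\pi)$ while Lemma~\ref{lemma:1}(c),(d) is stated on $(0,\pi)$ (a minor point, fixable by conjugation). Separately, the kernel-separation estimate you yourself flag as the ``analytic heart'' --- $\limsup|g_M(a,b)|<1$ off the origin, including divergent $(a,b)$ --- is asserted, not proved: the van der Corput/Kusmin--Landau bounds you invoke for the integral $h$ transfer to the discrete $g_M$ only because consistency forces the raw offsets $\tilde{\alpha}-\alpha^0$, $\tilde{\beta}-\beta^0$ to $0$, keeping the phase derivative away from nonzero multiples of $2\pi$ (giving $|g_M|=O(1/|a|)$ when $b$ stays bounded, and $O(\sqrt{\nu}+|b|^{-1/2})$ otherwise); this dependence on consistency should be made explicit. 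With those two exponential-sum lemmas supplied, your contradiction-on-a-subsequence argument closes correctly; without them, the proof is incomplete at exactly the steps where it departs from the paper.
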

\begin{proof}
Let $I'(\boldsymbol{\vartheta})$ be 1 $\times$ 4 first derivative vector and $I''(\boldsymbol{\vartheta})$ be 4 $\times$ 4 second derivative matrix of $I(\boldsymbol{\vartheta})$. Using multivariate Taylor series expansion of $I(\boldsymbol{\vartheta})$ around $\boldsymbol{\vartheta}^0$, we have:
\begin{align*}
I'(\tilde{\boldsymbol{\vartheta}})  -  I'(\boldsymbol{\vartheta}^0) = (\tilde{\boldsymbol{\vartheta}} - \boldsymbol{\vartheta}^0)I''(\bar{\boldsymbol{\vartheta}})
\end{align*}
Since $\tilde{\boldsymbol{\vartheta}}$ is the ALSE of $\boldsymbol{\vartheta}^0$, $I'(\tilde{\boldsymbol{\vartheta}})$ = 0 
\begin{equation}\label{eq:non_lin}
(\tilde{\boldsymbol{\vartheta}} - \boldsymbol{\vartheta}^0)(\sqrt{MN}\mathbf{D}_1)^{-1} = -\frac{1}{\sqrt{MN}}I'(\boldsymbol{\vartheta}^0)\mathbf{D}_1[\mathbf{D}_1I''(\bar{\boldsymbol{\vartheta}})\mathbf{D}_1]^{-1}
\end{equation}
where, $\mathbf{D}_1$ = $\textnormal{diag}(M^{-3/2}N^{-1/2}, M^{-5/2}N^{-1/2}, M^{-1/2}N^{-3/2}, M^{-1/2}N^{-5/2}).$ To show that the left hand side of equation~(\ref{eq:non_lin}) goes to 0 as min$\{M, N\}$ $\rightarrow$ $\infty$, we first consider the vector  $\frac{1}{\sqrt{MN}}I'(\boldsymbol{\vartheta}^0)$
\begin{equation}\begin{split}\label{eq:I_vector}
&{ = \frac{1}{\sqrt{MN}}\left[\begin{array}{cccc}\frac{\partial{I(\boldsymbol{\vartheta}^0)}}{\partial{\alpha}} & \frac{\partial{I(\boldsymbol{\vartheta}^0)}}{\partial{\beta}} & \frac{\partial{I(\boldsymbol{\vartheta}^0)}}{\partial{\gamma}} & \frac{\partial{I(\boldsymbol{\vartheta}^0)}}{\partial{\delta}}\end{array}\right] \left[\begin{array}{cccc}\frac{1}{M^{\frac{3}{2}}N^{\frac{1}{2}}} & 0 & 0 & 0 \\0 & \frac{1}{M^{\frac{5}{2}}N^{\frac{1}{2}}}  & 0 & 0 \\0 & 0 & \frac{1}{M^{\frac{1}{2}}N^{\frac{3}{2}}}  & 0 \\0 & 0 & 0 & \frac{1}{M^{\frac{1}{2}}N^{\frac{5}{2}}} \end{array}\right]} \\
&{= \left[\begin{array}{cccc}\frac{1}{M^2N}\frac{\partial{I(\boldsymbol{\vartheta}^0)}}{\partial{\alpha}} & \frac{1}{M^3N}\frac{\partial{I(\boldsymbol{\vartheta}^0)}}{\partial{\beta}} & \frac{1}{MN^2}\frac{\partial{I(\boldsymbol{\vartheta}^0)}}{\partial{\gamma}} & \frac{1}{MN^3}\frac{\partial{I(\boldsymbol{\vartheta}^0)}}{\partial{\delta}}\end{array}\right] } \\
\end{split}\end{equation}
From~(\ref{eq:per}), we can write $I(\boldsymbol{\vartheta})$ as:
\begin{equation*}\begin{split}
& = \frac{2}{MN} \bigg(\sum_{n = 1}^{N}\sum_{m = 1}^{M} y(m, n)\cos(\alpha^0 m + \beta^0 m^2 + \gamma^0 n + \delta^0 n^2)\bigg)^2 + \\
& \quad \hspace{2mm} \frac{2}{MN} \bigg(\sum_{n = 1}^{N}\sum_{m = 1}^{M} y(m, n)\sin(\alpha^0 m + \beta^0 m^2 + \gamma^0 n + \delta^0 n^2)\bigg)^2.
\end{split}\end{equation*}
Thus,
\vspace{-3mm}
\begin{equation*}\begin{split}
\frac{1}{M^2N} \frac{\partial{I(\boldsymbol{\vartheta}^0)}}{\partial{\alpha}}&  =  \frac{4}{M^3N^2} \bigg(\sum_{n = 1}^{N}\sum_{m = 1}^{M} y(m, n)\cos(\alpha^0 m + \beta^0 m^2 + \gamma^0 n + \delta^0 n^2)\bigg) \\
& \quad \quad \bigg(-\sum_{n = 1}^{N}\sum_{m = 1}^{M}m\ y(m, n)\sin(\alpha^0 m + \beta^0 m^2 + \gamma^0 n + \delta^0 n^2)\bigg) \\
& + \frac{4}{M^3N^2} \bigg(\sum_{n = 1}^{N}\sum_{m = 1}^{M} y(m, n)\sin(\alpha^0 m + \beta^0 m^2 + \gamma^0 n + \delta^0 n^2)\bigg) \\
&  \quad \quad \quad \bigg(\sum_{n = 1}^{N}\sum_{m = 1}^{M}m\ y(m, n)\cos(\alpha^0 m + \beta^0 m^2 + \gamma^0 n + \delta^0 n^2)\bigg).
\end{split}\end{equation*}
Now using equation~(\ref{eq:model_1}), taking limit as min$\{M, N\}$ $\rightarrow$ $\infty$, and then using Lemma ~\ref{lemma:1}, parts \textit{(c)-(g)}, we have:
\begin{equation*}
\frac{1}{M^2N} \frac{\partial{I(\boldsymbol{\vartheta}^0)}}{\partial{\alpha}}  \xrightarrow{a.s.} 0.  
\end{equation*}
On similar lines, one can show that rest of the elements of the above vector~(\ref{eq:I_vector}) tend to 0 as min$\{M, N\}$ $\rightarrow$ $\infty$.
Thus, we have:
\begin{equation}\label{eq:I_part_1}
\lim_{n \rightarrow \infty}\frac{1}{\sqrt{MN}}I'(\boldsymbol{\vartheta}^0) = 0.
\end{equation}
\vspace{-2mm}
\noindent Now we consider the second derivative matrix: $\mathbf{D}_1I''(\bar{\boldsymbol{\vartheta}})\mathbf{D}_1.$ Since $I''(\boldsymbol{\vartheta})$ is a continuous function of $\boldsymbol{\vartheta}$, 
\begin{equation*}
\lim\limits_{n \rightarrow \infty} \mathbf{D}_1I''(\bar{\boldsymbol{\vartheta}})\mathbf{D}_1 = \lim\limits_{n \rightarrow \infty} \mathbf{D}_1I''(\boldsymbol{\vartheta}^0)\mathbf{D}_1,
\end{equation*}
where,
\begin{equation*}\begin{split}
\mathbf{D}_1I''(\boldsymbol{\vartheta}^0)\mathbf{D}_1 & {\large =\left[\begin{array}{cccc} \frac{1}{M^3N}\frac{\partial^2I(\boldsymbol{\vartheta}^0)}{\partial\alpha^2} &  \frac{1}{M^4N} \frac{\partial^2I(\boldsymbol{\vartheta}^0)}{\partial\alpha\partial\beta} &  \frac{1}{M^2N^2} \frac{\partial^2I(\boldsymbol{\vartheta}^0)}{\partial\alpha\partial\gamma} &  \frac{1}{M^2N^3} \frac{\partial^2I(\boldsymbol{\vartheta}^0)}{\partial\alpha\partial\delta} \\  \frac{1}{M^4N}\frac{\partial^2I(\boldsymbol{\vartheta}^0)}{\partial\beta\partial\alpha} &  \frac{1}{M^5N} \frac{\partial^2I(\boldsymbol{\vartheta}^0)}{\partial\beta^2} &  \frac{1}{M^3N^2} \frac{\partial^2I(\boldsymbol{\vartheta}^0)}{\partial\beta\partial\gamma} & \frac{1}{M^3N^3} \frac{\partial^2I(\boldsymbol{\vartheta}^0)}{\partial\beta\partial\delta} \\ \frac{1}{M^2N^2}\frac{\partial^2I(\boldsymbol{\vartheta}^0)}{\partial\gamma\partial\alpha} & \frac{1}{M^3N^2} \frac{\partial^2I(\boldsymbol{\vartheta}^0)}{\partial\gamma\partial\beta} & \frac{1}{MN^3} \frac{\partial^2I(\boldsymbol{\vartheta}^0)}{\partial\gamma^2} &  \frac{1}{MN^4} \frac{\partial^2I(\boldsymbol{\vartheta}^0)}{\partial\gamma\partial\delta} \\ \frac{1}{M^2N^3}\frac{\partial^2I(\boldsymbol{\vartheta}^0)}{\partial\delta\partial\alpha} & \frac{1}{M^3N^3} \frac{\partial^2I(\boldsymbol{\vartheta}^0)}{\partial\delta\partial\beta} & \frac{1}{MN^4} \frac{\partial^2I(\boldsymbol{\vartheta}^0)}{\partial\delta\partial\gamma} &  \frac{1}{MN^5}\frac{\partial^2I(\boldsymbol{\vartheta}^0)}{\partial\delta^2}\end{array}\right]}.
\end{split}\end{equation*}
Using Lemma~\ref{lemma:1}, parts \textit{(c)} - \textit{(g)} provided in section~\ref{sec:2and3 }, we obtain the following:
\vspace{-2mm}
\begin{equation}\label{eq:I_part_2}
\lim\limits_{\textnormal{min}\{M, N\} \rightarrow \infty} \mathbf{D}_1I''(\boldsymbol{\vartheta}^0)\mathbf{D}_1 \overset{\mathrm{a.s.}}{=} -\mathbf{S}  \\
\end{equation}
where,
\begin{equation*}
\mathbf{S} = {\large \left[\begin{array}{cccc}\frac{{A^0}^2 + {B^0}^2}{12} & \frac{{A^0}^2 + {B^0}^2}{12} & 0 & 0 \\\frac{{A^0}^2 + {B^0}^2}{12} & \frac{4({A^0}^2 + {B^0}^2)}{45} & 0 & 0 \\0 & 0 & \frac{{A^0}^2 + {B^0}^2}{12} & \frac{{A^0}^2 + {B^0}^2}{12} \\0 & 0 & \frac{{A^0}^2 + {B^0}^2}{12} & \frac{4({A^0}^2 + {B^0}^2)}{45} \end{array}\right]} > 0.
\end{equation*}
Here, a matrix $A > 0$, means that it is a positive definite matrix.
From~(\ref{eq:I_part_1}) and~(\ref{eq:I_part_2}), we get the desired result.\\
\end{proof}
\justify
\textit{Proof of Theorem 1:}
We first prove the consistency of the non-linear parameters.\\
Consider the difference:
\begin{flalign*}
\frac{1}{MN}\Bigg[I(\boldsymbol{\vartheta}) - I(\boldsymbol{\vartheta}^0)\Bigg] & = \frac{2}{(MN)^2} \bigg|\sum_{m = 1}^{M}\sum_{n = 1}^{N}y(m ,n)e^{-i(\alpha m + \beta m^2 + \gamma n + \delta n^2)}\bigg|^2 - & \\
& \quad \hspace{2mm} \frac{2}{(MN)^2} \bigg|\sum_{m = 1}^{M}\sum_{n = 1}^{N}y(m ,n)e^{-i(\alpha^0 m + \beta^0 m^2 + \gamma^0 n + \delta^0 n^2)}\bigg|^2&
\end{flalign*}
\justify
\vspace{-5mm}
Consider the set $S_c^{\vartheta^0}$ defined in~\ref{lemma:6}. We split this set into four sets and thus it can be rewritten as $S_c^{(\vartheta^0,\alpha)} \cup S_c^{(\vartheta^0,\beta)} \cup S_c^{(\vartheta^0,\gamma)} \cup S_c^{(\vartheta^0,\delta)}.$
\justify
\vspace{-5mm}
Here, $S_c^{(\vartheta^0,\alpha)} = \{\alpha: |\alpha - \alpha^0| > c\},$
$S_c^{(\vartheta^0,\beta)} = \{ \beta: |\beta - \beta^0| > c\},$
$S_c^{(\vartheta^0,\gamma)} = \{ \gamma: |\gamma - \gamma^0| > c\}$ and
$S_c^{(\vartheta^0,\delta)} = \{ \delta: |\delta - \delta^0| > c\}.$
\vspace{-2mm}
\begin{equation*}\begin{split}
& \limsup \sup\limits_{\boldsymbol{\vartheta} \in S_c^{(\vartheta^0,\alpha)}}\frac{1}{MN}\Bigg[I(\boldsymbol{\vartheta}) - I(\boldsymbol{\vartheta}^0)\Bigg] \\
& = - 2\Bigg[ \lim\limits_{\textnormal{min}\{M, N\} \rightarrow \infty}\Bigg\{\frac{1}{MN}\sum_{m = 1}^{M}\sum_{n = 1}^{N}A^0\cos^2(\alpha^0 m + \beta^0 m^2 + \gamma^0 n + \delta^0 n^2 )\Bigg\}^2 + \\
& \hspace{35mm} \Bigg\{\frac{1}{MN}\sum_{m = 1}^{M}\sum_{n = 1}^{N}B^0\sin^2(\alpha^0 m + \beta^0 m^2 + \gamma^0 n + \delta^0 n^2)\Bigg\}^2\Bigg]\\
& = -\frac{1}{2}({A^0}^2 + {B^0}^2) < 0 \ a.s.
\end{split}\end{equation*}
using Lemma~\ref{lemma:1}, parts \textit{(e)} and \textit{(f)}.
Similarly, for all other sets  $S_c^{(\vartheta^0,\beta)}$, $S_c^{(\vartheta^0,\gamma)}$ and $S_c^{(\vartheta^0,\delta)}$, this can be shown.\\
Hence combining, we have:
$\limsup \sup\limits_{\boldsymbol{\vartheta} \in S_c^{\vartheta^0}}\frac{1}{MN}\Bigg[I(\boldsymbol{\vartheta}) - I(\boldsymbol{\vartheta}^0)\Bigg] <$ 0 a.s.
Thus, using Lemma~\ref{lemma:6}, we get the desired result.\\ \qed
\vspace{-5mm}
\section*{Appendix B}\label{appendix:B}
\textit{Proof of Theorem 2:} Consider the following:
\vspace{-5mm}
\begin{flalign*}
& \frac{1}{MN}Q_{MN}(\boldsymbol{\theta}) = \frac{1}{MN}\sum_{m = 1}^{M}\sum_{n = 1}^{N}\bigg(y(m, n) - A\cos(\boldsymbol{\vartheta}^{T}u(m,n)) - B \sin(\boldsymbol{\vartheta}^{T}u(m,n))\bigg)^2 &\\
& = C - \frac {1}{MN}J_{MN}(\boldsymbol{\theta}) + o(1)&\\
&\textmd{Here, } C =  \frac{1}{MN}\sum_{m = 1}^{M}\sum_{n = 1}^{N}y(m, n)^2 \textmd{ and, }& \\
&\frac {1}{MN}J_{MN}(\boldsymbol{\theta}) = \frac{2}{MN}\sum_{m = 1}^{M}\sum_{n = 1}^{N} y(m, n)\bigg(A\cos(\alpha m + \beta m^2 + \gamma n + \delta n^2) + B \sin(\alpha m + \beta m^2 + \gamma n + \delta n^2)\bigg)&\\
& \hspace{25mm}- \frac{A^2 + B^2}{2}.&
\end{flalign*}
Now we compute the first derivative of $\frac {1}{MN}J_{MN}(\boldsymbol{\theta})$ and $\frac {1}{MN}Q_{MN}(\boldsymbol{\theta})$  at $\boldsymbol{\theta}$ = $\boldsymbol{\theta}^0$ and using Lemma~\ref{lemma:1}, parts \textit{(c) - (g)}, we obtain the following relation between them:
\begin{equation}\label{eq:relation_Q_J}
\mathbf{Q}'_{MN}(\boldsymbol{\theta}^0)\mathbf{D} = - \mathbf{J}'_{MN}(\boldsymbol{\theta}^0)\mathbf{D} + 
\begin{pmatrix} 
o(M^{\frac{1}{2}}N^{\frac{1}{2}})\\
o(M^{\frac{1}{2}}N^{\frac{1}{2}}) \\
o(M^{\frac{3}{2}}N^{\frac{1}{2}}) \\
o(M^{\frac{5}{2}}N^{\frac{1}{2}}) \\
o(M^{\frac{1}{2}}N^{\frac{3}{2}})\\
o(M^{\frac{1}{2}}N^{\frac{5}{2}})
\end{pmatrix}^T \mathbf{D}. \\
\end{equation}
Since the second expression of equation(~\ref{eq:relation_Q_J}) goes to 0, as $\textnormal{min}\{M, N\} \rightarrow \infty$, we have:
\begin{equation}\label{eq:J_part_1}
\lim\limits_{\textnormal{min}\{M, N\} \rightarrow \infty} \mathbf{Q}'_{MN}(\boldsymbol{\theta}^0)\mathbf{D} =  \lim\limits_{\textnormal{min}\{M, N\} \rightarrow \infty} -\mathbf{J}'_{MN}(\boldsymbol{\theta}^0)\mathbf{D}.
\end{equation}
It can be easily seen that, at $\tilde{A}$ = $\hat{A}(\alpha , \beta, \gamma, \delta)$ and $\tilde{B}$ = $\hat{B}(\alpha , \beta, \gamma, \delta)$, 
\begin{equation*}
J_{MN}(\tilde{A}, \tilde{B}, \alpha, \beta, \gamma, \delta) = \frac{1}{MN} I(\alpha , \beta, \gamma, \delta),
\end{equation*} 
\justify
\vspace{-3mm}
where $I(\alpha , \beta, \gamma, \delta)$ is as defined in~(\ref{eq:per}).
Hence the estimator of $\boldsymbol{\theta}^0$ which maximizes $J_{MN}(\boldsymbol{\theta})$ is equivalent to $\tilde{\boldsymbol{\theta}}$, the ALSE of $\boldsymbol{\theta}^0$. Thus, the ALSE $\tilde{\boldsymbol{\theta}}$ in terms of $J_{MN}(\boldsymbol{\theta})$ can be written as the following, using Taylor series expansion:
\begin{equation}\label{eq:J_taylor}
 (\tilde{\boldsymbol{\theta}} - \boldsymbol{\theta}^0)\mathbf{D}^{-1} = -[\mathbf{J}_{MN}'(\boldsymbol{\theta}^0)\mathbf{D}][\mathbf{D}\mathbf{J}_{MN}''(\bar{\boldsymbol{\theta}})\mathbf{D}]^{-1}.
\end{equation}
Note that, $\lim\limits_{\textnormal{min}\{M, N\} \rightarrow \infty}[\mathbf{D}\mathbf{J}_{MN}''(\bar{\boldsymbol{\theta}})\mathbf{D}] = \lim\limits_{\textnormal{min}\{M, N\} \rightarrow \infty}[\mathbf{D}\mathbf{J}_{MN}''(\boldsymbol{\theta}^0)\mathbf{D}].$
Now comparing the corresponding elements of the second derivative matrices $\mathbf{D}\mathbf{J}_{MN}''(\boldsymbol{\theta}^0)\mathbf{D}$ and $\mathbf{D}\mathbf{Q}_{MN}''(\boldsymbol{\theta}^0)\mathbf{D}$ and using Lemma~\ref{lemma:1}, parts \textit{(c) - (f)}, on each of the derivatives as done for the first derivative vectors above, we obtain the following relation: 
\begin{equation}\label{eq:J_part_2}
\lim_{\textnormal{min}\{M, N\} \rightarrow \infty} \mathbf{D}\mathbf{J}_{MN}''(\boldsymbol{\theta}^0)\mathbf{D} = - \lim_{\textnormal{min}\{M, N\} \rightarrow \infty}\mathbf{D}\mathbf{Q}_{MN}''(\boldsymbol{\theta}^0)\mathbf{D} = -2\boldsymbol{\Sigma}
\end{equation}
where, $$\boldsymbol{\Sigma} = \left[\begin{array}{cccccc}1/2 & 0 & \frac{B^0}{4} & \frac{B^0}{6} & \frac{B^0}{4} & \frac{B^0}{6} \\0 & 1/2 & -\frac{A^0}{4} & -\frac{A^0}{6} & -\frac{A^0}{4} & -\frac{A^0}{6} \\\frac{B^0}{4} & -\frac{A^0}{4} & \frac{{A^0}^2 + {B^0}^2}{6} & \frac{{A^0}^2 + {B^0}^2}{8} & \frac{{A^0}^2 + {B^0}^2}{8} & \frac{{A^0}^2 + {B^0}^2}{12} \\ \frac{B^0}{6} & -\frac{A^0}{6} & \frac{{A^0}^2 + {B^0}^2}{8} & \frac{{A^0}^2 + {B^0}^2}{10} & \frac{{A^0}^2 + {B^0}^2}{12} & \frac{{A^0}^2 + {B^0}^2}{18} \\
\frac{B^0}{4} & -\frac{A^0}{4} & \frac{{A^0}^2 + {B^0}^2}{8} & \frac{{A^0}^2 + {B^0}^2}{12} & \frac{{A^0}^2 + {B^0}^2}{6} & \frac{{A^0}^2 + {B^0}^2}{8} \\ \frac{B^0}{6} & -\frac{A^0}{6} & \frac{{A^0}^2 + {B^0}^2}{12} & \frac{{A^0}^2 + {B^0}^2}{18} & \frac{{A^0}^2 + {B^0}^2}{8} & \frac{{A^0}^2 + {B^0}^2}{10}
\end{array}\right].$$
\justify
Using~(\ref{eq:J_part_1}) and~(\ref{eq:J_part_2}), in equation~(\ref{eq:J_taylor}), we have:
\begin{equation*}
\lim_{\textnormal{min}\{M, N\} \rightarrow \infty}(\tilde{\boldsymbol{\theta}} - \boldsymbol{\theta}^0)\mathbf{D}^{-1}  =  \lim_{\textnormal{min}\{M, N\} \rightarrow \infty}(\hat{\boldsymbol{\theta}} - \boldsymbol{\theta}^0 )\mathbf{D}^{-1}.
\end{equation*}

\justify
\vspace{-5mm}
It follows that LSE, $\hat{\boldsymbol{\theta}}$ and ALSE, $\tilde{\boldsymbol{\theta}}$ of $\boldsymbol{\theta}^0$ of model~(\ref{eq:model_1}) are asymptotically equivalent in distribution.\\ \qed
\vspace{-6mm}
\section*{Appendix C}\label{appendix:C}
\vspace{-5mm}
\noindent The following lemmas are required to prove the Theorem~\ref{theorem:3}:
\begin{lemma}\label{lemma:8}
Consider the set $S_{c}^{\boldsymbol{\vartheta}_1^0}$ defined in Lemma~\ref{lemma:6}. If for some c $>$0, $$\limsup \sup_{S_c^{\boldsymbol{\vartheta}_1^0}} \frac{1}{MN} (I_1(\boldsymbol{\vartheta}) - I_1(\boldsymbol{\vartheta}_1^0)) < 0 \ \ a.s., $$  then $\tilde{\boldsymbol{\vartheta}_1}$ is a strongly consistent estimator of  $\boldsymbol{\vartheta}_1^0$. Here $I_1(\boldsymbol{\vartheta})$ is as defined in~(\ref{eq:per_1}).
\end{lemma}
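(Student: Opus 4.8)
The plan is to mirror the contradiction argument used for Lemma~\ref{lemma:6} in Appendix~A, since $\tilde{\boldsymbol{\vartheta}_1}$ is by construction the global maximiser of $I_1(\boldsymbol{\vartheta})$ over the parameter space, and the roles played there by $I$, $\boldsymbol{\vartheta}^0$ and $\tilde{\boldsymbol{\vartheta}}$ are played here by $I_1$, $\boldsymbol{\vartheta}_1^0$ and $\tilde{\boldsymbol{\vartheta}_1}$. First I would make the dependence on the sample size explicit by writing $\tilde{\boldsymbol{\vartheta}_1}$ as $\tilde{\boldsymbol{\vartheta}}_{1,MN}$, and assume, towards a contradiction, that the displayed $\limsup$ hypothesis holds but $\tilde{\boldsymbol{\vartheta}}_{1,MN} \nrightarrow \boldsymbol{\vartheta}_1^0$ almost surely as $\textnormal{min}\{M,N\} \to \infty$.

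The failure of almost sure convergence means that on an event of positive probability there exist some $c > 0$ and a subsequence $\{M_k, N_k\}$ of $\{M, N\}$ along which $\tilde{\boldsymbol{\vartheta}}_{1,M_kN_k} \in S_c^{\boldsymbol{\vartheta}_1^0}$ for every $k$. I would then exploit the defining property of the estimator: since $\tilde{\boldsymbol{\vartheta}}_{1,M_kN_k}$ maximises $I_1$ over the whole parameter space, in particular $I_{1,M_kN_k}(\tilde{\boldsymbol{\vartheta}}_{1,M_kN_k}) \geq I_{1,M_kN_k}(\boldsymbol{\vartheta}_1^0)$. Because the maximiser itself lies in $S_c^{\boldsymbol{\vartheta}_1^0}$, the supremum of the normalised difference over $S_c^{\boldsymbol{\vartheta}_1^0}$ dominates its value at $\tilde{\boldsymbol{\vartheta}}_{1,M_kN_k}$, giving
\begin{equation*}
\sup_{\boldsymbol{\vartheta} \in S_c^{\boldsymbol{\vartheta}_1^0}} \frac{1}{M_kN_k}\big(I_{1,M_kN_k}(\boldsymbol{\vartheta}) - I_{1,M_kN_k}(\boldsymbol{\vartheta}_1^0)\big) \geq \frac{1}{M_kN_k}\big(I_{1,M_kN_k}(\tilde{\boldsymbol{\vartheta}}_{1,M_kN_k}) - I_{1,M_kN_k}(\boldsymbol{\vartheta}_1^0)\big) \geq 0.
\end{equation*}
Taking $\limsup$ along this subsequence then forces $\limsup \sup_{S_c^{\boldsymbol{\vartheta}_1^0}} \frac{1}{MN}(I_1(\boldsymbol{\vartheta}) - I_1(\boldsymbol{\vartheta}_1^0)) \geq 0$ on an event of positive probability, directly contradicting the hypothesis that this $\limsup$ is strictly negative almost surely. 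Hence $\tilde{\boldsymbol{\vartheta}}_{1,MN} \xrightarrow{a.s.} \boldsymbol{\vartheta}_1^0$, that is, $\tilde{\alpha_1}, \tilde{\beta_1}, \tilde{\gamma_1}$ and $\tilde{\delta_1}$ are strongly consistent for $\alpha_1^0, \beta_1^0, \gamma_1^0$ and $\delta_1^0$.

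There is no genuinely hard step in this lemma: it is a purely analytic bridge that converts a uniform separation statement about the criterion function into consistency of its argmax, and all the substantive work is deferred to verifying the $\limsup$ hypothesis itself (the analogue of the computation carried out in the proof of Theorem~\ref{theorem:1}, which is where Assumptions~3 and~4 and Lemma~\ref{lemma:1} will enter). The only point needing mild care is the passage from ``$\tilde{\boldsymbol{\vartheta}}_{1,MN}$ does not converge'' to the existence of a single $c$ and a subsequence remaining in $S_c^{\boldsymbol{\vartheta}_1^0}$; I would handle this by observing that non-convergence to $\boldsymbol{\vartheta}_1^0$ yields a $c>0$ with $|\tilde{\boldsymbol{\vartheta}}_{1,MN} - \boldsymbol{\vartheta}_1^0| \geq 4c$ infinitely often, which is precisely membership in $S_c^{\boldsymbol{\vartheta}_1^0}$ along a subsequence.
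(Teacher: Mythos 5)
Your proposal is correct and is essentially identical to the paper's own treatment: the paper disposes of this lemma by remarking that it follows the same lines as the consistency criterion lemma of Appendix~A, whose proof is exactly your contradiction argument (a subsequence of maximisers trapped in $S_c^{\boldsymbol{\vartheta}_1^0}$, the argmax property forcing the normalised supremum of $I_1(\boldsymbol{\vartheta}) - I_1(\boldsymbol{\vartheta}_1^0)$ over that set to be nonnegative, contradicting the strictly negative $\limsup$ hypothesis). Your extra care in extracting the fixed $c>0$ and the subsequence from the failure of almost sure convergence is a minor elaboration of the same route, not a departure from it.
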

\begin{proof}
This proof can be obtained on the same lines as Lemma~\ref{lemma:6}.\\
\end{proof}
\begin{lemma}\label{lemma:9}If assumptions 1 and 3 are true, then the following holds true:
$$(\tilde{\boldsymbol{\vartheta}_1} - \boldsymbol{\vartheta}_1^0)(\sqrt{MN}\mathbf{D}_1)^{-1} \xrightarrow{a.s.} 0.$$
\end{lemma}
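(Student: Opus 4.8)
The plan is to mirror the argument used for Lemma~\ref{lemma:7} in the one-component setting, the only genuinely new ingredient being the treatment of the cross terms that arise because the data $y(m,n)$ is now generated by the $p$-component model~(\ref{eq:model_mul_comp}). First I would let $I_1'(\boldsymbol{\vartheta})$ and $I_1''(\boldsymbol{\vartheta})$ denote the $1\times 4$ gradient and $4\times 4$ Hessian of the periodogram-type function $I_1$ defined in~(\ref{eq:per_1}), and apply a multivariate Taylor expansion of $I_1'$ about $\boldsymbol{\vartheta}_1^0$. Since $\tilde{\boldsymbol{\vartheta}_1}$ maximizes $I_1$, its gradient vanishes there, so exactly as in~(\ref{eq:non_lin}) one obtains
\begin{equation*}
(\tilde{\boldsymbol{\vartheta}_1} - \boldsymbol{\vartheta}_1^0)(\sqrt{MN}\mathbf{D}_1)^{-1} = -\frac{1}{\sqrt{MN}}I_1'(\boldsymbol{\vartheta}_1^0)\mathbf{D}_1\,[\mathbf{D}_1 I_1''(\bar{\boldsymbol{\vartheta}})\mathbf{D}_1]^{-1},
\end{equation*}
where $\bar{\boldsymbol{\vartheta}}$ lies on the segment joining $\tilde{\boldsymbol{\vartheta}_1}$ and $\boldsymbol{\vartheta}_1^0$ and $\mathbf{D}_1$ is as in Lemma~\ref{lemma:7}. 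It then suffices to show that the scaled gradient $\frac{1}{\sqrt{MN}}I_1'(\boldsymbol{\vartheta}_1^0)\mathbf{D}_1$ tends to $0$ almost surely and that the scaled Hessian $\mathbf{D}_1 I_1''(\boldsymbol{\vartheta}_1^0)\mathbf{D}_1$ converges to a fixed negative definite matrix, the continuity of $I_1''$ together with the strong consistency of $\tilde{\boldsymbol{\vartheta}_1}$ furnished by Lemma~\ref{lemma:8} allowing $\bar{\boldsymbol{\vartheta}}$ to be replaced by $\boldsymbol{\vartheta}_1^0$ in the limit.

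For the gradient I would substitute the full expression~(\ref{eq:model_mul_comp}) for $y(m,n)$ into each component of $\frac{1}{\sqrt{MN}}I_1'(\boldsymbol{\vartheta}_1^0)\mathbf{D}_1$. The resulting sums split into three kinds of terms: those depending only on the first signal component, those involving the error $X(m,n)$, and the cross terms pairing the first component with a component indexed by $k\geq 2$. The first kind is handled exactly as in the proof of Lemma~\ref{lemma:7} using Lemma~\ref{lemma:1}, parts \textit{(c)}--\textit{(f)}, and the error terms vanish almost surely by Lemma~\ref{lemma:1}, part \textit{(g)}.

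The main obstacle, and the only part not already covered by the one-component proof, is the cross terms. Here Assumption~3 guarantees that $(\alpha_1^0,\beta_1^0,\gamma_1^0,\delta_1^0) \neq (\alpha_k^0,\beta_k^0,\gamma_k^0,\delta_k^0)$ for every $k\geq 2$, so each such term is a normalized double sum of a product of two trigonometric functions with distinct frequency--frequency-rate vectors multiplied by a polynomial weight $m^s n^t$; Lemma~\ref{lemma:5} then shows that every such term converges to $0$, whence $\frac{1}{\sqrt{MN}}I_1'(\boldsymbol{\vartheta}_1^0) \to 0$ almost surely. Finally, applying the same three-way decomposition to each entry of $\mathbf{D}_1 I_1''(\boldsymbol{\vartheta}_1^0)\mathbf{D}_1$ and invoking Lemma~\ref{lemma:1}, parts \textit{(c)}--\textit{(g)}, for the pure and error terms and Lemma~\ref{lemma:5} for the cross terms shows the scaled Hessian converges almost surely to $-\mathbf{S}$, with $\mathbf{S}$ the positive definite matrix appearing in the proof of Lemma~\ref{lemma:7} but now written in terms of $A_1^0$ and $B_1^0$. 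Substituting these two limits into the displayed identity yields $(\tilde{\boldsymbol{\vartheta}_1} - \boldsymbol{\vartheta}_1^0)(\sqrt{MN}\mathbf{D}_1)^{-1} \xrightarrow{a.s.} 0$, as claimed.
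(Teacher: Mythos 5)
Your proposal is correct and follows essentially the same route as the paper, whose entire proof of this lemma is the remark that one repeats the argument of Lemma~\ref{lemma:7} with $I(\boldsymbol{\vartheta})$ replaced by $I_1(\boldsymbol{\vartheta})$. Your explicit handling of the cross terms via Lemma~\ref{lemma:5} under Assumption~3 correctly supplies the one detail the paper leaves implicit, and it is the same device the paper itself deploys in the proof of Theorem~\ref{theorem:6}.
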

\begin{proof}
This proof can be obtained along the same lines as Lemma~\ref{lemma:7}  by replacing $I(\boldsymbol{\vartheta})$ by $I_1(\boldsymbol{\vartheta})$.\\
\end{proof}

\justify
\vspace{-6mm}
\emph{Proof of Theorem 3:}
First we prove the consistency of the estimates of the non-linear parameters of the first component of the model. For notational simplicity, we assume $p = 2$. 
\justify
\vspace{-5mm}
We consider the difference:  $\frac{1}{MN}\bigg(I_1(\boldsymbol{\vartheta}) - I_1(\boldsymbol{\vartheta}_1^0)\bigg) = \frac{1}{MN}\bigg(I_1(\alpha,\beta, \gamma, \delta) - I_1(\alpha_1^0, \beta_1^0, \gamma_1^0, \delta_1^0)\bigg)$
\vspace{-5mm}
\begin{flalign*}
= \frac{1}{(MN)^2}\Bigg[\Bigg|\sum_{n=1}^{N}\sum_{m=1}^{M}y(m, n)e^{-i(\alpha m + \beta m^2 + \gamma n + \delta n^2)}\Bigg| - \Bigg|\sum_{n=1}^{N}\sum_{m=1}^{M}y(m, n)e^{-i(\alpha_1^0 m + \beta_1^0 m^2 + \gamma_1^0 n + \delta_1^0 n^2)}\Bigg|\Bigg.] &
\end{flalign*}

\justify
\vspace{-5mm}
The set  $S_c^{\boldsymbol{\vartheta}_1^0}$, can be split into two parts and written as ${S_c^{\boldsymbol{\vartheta}_1^0}}^1 \cup {S_c^{\boldsymbol{\vartheta}_1^0}}^2$, where \\ \\
${S_c^{\boldsymbol{\vartheta}_1^0}}^1 = \{(\boldsymbol{\vartheta}: |\boldsymbol{\vartheta} - \boldsymbol{\vartheta}_1^0| > 4c; \ \boldsymbol{\vartheta} = \boldsymbol{\vartheta}_2^0\}$ and ${S_c^{\boldsymbol{\vartheta}_1^0}}^2 = \{(\boldsymbol{\vartheta}: |\boldsymbol{\vartheta} - \boldsymbol{\vartheta}_2^0| > 4c; \ \boldsymbol{\vartheta} \neq \boldsymbol{\vartheta}_2^0\}$.
\begin{flalign*}
&\limsup_{\textmd{min}\{M, N\} \rightarrow \infty} \sup_{S_c^1}\frac{1}{MN}\bigg(I_1(\alpha,\beta, \gamma, \delta) - I_1(\alpha_1^0,\beta_1^0, \gamma_1^0, \delta_1^0)\bigg)&\\
& = \limsup_{\textmd{min}\{M, N\} \rightarrow \infty}\sup_{S_c^1}\frac{1}{(MN)^2} \Bigg[\Bigg\{\sum_{n=1}^{N}\sum_{m=1}^{M}\bigg(\sum_{k=1}^2 A_k^0 \cos(\alpha_k^0 m + \beta_k^0 m^2 + \gamma_k^0 n + \delta_k^0 n^2) + & \\ 
& \quad B_k^0 \sin(\alpha_k^0 m + \beta_k^0 m^2 +  + \gamma_k^0 n + \delta_k^0 n^2) + X(m ,n)\bigg)\cos(\alpha m + \beta m^2 + \gamma n + \delta n^2)\Bigg\}^2\Bigg] + &\\
& \quad \limsup_{\textmd{min}\{M, N\} \rightarrow \infty}\sup_{S_c^1} \frac{1}{(MN)^2} \Bigg[\Bigg\{\sum_{n=1}^{N}\sum_{m=1}^{M}\bigg(\sum_{k=1}^2 A_k^0 \cos(\alpha_k^0 m + \beta_k^0 m^2 + \gamma_k^0 n + \delta_k^0 n^2) + & \\
& \quad B_k^0 \sin(\alpha_k^0 m + \beta_k^0 m^2 +  + \gamma_k^0 n + \delta_k^0 n^2) + X(m, n)\bigg) \sin(\alpha m + \beta m^2 + \gamma n + \delta n^2)\Bigg\}^2\Bigg] - &\\
& \quad \limsup_{\textmd{min}\{M, N\} \rightarrow \infty}\sup_{S_c^1} \frac{1}{(MN)^2} \Bigg[\Bigg\{\sum_{n=1}^{N}\sum_{m=1}^{M}\bigg(\sum_{k=1}^2 A_k^0 \cos(\alpha_k^0 m + \beta_k^0 m^2 + \gamma_k^0 n + \delta_k^0 n^2) + & \\
& \quad B_k^0 \sin(\alpha_k^0 m + \beta_k^0 m^2 + \gamma_k^0 n + \delta_k^0 n^2) + X(m, n)\bigg) \cos(\alpha_1^0 m + \beta_1^0 m^2 +  + \gamma_1^0 n + \delta_1^0 n^2)\Bigg\}^2\Bigg] - &\\
& \quad \limsup_{\textmd{min}\{M, N\}  \rightarrow \infty}\sup_{S_c^1}\frac{1}{(MN)^2} \Bigg[\Bigg\{\sum_{n=1}^{N}\sum_{m=1}^{M}\bigg(\sum_{k=1}^2 A_k^0 \cos(\alpha_k^0 m + \beta_k^0 m^2 +  + \gamma_k^0 n + \delta_k^0 n^2) + & \\
& \quad B_k^0 \sin(\alpha_k^0 m + \beta_k^0 m^2 +  + \gamma_k^0 n + \delta_k^0 n^2) + X(m, n)\bigg) \sin(\alpha_1^0 m + \beta_1^0 m^2 + \gamma_1^0 n + \delta_1^0 n^2)\Bigg\}^2\Bigg]&\\
& = \frac{1}{4}({A_2^0}^2 + {B_2^0}^2 -{A_1^0}^2 - {B_1^0}^2) < 0 \ \ a.s.  \ (\textmd{Assumption} \ 4.)&.\\ 
&\textmd{Similarly,}& \\
& \limsup_{\textmd{min}\{M, N\} \rightarrow \infty}\sup_{S_c^2}\frac{1}{MN}\bigg(I_1(\alpha, \beta, \gamma, \delta) - I_1(\alpha_1^0,\beta_1^0, \gamma_1^0, \delta_1^0)\bigg) =  \frac{1}{4}(0 + 0 - {A_1^0}^2 - {B_1^0}^2) < 0 \ \ a.s. &
\end{flalign*}
Therefore, using Lemma~\ref{lemma:8}, we get the consistency of the non-linear parameters of the first component. Now to prove the consistency of linear parameter estimators $\tilde{A_1}$ and $\tilde{B_1}$, observe that
\begin{equation*}\begin{split}
\tilde{A_1} & = \frac{2}{MN}\sum_{n=1}^{N}\sum_{m = 1}^{M} y(m, n)\cos(\tilde{\alpha_1} m + \tilde{\beta_1} m^2 + \tilde{\gamma_1} n + \tilde{\delta_1} n^2) \\
& = \frac{2}{MN}\sum_{n=1}^{N}\sum_{m=1}^{M} \bigg(\sum_{k=1}^{p}A_k^0\cos(\alpha_k^0 m + \beta_k^0 m^2 + \gamma_k^0 n + \delta_k^0 n^2) + B_k^0\sin(\alpha_k^0 m + \beta_k^0 m^2 + \gamma_k^0 n + \delta_k^0 n^2)\\
& \quad + X(m, n)\bigg)\cos(\tilde{\alpha_1} m + \tilde{\beta_1} m^2 + \tilde{\gamma_1} n + \tilde{\delta_1} n^2).
\end{split}\end{equation*}
\setlength{\belowdisplayskip}{0pt} \setlength{\belowdisplayshortskip}{0pt}
\setlength{\abovedisplayskip}{0pt} \setlength{\abovedisplayshortskip}{0pt}
\justify
\vspace{-10mm}
Using Lemma~\ref{lemma:1}, part \textit{(g)} $\frac{2}{MN}\sum\limits_{n=1}^{N}\sum\limits_{m=1}^M X(m, n)\cos(\tilde{\alpha_1} m + \tilde{\beta_1} m^2 + \tilde{\gamma_1} n + \tilde{\delta_1} n^2 ) \rightarrow 0$. Now expanding $\cos(\tilde{\alpha_1} m + \tilde{\beta_1} m^2 + \tilde{\gamma_1} n + \tilde{\delta_1} n^2)$ by multivariate Taylor series around $(\alpha^0 , \beta^0, \gamma^0, \delta^0)$ and using Lemmas~\ref{lemma:9} and~\ref{lemma:1},\textit{(c)-(f)}, we get the desired result. \\ \qed
\justify
\vspace{-5mm}
\emph{Proof of Theorem 4:}
From Theorem~\ref{theorem:3} and Lemmas~\ref{lemma:7} and~\ref{lemma:8}, we have the following: \\
\begin{equation*}\begin{split}
\tilde{A_1} & = A_1^0 + o(1) \quad  \hspace{5mm}\tilde{B_1} = B_1^0 + o(1), \\
\tilde{\alpha_1} & = \alpha_1^0 + o(M^{-1}) \quad \tilde{\beta_1} = \beta_1^0 + o(M^{-2}).\\
\tilde{\gamma_1} & = \gamma_1^0 + o(N^{-1}) \quad \hspace{1mm} \tilde{\delta_1} = \delta_1^0 + o(N^{-2}).\\
\end{split}\end{equation*}
Thus,\\
\begin{equation}\begin{split}\label{eq:relation_est_true}
&\tilde{A_1} \cos(\tilde{\alpha_1} m + \tilde{\beta_1} m^2 +\tilde{\gamma_1} n + \tilde{\delta_1} n^2) + \tilde{B_1} \sin(\tilde{\alpha_1} m + \tilde{\beta_1} m^2 + \tilde{\gamma_1} n + \tilde{\delta_1} n^2) \\
& \quad \quad \quad \quad  = A_1^0 \cos(\alpha_1^0 m + \beta_1^0 m^2 + \gamma_1^0 n + \delta_1^0 n^2) + B_1^0  \sin(\alpha_1^0 m + \beta_1^0 m^2 + \gamma_1^0 n + \delta_1^0 n^2) + o(1).\\
\end{split}\end{equation}
\justify
\vspace{-3mm}
Consider the set  $S_c^{\boldsymbol{\vartheta}_2^0} $, that can be split into $p$ sets and written as ${S_c^{\boldsymbol{\vartheta}_2^0}}^1 \cup {S_c^{\boldsymbol{\vartheta}_2^0}}^2 \cup \cdots \cup {S_c^{\boldsymbol{\vartheta}_2^0}}^p$, where 
\begin{flalign*}
&{S_c^{\boldsymbol{\vartheta}_2^0}}^1 = \{\boldsymbol{\vartheta}: |\boldsymbol{\vartheta} - \boldsymbol{\vartheta}_2^0| > 4c; \ \boldsymbol{\vartheta} = \boldsymbol{\vartheta}_1^0\},& \\
&{S_c^{\boldsymbol{\vartheta}_2^0}}^2 = \{\boldsymbol{\vartheta}: |\boldsymbol{\vartheta} - \boldsymbol{\vartheta}_2^0| > 4c; \ \boldsymbol{\vartheta} = \boldsymbol{\vartheta}_3^0\},& \\
&\vdots&\\
&{S_c^{\boldsymbol{\vartheta}_2^0}}^{p-1} = \{\boldsymbol{\vartheta}: |\boldsymbol{\vartheta} - \boldsymbol{\vartheta}_2^0| > 4c; \ \boldsymbol{\vartheta} = \boldsymbol{\vartheta}_p^0\}, \textnormal{ and} & \\
&{S_c^{\boldsymbol{\vartheta}_2^0}}^p = \{\boldsymbol{\vartheta}: |\boldsymbol{\vartheta} - \boldsymbol{\vartheta}_2^0| > 4c; \ \boldsymbol{\vartheta} \neq \boldsymbol{\vartheta}_k^0,\textmd{ for any }k = 1, \cdots, p\}.&
\end{flalign*}
\justify
\vspace{-3mm}
Now we consider the following difference: 
\begin{equation}\begin{split}\label{eq:per_2}
\frac{1}{MN}\bigg(I_2(\alpha, \beta, \gamma, \delta) - I_2(\alpha_2^0, \beta_2^0, \gamma_2^0, \delta_2^0)\bigg) & = \frac{1}{(MN)^2}\Bigg|\sum_{n=1}^{N}\sum_{m=1}^{M}y^1(m, n)e^{-i(\alpha m + \beta m^2 + \gamma n + \delta n^2)}\Bigg|^2 - \\
& \quad \frac{1}{(MN)^2} \Bigg|\sum_{n=1}^{N}\sum_{m=1}^{M}y^1(m, n)e^{-i(\alpha_2^0 m + \beta_2^0 m^2 + \gamma_2^0 n + \delta_2^0 n^2)}\Bigg|^2. \\
\end{split}\end{equation}
Here, $y^1(m, n) = y(m, n) - \tilde{A_1} \cos(\tilde{\alpha_1} m + \tilde{\beta_1} m^2 + \tilde{\gamma_1} n + \tilde{\delta_1} n^2) + \tilde{B_1} \sin(\tilde{\alpha_1} m + \tilde{\beta_1} m^2 + \tilde{\gamma_1} n + \tilde{\delta_1} n^2)$, that is the new data obtained by removing the effect of the first component from the observed data $y(m, n)$. Using~(\ref{eq:relation_est_true}), we have $$y^1(m, n) = o(1) + \sum_{k=2}^p (A_k^0 \cos(\alpha_k^0 m + \beta_k^0 m^2 + \gamma_k^0 n + \delta_k^0 n^2) + B_k^0 \sin(\alpha_k^0 m + \beta_k^0 m^2 + \gamma_k^0 n + \delta_k^0 n^2)) + X(m, n).$$\\
Substituting this in~(\ref{eq:per_2}), it can be easily seen that:
\begin{equation*}
\limsup_{\textmd{min}\{M, N\} \rightarrow \infty}\sup_{S_c^k}\frac{1}{MN}\bigg(I_2(\alpha, \beta, \gamma, \delta) - I_2(\alpha_2^0, \beta_2^0, \gamma_2^0, \delta_2^0)\bigg) < 0 \ \  a.s.  \ \forall \ k = 1, \cdots, p.\\
\end{equation*}
Combining, we have $\limsup\limits_{\textmd{min}\{M, N\} \rightarrow \infty}\sup\limits_{S_c}\frac{1}{MN}\bigg(I_2(\alpha, \beta, \gamma, \delta) - I_2(\alpha_2^0, \beta_2^0, \gamma_2^0, \delta_2^0) \bigg)$ $<$ 0 $a.s.$ \\
Therefore, $\tilde{\alpha_2}$ $\xrightarrow{a.s.}$ $\alpha_2^0$, $\tilde{\beta_2}$ $\xrightarrow{a.s.}$ $\beta_2^0$, $\tilde{\gamma_2}$ $\xrightarrow{a.s.}$ $\gamma_2^0$ and $\tilde{\delta_2}$ $\xrightarrow{a.s.}$ $\delta_2^0$ by Lemma~\ref{lemma:8}. Following the same argument as in Theorem~\ref{theorem:3}, we can prove the consistency of linear parameter estimators $\tilde{A_2}$ and $\tilde{B_2}$.
\justify
\vspace{-5mm}
Proceeding in a similar way, it can be shown that $\tilde{\boldsymbol{\theta}_k} \xrightarrow{a.s.} \boldsymbol{\theta}_k^0$ for $3 \leqslant k \leqslant p$.\\ \qed

\justify
\vspace{-5mm}
\emph{Proof of Theorem 6:} The ALSEs of the linear parameters $A$ and $B$ are given by:
\begin{equation*}\begin{split}
\tilde{A} & = \frac{2}{MN} \sum_{n=1}^{N}\sum_{m=1}^{M} y(m, n) \cos(\tilde{\alpha} m + \tilde{\beta} m^2 + \tilde{\gamma} n + \tilde{\delta} n^2), \textmd{ and } \\
\tilde{B} & = \frac{2}{MN} \sum_{n=1}^{N}\sum_{m=1}^{M} y(m, n) \sin(\tilde{\alpha} m + \tilde{\beta} m^2 +  \tilde{\gamma} n + \tilde{\delta} n^2).
\end{split}\end{equation*}
For $k = p+1$,
\begin{equation*} 
 \tilde{A}_{p+1} = \frac{2}{n} \sum\limits_{n=1}^{N}\sum\limits_{m=1}^{M} y^{p+1}(m, n) \cos(\tilde{\alpha}_{p+1} m + \tilde{\beta}_{p+1}m^2 + \tilde{\gamma}_{p+1} n + \tilde{\delta}_{p+1} n^2)
\end{equation*}
\justify
\vspace{-5mm}
where $\tilde{\alpha}_{p+1}$, $\tilde{\beta}_{p+1}$, $\tilde{\gamma}_{p+1}$ and $\tilde{\delta}_{p+1}$ are obtained by maximising $I_{p+1}(\alpha,\beta,\gamma,\delta)$, and \\
$y^{p+1}(m, n) = y(m, n) - \bigg(\sum\limits_{k=1}^{p} \tilde{A}_k \cos(\tilde{\alpha}_{k} m + \tilde{\beta}_{k} m^2 + \tilde{\gamma}_{k} n + \tilde{\delta}_{k} n^2) + \tilde{B}_k \cos(\tilde{\alpha}_{k} m + \tilde{\beta}_{k}m^2  + \tilde{\gamma}_{k} n + \tilde{\delta}_{k} n^2)\bigg)$. Using~(\ref{eq:relation_est_true}), we have:
\begin{equation*}\begin{split}
\tilde{A}_{p+1} & = \frac{2}{MN}\sum_{n=1}^{N}\sum_{m=1}^{M} X(m, n) \cos(\tilde{\alpha}_{p+1} m + \tilde{\beta}_{p+1} m^2 + \tilde{\gamma}_{p+1} n + \tilde{\delta}_{p+1} n^2 ) + o(1),
\end{split}\end{equation*}
\begin{equation*}\begin{split}
\tilde{B}_{p+1} & = \frac{2}{MN}\sum_{n=1}^{N}\sum_{m=1}^{M}X(m, n) \sin(\tilde{\alpha}_{p+1} m + \tilde{\beta}_{p+1} m^2 + \tilde{\gamma}_{p+1} n + \tilde{\delta}_{p+1} n^2 ) + o(1).       
\end{split}\end{equation*}
Therefore, using Lemma~\ref{lemma:1}, part \textit{(g)}, we have $\tilde{A}_{p+1} \xrightarrow{a.s.} 0 $ and $\tilde{B}_{p+1} \xrightarrow{a.s.} 0. $ \\ \qed 
\vspace{-10mm}
\section*{Appendix D}\label{appendix:D}
\textit{Proof of Lemma 3:} We need to compute:
\begin{flalign*}
& \lim_{\textmd{min}\{M,N\} \rightarrow \infty}\frac{1}{M^s N^t\sqrt{MN}}\sum_{n=1}^{N}\sum_{m=1}^M m^s n^t \cos(\omega_1 m + \omega_2 m^2   + \omega_3 n + \omega_4 n^2)\cos(\psi_1 m + \psi_2 m^2   + \psi_3 n + \psi_4 n^2)& \\
& = \lim\limits_{\textmd{min}\{M,N\} \rightarrow \infty}\frac{1}{2M^s N^t\sqrt{MN}}\sum\limits_{n=1}^{N}\sum\limits_{m=1}^M m^s n^t \cos((\omega_1 + \psi_1) m + (\omega_2 + \psi_2) m^2   + (\omega_3 + \psi_3) n + (\omega_4 + \psi_4) n^2) + & \\
& \quad \lim\limits_{\textmd{min}\{M,N\} \rightarrow \infty}\frac{1}{2M^s N^t\sqrt{MN}}\sum\limits_{n=1}^{N}\sum\limits_{m=1}^M m^s n^t \cos((\omega_1 - \psi_1) m + (\omega_2 - \psi_2) m^2   + (\omega_3 - \psi_3) n + (\omega_4 - \psi_4) n^2).&\\
&\textmd{Now we consider the exponential sum:}& \\
&\sum\limits_{n=1}^{N}\sum\limits_{m=1}^M m^s n^t \exp(i(\omega_1 + \psi_1) m + (\omega_2 + \psi_2) m^2   + (\omega_3 + \psi_3) n + (\omega_4 + \psi_4) n^2)&\\
& = \bigg(\sum\limits_{m=1}^M m^s \exp(i(\omega_1 + \psi_1) m + (\omega_2 + \psi_2) m^2)\bigg) \bigg(\sum\limits_{n=1}^{N} n^t \exp(i(\omega_3 + \psi_3) n + (\omega_4 + \psi_4) n^2)\bigg)& \\
& = o(M^s \sqrt{M}). o(N^t \sqrt{N}) = o(M^s N^t \sqrt{MN})\textmd{, using Lemma~\ref{lemma:4}.}&
\vspace{5mm}
\end{flalign*}
This proves \textit{(a)}. The proof of \textit{(b)} and \textit{(c)} follows along the same lines.\\ \qed
\justify
\vspace{-10mm}
\begin{flalign*}
&  \textmd{\emph{Proof of Theorem 6:} Consider the following sum of squares:}& \\
& Q_1(\boldsymbol{\theta}) = \sum_{n = 1}^{N}\sum_{m = 1}^{M}\bigg(y(m, n) - A\cos(\alpha m + \beta m^2 + \gamma n + \delta n^2) - B \sin(\alpha m + \beta m^2 + \gamma n + \delta n^2)\bigg)^2 &\\
& = \sum_{n = 1}^{N}\sum_{m = 1}^{M} y(m, n)^2 + \sum_{n = 1}^{N}\sum_{m = 1}^{M} \bigg(A\cos(\alpha m + \beta m^2 + \gamma n + \delta n^2) + B \sin(\alpha m + \beta m^2 + \gamma n + \delta n^2)\bigg)^2 & \\
& \quad \quad  - 2\sum_{n = 1}^{N}\sum_{m = 1}^{M} y(m, n)\bigg(A\cos(\alpha m + \beta m^2 + \gamma n + \delta n^2) + B \sin(\alpha m + \beta m^2 + \gamma n + \delta n^2)\bigg) &\\
& = C_1 - J_1(\boldsymbol{\theta}) + o(1), \textmd{ using Lemma~\ref{lemma:1}}, parts \textit{(c) - (f)}. &
\end{flalign*}
\begin{flalign*}
\textmd{Here, } C_1 & = \sum_{n = 1}^{N}\sum_{m = 1}^{M} y(m, n)^2,\textmd{ and} & \\
J_1(\boldsymbol{\theta}) & = 2\sum_{n = 1}^{N}\sum_{m = 1}^{M} y(m, n)\bigg(A\cos(\alpha m + \beta m^2 + \gamma n + \delta n^2) + & \\ 
& \hspace{38mm} B \sin(\alpha m + \beta m^2 + \gamma n + \delta n^2)\bigg)- MN\frac{A^2 + B^2}{2}&
\end{flalign*}
\noindent Note that at $\tilde{A}$ and $\tilde{B}$,
$J_1(\tilde{A}, \tilde{B}, \alpha, \beta, \gamma, \delta) = I_1(\alpha , \beta, \gamma, \delta)$.
Hence the estimator of $\boldsymbol{\theta}_1^0$ which maximizes $J_1(\boldsymbol{\theta})$ is equivalent to $\tilde{\boldsymbol{\theta}_1}$, the ALSE of $\boldsymbol{\theta}_1^0$. Thus, $\mathbf{J}_1'(\tilde{\boldsymbol{\theta}_1}) = 0$. Now by Taylor series expansion:
\begin{equation}\label{eq:J_1_taylor}
(\tilde{\boldsymbol{\theta}_1} - \boldsymbol{\theta}_1^0)\mathbf{D}^{-1} = -[\mathbf{J}_1'(\boldsymbol{\theta}_1^0)\mathbf{D}][\mathbf{D}\mathbf{J}_1''(\bar{\boldsymbol{\theta}_1})\mathbf{D}]^{-1}
\end{equation}
\noindent Now consider the first derivative vector, 
\begin{equation*}
\mathbf{J}_1'(\boldsymbol{\theta}_1^0) = \begin{pmatrix}
\frac{\partial J_1(\boldsymbol{\theta}_1^0)}{\partial A} & \frac{\partial J_1(\boldsymbol{\theta}_1^0)}{\partial B} & \frac{\partial J_1(\boldsymbol{\theta}_1^0)}{\partial \alpha} & \frac{\partial J_1(\boldsymbol{\theta}_1^0)}{\partial \beta} & \frac{\partial J_1(\boldsymbol{\theta}_1^0)}{\partial \gamma} & \frac{\partial J_1(\boldsymbol{\theta}_1^0)}{\partial \delta} 
\end{pmatrix}
\end{equation*}
Computing the elements of this vector, we get:\\
\begin{flalign*}
& \frac{\partial J_1(\boldsymbol{\theta}_1^0)}{\partial A} = 2 \sum_{n=1}^{N}\sum_{m=1}^{M} y(m, n) \cos(\alpha_1^0 m + \beta_1^0 m^2 + \gamma_1^0 n + \delta_1^0 n^2) - MN A_1^0 &\\
& \frac{\partial J_1(\boldsymbol{\theta}_1^0)}{\partial B}  = 2 \sum_{n=1}^{N}\sum_{m=1}^{M} y(m, n) \sin(\alpha_1^0 m + \beta_1^0 m^2 + \gamma_1^0 n + \delta_1^0 n^2) - MN B_1^0 & \\
& \frac{\partial J_1(\boldsymbol{\theta}_1^0)}{\partial \alpha}  = 2 \sum_{n=1}^{N}\sum_{m=1}^{M} m\ y(m, n) \xi(m, n) \quad  \quad \frac{\partial J_1(\boldsymbol{\theta}_1^0)}{\partial \beta}  = 2 \sum_{n=1}^{N}\sum_{m=1}^{M} m^2 \ y(m, n) \xi(m, n) & \\
& \frac{\partial J_1(\boldsymbol{\theta}_1^0)}{\partial \gamma}  = 2 \sum_{n=1}^{N}\sum_{m=1}^{M} n\ y(m, n) \xi(m, n) \quad \quad \frac{\partial J_1(\boldsymbol{\theta}_1^0)}{\partial \delta}  = 2 \sum_{n=1}^{N}\sum_{m=1}^{M} n^2\ y(m, n)\xi(m, n) &\\
& \textmd{where, }\xi(m, n) = \bigg(-A_1^0 \sin(\alpha_1^0 m + \beta_1^0 m^2 + \gamma_1^0 n + \delta_1^0 n^2) + B_1^0 \cos(\alpha_1^0 m + \beta_1^0 m^2 + \gamma_1^0 n + \delta_1^0 n^2)\bigg)&
\end{flalign*}
\noindent Using the results in Lemma~\ref{lemma:5}, we see that $\mathbf{J}_1'(\boldsymbol{\theta}_1^0)\mathbf{D}$  is asymptotically equivalent to:\\
\begin{flalign*}
\begin{pmatrix}
&\frac{2}{\sqrt{MN}} \sum\limits_{n=1}^N\sum\limits_{m=1}^M X(m, n)\cos(\alpha_1^0 m + \beta_1^0 m^2 + \gamma_1^0 n + \delta_1^0 n^2) & \\
& \frac{2}{\sqrt{MN}} \sum\limits_{n=1}^N\sum\limits_{m=1}^M X(m, n)\sin(\alpha_1^0 m + \beta_1^0 m^2 + \gamma_1^0 n + \delta_1^0 n^2)& \\
&\frac{2}{M^{3/2}N^{1/2}} \sum\limits_{n=1}^N \sum\limits_{m=1}^M m\ X(m, n)\xi(m, n)& \\
&\frac{2}{M^{5/2}N^{1/2}} \sum\limits_{n=1}^N \sum\limits_{m=1}^M m^2\ X(m, n)\xi(m, n)& \\
& \frac{2}{M^{1/2}N^{3/2}} \sum\limits_{n=1}^N \sum\limits_{m=1}^M n\ X(m, n)\xi(m, n) &\\
&\frac{2}{M^{1/2}N^{5/2}} \sum\limits_{n=1}^N \sum\limits_{m=1}^M n^2\ X(m, n)\xi(m, n) &
\end{pmatrix}.
\end{flalign*}
\justify Now using Central Limit Theorem of the stochastic processes, see Fuller \cite{2009}, the $6 \times 1$ vector in the right hand side of the above equation tends to a normal vector with mean 0 and variance-covariance matrix $4c\sigma^2\boldsymbol{\Sigma}_1$. Thus, we have \\
\begin{equation}\label{eq:J_1_part_1}
\mathbf{J}_1'(\boldsymbol{\theta}_1^0)\mathbf{D} \xrightarrow{d} N_6(0,4c\sigma^2\boldsymbol{\Sigma}_1)
\end{equation}
\justify
Here, $$\boldsymbol{\Sigma}_1 = \begin{pmatrix}
\frac{1}{2} & 0 & \frac{B_1^0}{4} & \frac{B_1^0}{6} & \frac{B_1^0}{4} & \frac{B_1^0}{6} \\
0 & \frac{1}{2} & \frac{-A_1^0}{4} & \frac{-A_1^0}{6} & \frac{-A_1^0}{4} & \frac{-A_1^0}{6}\\
\frac{B_1^0}{4} & \frac{-A_1^0}{4} & \frac{{A_1^0}^2 + {B_1^0}^2}{6} & \frac{{A_1^0}^2 + {B_1^0}^2}{8} & \frac{{A_1^0}^2 + {B_1^0}^2}{8} & \frac{{A_1^0}^2 + {B_1^0}^2}{12} \\
\frac{B_1^0}{6} & \frac{-A_1^0}{6} & \frac{{A_1^0}^2 + {B_1^0}^2}{8} & \frac{{A_1^0}^2 + {B_1^0}^2}{10} & \frac{{A_1^0}^2 + {B_1^0}^2}{12} & \frac{{A_1^0}^2 + {B_1^0}^2}{18} \\
\frac{B_1^0}{4} & \frac{-A_1^0}{4} & \frac{{A_1^0}^2 + {B_1^0}^2}{8} & \frac{{A_1^0}^2 + {B_1^0}^2}{12} & \frac{{A_1^0}^2 + {B_1^0}^2}{6} & \frac{{A_1^0}^2 + {B_1^0}^2}{8} \\
\frac{B_1^0}{6} & \frac{-A_1^0}{6} & \frac{{A_1^0}^2 + {B_1^0}^2}{12} & \frac{{A_1^0}^2 + {B_1^0}^2}{18} & \frac{{A_1^0}^2 + {B_1^0}^2}{8} & \frac{{A_1^0}^2 + {B_1^0}^2}{10} \\
\end{pmatrix}.$$\\
Note that, 
\begin{equation*}
\lim\limits_{\textnormal{min}\{M, N\} \rightarrow \infty}[\mathbf{D}\mathbf{J}_1''(\bar{\boldsymbol{\theta}_1})\mathbf{D}] = \lim\limits_{\textnormal{min}\{M, N\} \rightarrow \infty}[\mathbf{D}\mathbf{J}_1''(\boldsymbol{\theta}_1^0)\mathbf{D}].
\end{equation*}
Using Lemma~\ref{lemma:1}, parts \textit{(c) - (g)} and by simple calculations computing each element of the matrix $\mathbf{D}\mathbf{J}_1''(\boldsymbol{\theta}_1^0)\mathbf{D}$, we can show:
\begin{equation}\label{eq:J_1_part_2}
\lim\limits_{\textnormal{min}\{M, N\} \rightarrow \infty}[\mathbf{D}\mathbf{J}_1''(\boldsymbol{\theta}_1^0)\mathbf{D}] = -2\boldsymbol{\Sigma}_1. \\
\vspace{5mm}
\end{equation}
Hence, using equation~(\ref{eq:J_1_part_1}) and~(\ref{eq:J_1_part_2}) in~(\ref{eq:J_1_taylor}), we have:
\vspace{2mm}
\begin{equation*}
(\tilde{\boldsymbol{\theta}_1} - \boldsymbol{\theta}_1^0) \mathbf{D}^{-1} \xrightarrow{d} N(0,c_1\sigma^2\boldsymbol{\Sigma}_1^{-1}).\\
\end{equation*}
Hence the result. \qed
\vspace{-10mm}


\begin{thebibliography}{20}
\bibitem{2014}
Barbarossa, S., Di Lorenzo, P. and Vecchiarelli, P., 2014. \href{https://www.researchgate.net/profile/Paolo_Di_Lorenzo/publication/264561679_Parameter_Estimation_of_2D_Multi-Component_Polynomial_Phase_Signals_An_Application_to_SAR_Imaging_of_Moving_Targets/links/551a936a0cf2f51a6fea70fd.pdf} {Parameter estimation of 2D multi-component polynomial phase signals: An application to SAR imaging of moving targets.} IEEE Transactions on Signal Processing, 62(17), pp.4375-4389.

\bibitem{1992}
Barbieri, M.M. and Barone, P., 1992. \href{http://ieeexplore.ieee.org/document/165661/#full-text-section} {A two-dimensional Prony's method for spectral estimation.} IEEE Transactions on Signal Processing, 40(11), pp.2747-2756.

\bibitem{1993}
Cabrera, S.D. and Bose, N.K., 1993. \href{http://ieeexplore.ieee.org/abstract/document/165661/} {Prony methods for two dimensional complex exponential signals modeling.} Automatic Control Handbook, pp.401-411.

\bibitem{2009}
Fuller, W.A., 2009. Introduction to statistical time series (Vol. 428). John Wiley \& Sons.


\bibitem{1998}
Francos, J.M. and Friedlander, B., 1998. \href{https://www.researchgate.net/publication/227232096_Two-Dimensional_Polynomial_Phase_Signals_Parameter_Estimation_and_Bounds}{ Two-dimensional polynomial phase signals: Parameter estimation and bounds.},Multidimensional Systems and Signal Processing, 9(2), pp.173-205.

\bibitem{1999}
Francos, J.M. and Friedlander, B., 1999. \href{https://www.researchgate.net/publication/3317289_Parameter_estimation_of_2-D_random_amplitude_polynomial-phase_signals} {Parameter estimation of 2-D random amplitude polynomial-phase signals.}, IEEE transactions on Signal Processing, 47(7), pp.1795-1810.

\bibitem{1996}
Friedlander, B. and Francos, J.M., 1996. \href{https://www.researchgate.net/publication/2861980_An_Estimation_Algorithm_for_2-D_Polynomial_Phase_Signals}{ An estimation algorithm for 2-D polynomial phase signals.}, IEEE transactions on Image Processing, 5(6), pp.1084-1087.

\bibitem{1992_2}
Hua, Y., 1992. \href{http://escholarship.org/uc/item/71d8k09w} {Estimating two-dimensional frequencies by matrix enhancement and matrix pencil.} IEEE Transactions on Signal Processing, 40(9), pp.2267-2280.

\bibitem{2003}
Kundu, D. and Nandi, S., 2003. \href{http://onlinelibrary.wiley.com/doi/10.1111/1467-9574.00230/full} {Determination of discrete spectrum in a random field.} Statistica Neerlandica, 57(2), pp.258-284.	


\bibitem{2013_1}
Lahiri, A. 2013. Estimators of Parameters of Chirp Signals and Their Properties. PhD thesis, Indian Institute of Technology, Kanpur.

\bibitem{2013_2}
Lahiri, A., Kundu, D. and Mitra, A., 2013. \href{https://link.springer.com/article/10.1007/s13571-012-0048-x}{ Efficient algorithm for estimating the parameters of two dimensional chirp signal.}  Sankhya B, 75(1), pp.65-89.

\bibitem{2010}
Nandi, S., Prasad, A. and Kundu, D., 2010. \href{https://s3.amazonaws.com/academia.edu.documents/40395855/paper157.pdf?AWSAccessKeyId=AKIAIWOWYYGZ2Y53UL3A&Expires=1504599381&Signature=p8kGTBh
\%2F91ZF6MDRNYGgXLEnsGk\%3D&response-content-disposition=inline\%3B\%20filename\%3DAn_efficient_and_fast_algorithm_for_esti.pdf} {An efficient and fast algorithm for estimating the parameters of two-dimensional sinusoidal signals.} Journal of Statistical Planning and Inference, 140(1), pp.153-168.

\bibitem{2011_2}
Prasad, A., Kundu, D. and Mitra, A., 2011. \href{https://www.researchgate.net/profile/Debasis_Kundu/publication/267803020_Sequential_Estimation_of_Two-dimensional_Sinusoidal_Models/links/54cb6b4c0cf26a838e4cdc50.pdf} {Sequential estimation of two dimensional sinusoidal models.} Journal of Probability and Statistical Sciences 10 (2012), no. 2, 161-178.

\bibitem{2016}
Simeunovi$\acute{c}$, M. and Djurovi$\acute{c}$, I., 2016. \href{http://ieeexplore.ieee.org/stamp/stamp.jsp?tp=&arnumber=7299683} {Parameter estimation of multicomponent 2D polynomial-phase signals using the 2D PHAF-based approach.} IEEE Transactions on Signal Processing, 64(3), pp.771-782.

\bibitem{1961}
Richards, F.S., 1961. \href{https://www.researchgate.net/publication/268494756_A_method_of_Maximum_Likelihood_estimation}{A method of maximum-likelihood estimation.} Journal of the Royal Statistical Society. Series B (Methodological), pp.469-475.

\bibitem{1954}
Vinogradov, I.M., 1954. \href{https://www.researchgate.net/publication/247931639_The_method_of_trigonometrical_sums_in_the_theory_of_numbers}{The method of trigonometrical sums in the theory of numbers}, Interscience, Translated from Russian. Revised and annotated by K. F. Roth and Anne Davenport. Reprint of the 1954 translation. Dover Publications, Inc., Mineola, NY, 2004.

\bibitem{2001}
Zhang, H. and Mandrekar, V., 2001.\href{https://pdfs.semanticscholar.org/3249/081a5bc12e609a7de9b38235d11c58b60c3e.pdf} {Estimation of hidden frequencies for 2D stationary processes.} Journal of Time Series Analysis, 22(5), pp.613-629.

\bibitem{2008}
Zhang, K., Wang, S. and Cao, F., 2008. \href{https://www.researchgate.net/publication/4354214_Product_Cubic_Phase_Function_Algorithm_for_Estimating_the_Instantaneous_Frequency_Rate_of_Multicomponent_Two-Dimensional_Chirp_Signals} {Product cubic phase function algorithm for estimating the instantaneous frequency rate of multicomponent two-dimensional chirp signals.} Congress on Image and Signal Processing, Vol. 5, pp.498-502.

\end{thebibliography}
\end{document}